\theoremstyle{plain}
\newtheorem{thm}{Theorem}[section]
\newtheorem{cor}[thm]{Corollary}
\newtheorem{lem}[thm]{Lemma}
\newtheorem{prop}[thm]{Proposition}
\newtheorem{defn}[thm]{Definition}
\newtheorem{aspt}[thm]{Assumption}
\newtheorem{rem}[thm]{Remark}
\newcommand{\Vhat}{\widehat{V}}
\newcommand{\Lambdahat}{\widehat{\Lambda}}
\newcommand{\Sigmahat}{\widehat{\Sigma}}
\newcommand{\reals}{\mathbb{R}}
\newcommand{\Hhat}{\widehat{H}}
\newcommand{\calX}{\mathcal{X}}
\newcommand{\res}{\mathcal{R}}
\newcommand{\var}{{\rm var}}
\newcommand{\Hell}{D_{H}}
\newcommand{\trace}{\tr}
\newcommand{\Phat}{\widehat{P}}
\newcommand{\Xhat}{\widehat{\mathcal{X}}}
\newcommand{\E}{\mathbb{E}}
\newcommand{\Error}{\mathcal{E}}
\newcommand{\tr}{{\rm tr}}
\newcommand{\Mmu}{m}
\newcommand{\apdf}{\varphi}
\pgfplotsset{%
      every axis/.append style={width=.9\textwidth,
                                axis x line=bottom, axis y line=left,
                                x axis line style={very thick,->}, y axis line style={very thick,->},
                                tick align=inside, tick style={thick},
                                every x tick label/.style={font=\tiny},
                                every y tick label/.style={font=\tiny,text width=3em,align=right},
                                },
      every axis legend/.append style={
                                legend columns=1,
                                font=\tiny,
                                draw=none,
                                fill=white,
                                },
      every axis x label/.style={at={(0.5,-0.12)},below,fill=none,fill opacity=1,text opacity=1,font=\small},
      every axis y label/.style={at={(-0.3,0.5)},fill=none,fill opacity=1,text opacity=1,rotate=90,font=\small},
      }
\def\isarxiv{1}
\begin{document}

\newcommand{\appname}{
\ifx\isarxiv\undefined
Supplementary Material\xspace
\else
Appendix\xspace
\fi
}

\newcommand{\smartref}[1]{
\ifx\isarxiv\undefined
Section \ref{#1} of the Supplementary Material\xspace
\else
Appendix \ref{#1}\xspace
\fi
}

\begin{frontmatter}
\title{A unified performance analysis of likelihood-informed subspace methods}
\runtitle{Accuracy of LIS methods}

\begin{aug}
\author[A]{\fnms{Tiangang} \snm{Cui}\ead[label=e1]{tiangang.cui@monash.edu}},
\author[B]{\fnms{Xin} \snm{T. Tong}\ead[label=e2]{mattxin@nus.edu.sg}}
\address[A]{Monash University, School of Mathematics \printead{e1}}

\address[B]{National University of Singapore, Department of Mathematics \printead{e2}}
\end{aug}

\begin{abstract}
The likelihood-informed subspace (LIS) method 
offers a viable route to reducing the dimensionality of high-dimensional probability distributions arising in Bayesian inference. 
LIS identifies an intrinsic low-dimensional linear subspace where the target distribution differs the most from some tractable reference distribution. Such a subspace can be identified using the leading eigenvectors of a Gram matrix of the gradient of the log-likelihood function. 
Then, the original high-dimensional target distribution is approximated through various forms of marginalization of the likelihood function, in which the approximated likelihood only has support on the intrinsic low-dimensional subspace. 
This approximation enables the design of inference algorithms that can scale sub-linearly with the apparent dimensionality of the problem. 
Intuitively, the accuracy of the approximation, and hence the performance of the inference algorithms, are influenced by three factors---the dimension truncation error in identifying the subspace, Monte Carlo error in estimating the Gram matrices, and Monte Carlo error in constructing marginalizations. 
This work establishes a unified framework to analyze each of these three factors and their interplay. 
Under mild technical assumptions, we establish error bounds for a range of existing dimension reduction techniques based on the principle of LIS.  Our error bounds also provide useful insights into the accuracy of these methods. 
In addition, we analyze the integration of LIS with sampling methods such as Markov Chain Monte Carlo (MCMC) and sequential Monte Carlo (SMC). 
We also demonstrate the applicability of our analysis on a linear inverse problem with Gaussian prior, which shows that all the estimates can be dimension-independent if the prior covariance is a trace-class operator. Finally, we demonstrate various aspects of our theoretical claims on two nonlinear inverse problems.
\end{abstract}

\begin{keyword}
\kwd{Dimension reduction}
\kwd{Approximation error}
\kwd{Likelihood informed subspace}
\kwd{Monte Carlo estimation}
\end{keyword}

\end{frontmatter}

\section{Introduction}

Many applications in science and engineering must contend with expensive or intractable models that are typically driven by high-dimensional or even infinite-dimensional random variables. Some examples are seismic imaging \cite{bui2012extreme, martin2012stochastic}, subsurface energy \cite{cui2011bayesian}, glaciology \cite{petra2014computational}, groundwater \cite{dodwell2019multilevel,iglesias2014well}, electrical impedance tomography \cite{kaipio2000statistical}, and density estimation \cite{murray2008gaussian}.
Denoting the high-dimensional random variables of interest by $X \in \calX \subseteq \reals^d$, the associated target probability density often takes the form
\begin{equation}
\label{eqn:bayes}
\pi(x) = \frac{1}{Z} \, \mu(x) f(x), \quad Z = \int \mu(x)f(x) dx,
\end{equation}
where we refer to $Z$, $\mu(x)$, and $f(x)$ as the \emph{normalization constant}, the \emph{reference} density and the \emph{likelihood} function, respectively.
In the most common scenario, the target density is the posterior defined by Bayes' rule, the reference density is the prior, and the likelihood function is often denoted by $f(x; y)$ for some observed data $y$. Here we drop the dependency of $f$ on $y$ for brevity unless otherwise required. 

In most of the aforementioned applications, the reference density $\mu(x)$ takes a simple form, e.g. a Gaussian density or an elliptical density, so that the reference distribution, its marginal distributions, and its conditional distributions can be directly evaluated and sampled from. 
However, the likelihood function $f$, which often encodes some highly nonlinear parameter-to-observable map that represents the underlying model, may introduce complicated nonlinear interactions among parameters. When the parameter is also high-dimensional, generating samples from the target distribution using classical methods such as Markov chain Monte Carlo (MCMC) and sequential Monte Carlo (SMC) can be a computationally challenging task. The computational effort required for generating each independent sample from $\pi(x)$ may scale super-linearly with the ambient parameter dimension $d$.

In many high-dimensional problems, there often exists a low-dimensional ``effective" or ``intrinsic" dimension. 
Designing scalable sampling methods that can use this property has been a focus in the recent literature \cite{agapiou2018rates,agapiou2017importance,agapiou2018unbiased,beskos2014stability,beskos2017geometric,beskos2018multilevel,Cuietal16b,morzfeld2019localization,rudolf2018generalization,sanz2018importance,tong2020mala}.
One effective strategy involves finding a parameter subspace $\calX_r$ with dimensionality $d_r \ll d$, so that the original density with high ambient parameter dimensions can be approximated by some low-dimensional parametrization. 
The recently developed likelihood informed subspace (LIS) method \cite{CKB16,cui2014likelihood,zahm2018certified} offers a way to identify $\calX_r$ for high-dimensional target densities and approximates the target density via projections of the likelihood function onto $\calX_r$. 
For sampling related problems, such projections naturally lead to MCMC and SMC computations on the reduced subspace $\calX_r$. As a result, this may significantly lower the computation effort compared with implementations directly targeting the ambient space $\calX$. 
In this work, we focus on the analysis of the approximation accuracy of the LIS method and its related sampling algorithms. 

\subsection{Likelihood informed subspaces}
\label{sec:gram}

Dimension reduction techniques have been exploited to reduce the computational cost due to the parameter dimension. 
When the target density $\pi(x)$ has a known covariance matrix $\Sigma$,  a common approach is to use the principal component analysis or Karhunen--Lo\'{e}ve decomposition \cite{DimRedu:Karhunen_1947,DimRedu:Loeve_1978} that identifies the leading eigenvectors of $\Sigma$ to define the subspace $\calX_r$. Then, the parameters in the complement subspace of $\calX_r$ are ignored in the inference problem. 
Other than the computational difficulties of estimating the covariance matrix for high-dimensional non-Gaussian target densities, this approach is proven to be suboptimal even for problems with Gaussian reference densities and Gaussian likelihood functions \cite{spantini2015optimal}.

Without ignoring parameters from the inference procedure, LIS exploits an alternative way to approximate target densities. 
The intuition underpinning the development of LIS is that the likelihood function $f(x_r, x_\bot)$ is often effectively supported on a low-dimensional subspace $\calX_r$ with dimension $d_r \ll d$. In other words, $f$ can be approximated by a function that depends only on $x_r\in \calX_r$.
For a given subspace $\calX_r$ with dimension $d_r$, we denote its complement subspace by $\calX_\bot$ and define projection operators $P_r$ and $P_\bot$ such that ${\rm range}(P_r) = \calX_r$ and ${\rm range}(P_\bot) = \calX_\bot$. A parameter $x$ can be decomposed as
\begin{equation}
\label{eqn:decompose}
x=x_r+x_\bot,\quad x_r = P_r x \in \calX_r,\quad x_\bot = P_\bot x \in \calX_\bot.
\end{equation}
For a density $\nu(x_r, x_\bot)$ on $\reals^d$, we use $\bar \nu(x_r)$ to denote its marginal on $\calX_r$ and $\nu(x_\bot | x_r)$ to denote the conditional density.
This way, the target density can be decomposed as
\[
\pi(x)\equiv \pi(x_r, x_\bot) = \bar \pi(x_r)\pi(x_\bot|x_r),
\] 
where the marginal density and the conditional density take the form 
\begin{equation}
\bar\pi(x_r) = \frac1Z \, \bar\mu(x_r) \int f(x_r, x_\bot) \mu(x_\bot | x_r) d x_\bot \quad {\rm and} \quad \pi(x_\bot|x_r) = \frac{f(x_r, x_\bot) \mu(x_\bot | x_r)}{\int f(x_r, x_\bot) \mu(x_\bot | x_r) d x_\bot}, \label{eqn:pif}
\end{equation}
respectively. With the assumption that the likelihood function $f$ is effectively supported on $\calX_r$, the above decomposition suggests that $\mu(x_\bot|x_r)$ can be a good approximation of  $\pi(x_\bot|x_r)$.
Thus, one can identify the subspace $\calX_r$ and construct a suitable {\it lower-dimensional surrogate} density $\bar\apdf_s(x_r)$ to approximate the marginal target density $\bar \pi(x_r)$. This allows one to approximate the full-dimensional target density by
\begin{equation}
\label{eqn:approx_tar}
\apdf_s(x_r, x_\bot) \propto \bar\apdf_s(x_r) \mu(x_\bot|x_r),
\end{equation}
where the subscript $s$ in $\bar\apdf_s(x_r)$ and $\apdf_s(x_r, x_\bot)$ denotes the method for constructing the surrogate density, which will be specified in Section \ref{sec:LIS}. 
The approximate target density $\apdf_s(x_r, x_\bot)$ can be efficiently sampled using a two-step strategy---one can first apply MCMC or SMC to generate samples from the lower-dimensional surrogate density $\bar\apdf_s(x_r)$, and then draw independent samples from the conditional reference density $\mu(x_\bot|x_r)$.

The identification of the subspace $\calX_r$ is the key in constructing approximate densities in the form of \eqref{eqn:approx_tar}.
Several methods based on the derivative information of the likelihood function have been developed for this purpose. Some examples include the use of the Fisher information matrix \cite{cui2014likelihood,cui2020data}, the Hessian matrix of $\log f$ \cite{bui2013computational,martin2012stochastic}, and the gradient of $\log f$ \cite{CKB16,zahm2018certified}.
Here we focus on the analysis of the gradient-based techniques. 
Note that the gradient of the logarithm of the likelihood, $\nabla  \log f(x)$, indicates a local direction at $x$ in which the log-likelihood changes most rapidly, 
and the Gram matrix of $\nabla \log f(X)$ after averaging over all outcomes of $X$ can measure variations of the likelihood function. 
Depending on the choice of the distribution assigned to $X$, different Gram matrices have been considered:
\begin{equation}
\begin{gathered}
\label{eqn:H}
H_{0}:=\int \nabla  \log f(x) \nabla \log f(x)^\top \mu(x)dx,\\
H_{1}:=\int \nabla  \log f(x) \nabla \log f(x)^\top\pi (x)dx.
\end{gathered}
\end{equation}
When the gradient Gram matrix $H_k$, $k\in\{0, 1\}$, is presented, the subspace spanned by the eigenvectors of the largest eigenvalues of $H_k$ preserves most of the variations of $\nabla  \log f(x)$. Thus, the first $d_r$ eigenvectors (which we will refer to as the `leading eigenvectors') of the gradient Gram matrix can be used to construct the subspace $\calX_r$.

Both $H_0$ and $H_1$ can be numerically estimated using Monte Carlo integration. 
The matrix $H_0$ can be simply estimated using independent samples drawn from the reference density $\mu(x)$. 
In comparison, estimation of $H_1$ is more challenging, because samples drawn from the target density $\pi(x)$ are needed. One may apply importance sampling 
\[
H_1=\frac1Z\int \nabla  \log f(x) \nabla \log f(x)^\top f(x)\mu(x)dx,
\]
so that samples from $\mu(x)$ weighted by the likelihood function can be used to estimate $H_1$. However, the likelihood $f(x)$ may concentrate in a small region for problems with informative data, and thus the above importance sampling formula may suffer from a low effective sample size. In this case, adaptive MCMC sampling or SMC sampling can be used to estimate $H_1$.
At first glance, it appears that the matrix $H_1$ is not an effective way to identify the subspace $\calX_r$. However, our analysis explains why using $H_1$ instead of $H_0$ leads to a more accurate approximation of the subspace $\calX_r$.

\subsection{Posterior approximation via marginalization}
\label{sec:LIS}

Given a subspace $\calX_r$, here we discuss three methods for building the lower-dimensional surrogate density. 
A natural choice is to use the marginal density in \eqref{eqn:pif}. 
\begin{defn}[Marginal likelihood]\label{def:pif} 
By marginalizing the likelihood function over the complement subspace $\calX_\bot$, one has
\begin{equation}
\label{eqn:fr}
\bar f(x_r):= \int f(x_r, x_\bot) \mu(x_\bot|x_r)dx_\bot
 = \E_\mu (f(X)|P_r X=x_r).
\end{equation}
This yields the lower-dimensional surrogate density $\bar\apdf_f(x_r) = \frac1Z\,\bar f(x_r)\mu(x_r)$ and the approximate target density $\apdf_f(x_r, x_\bot) = \bar\apdf_f(x_r) \mu(x_\bot|x_r)$.
\end{defn}
Since the low-dimensional surrogate density $\bar\apdf_f(x_r)$ is equivalent to the marginal target density $\bar\pi(x_r)$, the approximate target density $\apdf_f$ shares the same normalizing constant $Z$ with the full-dimensional target $\pi$. 
Closely related to the marginal likelihood approximation, we also consider the following approximations based on marginalizing the square root of the likelihood and the logarithm of the likelihood. 
\begin{defn}[Radical likelihood]\label{def:pig} 
Defining the square root of the likelihood by $g(x) := \sqrt{f(x)}$, the marginal function $\bar g(x_r) = \E_\mu (g(X)|P_rX=x_r)$ defines the lower-dimensional surrogate density
\begin{equation}
\label{eqn:margsq}
\bar\apdf_g(x_r) = \frac{1}{Z_g} \bar g(x_r)^2 \mu(x_r) ,\quad  Z_g = \int \bar g(x_r)^2 \mu(x_r) dx_r,
\end{equation}
and the approximate target density $\apdf_g(x_r, x_\bot) = \bar\apdf_g(x_r) \mu(x_\bot|x_r)$.
\end{defn}

\begin{defn}[Log-likelihood]\label{def:pil} 
Defining the logarithm of the likelihood by $l(x) := \log f(x)$, the marginal function $\bar l(x_r) = \E_\mu (l(X)|P_rX=x_r)$ defines the lower-dimensional surrogate density
\begin{equation}
\label{eqn:marglog}
\bar\apdf_{l}(x_r) = \frac{1}{Z_l} \exp( \bar l(x_r) ) \mu(x_r),\quad  Z_l = \int  \exp( \bar l(x_r) ) \mu(x_r) dx_r,
\end{equation}
and the approximate target density $\apdf_l(x_r, x_\bot) = \bar\apdf_l(x_r) \mu(x_\bot|x_r)$.
\end{defn}

Note that the combination of $\apdf_l$ and the subspace defined by $H_0$ is also known as the \emph{active subspace} method \cite{CKB16} in the literature. To provide a unified discussion, here we view it as one specific scenario of the LIS.
While using $\apdf_g$ and $\apdf_l$ may seem less natural than using $\apdf_f$, we will show in Sections  \ref{sec:errorLIS}-\ref{sec:MCsubspace} that their theoretical and computational properties differ from those of $\apdf_f$. 
We use the shorthand notation $X\sim\mu(x)$ to indicate that a random variable $X$ follows a probability distribution with the density $\mu(x)$.
In practice, we can generate independent and identically distributed (i.i.d.) samples $X^i_\bot | X_r{=}x_r , i = 1, \ldots, M,$ from the conditional distribution $\mu(x_\bot|x_r)$ using a map $X^i_\bot=T(x_r, W^i)$, where $W^i$ are i.i.d. samples describing the randomness of $X_\bot$ conditioned on $X_r$. Then the marginalization in all of the approximate likelihood functions $\bar f(x_r)$, $\bar g(x_r)$, and $\bar l(x_r)$ can be respectively computed by Monte Carlo integration
\begin{equation}
\label{eqn:fhat}
\bar f^M(x_r):= \frac{1}{M} \sum_{i=1}^{M} f(x_r, X^{i}_\bot), \;\; \bar g^M(x_r):= \frac{1}{M} \sum_{i=1}^{M} g(x_r, X^{i}_\bot), \;\; \bar l^M(x_r):= \frac{1}{M} \sum_{i=1}^{M} l(x_r, X^{i}_\bot).
\end{equation}
Then, we denote the corresponding Monte Carlo version of the densities by 
\[
\bar\apdf_f^M(x_r) \propto \bar f^M(x_r) \mu(x_r), \quad \bar\apdf_g^M(x_r) \propto \bar g^M(x_r)^2 \mu(x_r), \quad \bar\apdf_l^M(x_r) \propto \exp(\bar l^M(x_r)) \mu(x_r),
\]
respectively, and the corresponding Monte Carlo version of the approximate target densities in a similar way.

\subsection{Related work and main contributions}
The use of the approximate target densities naturally introduces errors compared with solutions obtained from the full target densities. Several interconnected factors impact the approximation accuracy. 
Under mild assumptions, this paper aims to assess the following error sources and the performance of related sampling algorithms:
\begin{enumerate}[1.]
\item \textbf{Accuracy of \boldmath$\apdf_s(x), s\in\{f,g,l\}$.} In Section \ref{sec:errorLIS}, we derive error bounds on the difference between the approximate target densities $\apdf_s$ and the full-dimensional target $\pi$, quantified through either estimation error of some test function or various statistical divergences. The highlight is that all these errors can all be bounded by the spectrum of $H_0$ or $H_1$. So if we have the true values of $H_0$ or $H_1$, we can find the optimal projection subspace with performance guarantees.
From the results, we will also observe that the approximation error of the subspace estimated using $H_1$ tends to be smaller than that of the subspace estimated using $H_0$, and it is independent of the normalizing constant. In subspace estimation, this leads to a trade-off between $H_0$ and $H_1$: the former is easier to estimate while the latter tends to have better approximation accuracy.  
\item \textbf{\boldmath Monte Carlo errors of $\apdf_s^M(x), s\in\{f,g,l\}$.} In most practical cases, each of the approximate target densities $\apdf_s(x)$ need to be replaced by the Monte Carlo version $\apdf_s^M(x)$ using  samples drawn from the conditional reference density  $\mu(x_\bot|x_r)$. 
In Section \ref{sec:MCerror}, we show that Monte Carlo averaging incurs an additional error that is about $O(1/\sqrt{M})$ times as large as the error of $\apdf_s(x)$. Therefore, $M$ can be small when the approximation error of $\apdf_s(x)$ is moderate. 
\item\textbf{\boldmath Monte Carlo errors in estimating $H_0$ and $H_1$. }
The Gram matrices $H_0$ and $H_1$ must be approximated by their Monte Carlo estimates $\widehat{H}_0$ and $\widehat{H}_1$, respectively. The resulting sample-averaged subspace $\Xhat_r$ may lead to additional approximation errors. 
In Section \ref{sec:MCsubspace}, we establish bounds on the errors of the approximate target densities using $\Xhat_r$ instead of using the true subspace $\calX_r$. These bounds only depend on the dimension of $\Xhat_r$ and the  variances of $H_0$ and $H_1$. Importantly, our bounds do not rely on eigenvalue gaps, which is a typical assumption used in dimension reduction (e.g., \cite{CKB16}) but may have limited practical applicability. See Remark \ref{rem:subspace}, Figures \ref{fig:ex1_spec} and \ref{fig:ex2_spec_obs}, and \cite{drineas2019low} for further details.
\item \textbf{\boldmath Efficiency of LIS accelerated sampling. }
We can implement MCMC to draw samples from  the low-dimensional surrogate density $\bar\apdf_s(x_r)$, and then augment the low-dimensional samples by adding samples drawn from the conditional reference density $\mu(x_\bot|x_r)$ to obtain samples from the full-dimensional approximate target density $\apdf_s(x)$. 
In Section \ref{sec:MCMC}, we investigate the efficiency of this algorithm, in which Proposition \ref{prop:MCMC} shows the overall efficiency is mostly determined by the MCMC targeting the approximate target density $\apdf_s(x_r)$. 
In Section \ref{sec:SMC}, we further investigate the connection between SMC and LIS, in particular how to use SMC to simplify the estimation of $H_1$. 
\item \textbf{\boldmath Dimension independence.} LIS methods are mostly used in high-dimensional problems, and hence it is important for the error bounds to be dimension independent. In other words, various approximation error bounds should depend only on the effective dimension $d_r$ and some other statistics,  but not on the ambient dimension $d$. We illustrate this is indeed the case in Section \ref{sec:dimension} for a class of linear inverse problems. It also serves as a concrete example to demonstrate the efficacy of our analysis. 
\end{enumerate}

We provide some numerical examples on nonlinear inverse problems to further verify our results in Section \ref{sec:numerics}. 
We allocate most of the technical proofs to the\appname. 

We now discuss some related work that addresses the preceding issues. In \cite{CKB16}, Problems 1--3 are investigated in the context of the active subspace method, which employs $\apdf_l$ with $\calX_r$ estimated from $H_0$, using the Hellinger distance. An analysis similar to that of \cite{CKB16} has also been developed for function approximation problems with $H_0$ in \cite{parente2020generalized}. 
The work of \cite{zahm2018certified} investigated Problems 1--3 for $\apdf_f$ with $\calX_r$ estimated from $H_1$ using the Kullback--Leibler (KL) divergence. 
For Problem 1, our analysis establishes new error bounds of $\apdf_s, s\in\{f,g\}$ with $\calX_r$ estimated from both $H_1$ and $H_0$ based on the Hellinger distance. Using the bounds on the Hellinger errors, we can establish new sharp bounds on the expected Monte Carlo errors in $\apdf_s^M, s\in\{f,g\}$ for Problem 2. 
This analysis also sheds light on the trade-off between $H_0$ and $H_1$. 
For the sake of completeness, we also establish the error bound of $\apdf_l$ and $\apdf_l^M$ with $\calX_r$ estimated from $H_0$ based on the KL divergence.
Moreover, our analysis for Problem 3 does not require the eigenvalue gap condition, which is assumed in \cite{CKB16,zahm2018certified} and not easily fulfilled in applications (see Remark \ref{rem:subspace}, Figures \ref{fig:ex1_spec} and \ref{fig:ex2_spec_obs} and \cite{drineas2019low}).
Low-rank matrix approximation methods that do not require an eigenvalue gap have also been studied, e.g. in \cite{drineas2019low}, but not for sample-averaged subspace estimation. 
Beyond Problems 1--3, our analysis also enables us to investigate Problems 4 and 5, which have practical significance but have not been previously addressed.  

\section{Accuracy of approximate target densities}
\label{sec:errorLIS}

Our starting point is to establish bounds on the errors of approximate target densities $\apdf_s(x), s \in \{f,g,l\}$ in Section \ref{sec:LIS}. We consider two forms to quantify the approximation errors. 
The first way is through the estimation error. Suppose the goal is to estimate $\E_\pi [h]$ for some function of interest $h$. The approximate density $\apdf_s$ yields an approximate estimate $\E_{\apdf_s} [h]$ that has the estimation error 
\begin{equation}
\label{eqn:eh}
\Error_h(\pi,\apdf_s):=\left|\E_\pi [h]-\E_{\apdf_s} [h]\right|.
\end{equation}
The second way is via statistical divergences, which are also known as $f$-divergences. Some popular choices include the (squared) Hellinger distance
\[
\Hell(\pi,\nu)^2=\frac12\int \left(\sqrt{\frac{\pi(x)}{\lambda(x)}}-\sqrt{\frac{\nu(x)}{\lambda(x)}}\right)^2 \lambda(x)dx.
\]
where $\lambda$ is a reference density such as the Lebesgue density; and the KL divergence
\[
D_{KL}(\pi,\nu)=\int \log \frac{\pi(x)}{\nu(x)}\pi(x)dx.%
\]
We present in Lemma \ref{lem:hel2exp} a few results regarding the relationship between these divergences and their connections with the estimation error $\Error_h$. 
Various error forms can be useful for applying dimension reduction in different inference tasks, as each inference task often has its ``preferred'' way to quantify the error.
For example, the optimization problems in transport maps \cite{bigoni2019greedy,marzouk2016sampling,spantini2018inference} and Stein variational methods \cite{detommaso2018stein,liu2016stein} are formulated using the KL divergence, tensor train \cite{cui2020deep} and other approximation methods, e.g., \cite{lie2019error}, give bounds in terms of the Hellinger distance, and the min-max formulation in density estimation methods such as \cite{tabak2020conditional,tabak2013family,trigila2016data} relies on the estimation error in \eqref{eqn:eh}.
Unless otherwise specified, we only consider the estimation error and statistical divergences of the full-dimensional approximate target densities $\apdf_s(x), s \in \{f,g,l\}$ rather than their lower-dimensional counterparts $\bar\apdf_s(x_r)$. 

For different combinations of approximate target densities, $\apdf_s(x), s \in \{f,g,l\}$, and subspace construction methods, $H_k, k \in \{0,1\}$, our first result discusses the {\it a priori} estimate of either $\Error_h(\pi,\apdf_s)$ or $D_{(\,\cdot\,)}(\pi,\apdf_s)$ using the subspace $\calX_r$ and spectral information of $H_k$. 
Intuitively, the approximation error is related to the sum of the residual eigenvalues of $H_k$, which is denoted by 
\begin{equation}
\label{eqn:res}
\res(\calX_r, H_k):=\trace(P_\bot H_k P_\bot),
\end{equation}
where $P_\bot$ is the projector defined in \eqref{eqn:decompose}. Note that \eqref{eqn:res} is well defined for any linear subspace $\calX_r$ and computable for a given $H_k$, whereas many statistical divergences do not have closed-form formulas. 

To build a connection between approximation errors of $\apdf_s(x)$ and the residual function $\res(\calX_r, H_k)$, we assume the reference density $\mu(x)$ is compatible with the subspace $\calX_r$ in the following sense:
\begin{aspt}
\label{aspt:poincare}
The conditional reference density 
$\mu(x_\bot|x_r)$ satisfies a $\kappa$-Poincar\'{e} inequality: for all $x_r$ and any $\mathbb{C}^1$ function $h$:
\[
\var_{\mu(x_\bot|x_r)}(h)\leq \kappa \int \|\nabla h(x_r, x_\bot)\|^2 \mu(x_\bot|x_r) dx_\bot. 
\]
\end{aspt}

Assumption \ref{aspt:poincare} asserts a Poincar\'{e}-type inequality that is modified for our subspace approximations. In probability theory, it is well known that Poincar\'{e}-type inequalities hold for any strongly log-concave density $\mu(x)$. We refer the readers to \cite{bobkov1999isoperimetric} for a summary and its connection to other inequalities such as the Brascamp--Lieb inequality \cite{brascamp1976} and the logarithmic Sobolev inequality \cite{bobkov2000brunn,gross1975logarithmic,ledoux1994simple,otto2000generalization}. 
In the following proposition, we provide a concrete example of Assumption \ref{aspt:poincare} for the case that $\mu(x)$ is a slight perturbation from a strongly log-concave density. A similar result can be found in \cite[Corollary F.4]{zahm2018certified}. We provide it here for the sake of completeness.
\begin{prop}
\label{prop:logconcave}
Suppose  $\mu(x)\propto \exp(-V(x)-U(x))$ and there are constants $c,B>0$ such that 
\begin{itemize}
\item For any $x$, the minimal eigenvalue of the Hessian $\nabla^2 V(x)$ is larger than $c$;
\item The variation in $U$ is bounded in the sense that $\exp(\sup_xU(x)-\inf_x U(x))\leq B$;
\end{itemize}
Then  Assumption \ref{aspt:poincare} holds with $\kappa=B^2/c$.
\end{prop}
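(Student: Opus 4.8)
The plan is to reduce Assumption~\ref{aspt:poincare} for $\mu(x_\bot\mid x_r)$ to the classical Poincar\'e inequality for strongly log-concave measures, and then to absorb the bounded factor $\exp(-U)$ through a Holley--Stroock-type comparison of densities; this is essentially the route behind \cite[Corollary~F.4]{zahm2018certified}.

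First I would fix $x_r$ and view the conditional as a density in $x_\bot$ alone: up to a normalizing constant, $\mu(x_\bot\mid x_r)\propto\exp\big(-V_{x_r}(x_\bot)-U_{x_r}(x_\bot)\big)$, where $V_{x_r}(x_\bot):=V(x_r,x_\bot)$ and $U_{x_r}(x_\bot):=U(x_r,x_\bot)$. Fixing an orthonormal basis $\{u_i\}$ of $\calX_\bot$ identifies $\calX_\bot$ with $\reals^{d-d_r}$ equipped with Lebesgue measure, and under this identification the relevant gradient is $\nabla_{x_\bot}h$, with $\|\nabla_{x_\bot}h\|=\|P_\bot\nabla h\|\le\|\nabla h\|$. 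Hence it suffices to prove the ordinary Poincar\'e inequality on $\calX_\bot$,
\[
\var_{\mu(\cdot\mid x_r)}(h)\le\frac{B^2}{c}\int\|\nabla_{x_\bot}h(x_r,x_\bot)\|^2\,\mu(x_\bot\mid x_r)\,dx_\bot .
\]

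Second, I would let $\rho_{x_r}\propto\exp(-V_{x_r})$ be the ``Gaussian-like'' part on $\calX_\bot$. Its Hessian in the $x_\bot$ variables is the compression of $\nabla^2V(x_r,x_\bot)$ to $\calX_\bot$ (in the chosen basis, the matrix $[\,u_i^\top\nabla^2V\,u_j\,]_{ij}$), and for any $a$ one has $a^\top[\,u_i^\top\nabla^2V\,u_j\,]_{ij}\,a=(\sum_i a_iu_i)^\top\nabla^2V(\sum_j a_ju_j)\ge c\|a\|^2$, so this compression inherits the bound $\succeq cI$ from $\nabla^2V\succeq cI$. Therefore $\rho_{x_r}$ is strongly log-concave with parameter $c$ and satisfies $\var_{\rho_{x_r}}(h)\le\tfrac1c\int\|\nabla_{x_\bot}h\|^2\,d\rho_{x_r}$ by the Brascamp--Lieb / Bakry--\'Emery estimate (see \cite{bobkov1999isoperimetric}). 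Then I would compare $\mu(\cdot\mid x_r)$ with $\rho_{x_r}$: writing out both normalizing constants and using $\sup_{x_\bot}U_{x_r}-\inf_{x_\bot}U_{x_r}\le\sup_x U(x)-\inf_x U(x)\le\log B$ yields the two-sided pointwise bound $1/B\le d\mu(\cdot\mid x_r)/d\rho_{x_r}\le B$, whence for any $C^1$ function $h$,
\[
\var_{\mu(\cdot\mid x_r)}(h)\le\int\big(h-\E_{\rho_{x_r}}h\big)^2\,d\mu(\cdot\mid x_r)\le B\,\var_{\rho_{x_r}}(h)\le\frac{B}{c}\int\|\nabla_{x_\bot}h\|^2\,d\rho_{x_r}\le\frac{B^2}{c}\int\|\nabla_{x_\bot}h\|^2\,d\mu(\cdot\mid x_r),
\]
and enlarging $\nabla_{x_\bot}h$ to $\nabla h$ gives Assumption~\ref{aspt:poincare} with $\kappa=B^2/c$.

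There is no genuinely hard step here; the proof is an assembly of (i) the Brascamp--Lieb inequality for strongly log-concave measures and (ii) a Holley--Stroock perturbation. The main points needing care are the reduction to the conditional slice together with the Hessian-compression bookkeeping, and the observation that bounding the density ratio $d\mu(\cdot\mid x_r)/d\rho_{x_r}$ separately from above and below costs a factor $B^2$, rather than the sharper $B$ available from the exact cancellation of normalizing constants in the Holley--Stroock argument.
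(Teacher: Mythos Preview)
Your proposal is correct and follows essentially the same route as the paper: both arguments pass to the strongly log-concave conditional density $\rho_{x_r}\propto\exp(-V_{x_r})$ (the paper calls it $\nu(x_\bot\mid x_r)$), invoke Bakry--\'Emery to get the Poincar\'e constant $1/c$ after noting that the $x_\bot$-Hessian is a compression of $\nabla^2V\succeq cI$, and then transfer the inequality to $\mu(\cdot\mid x_r)$ via the two-sided density bound $B^{-1}\le d\mu(\cdot\mid x_r)/d\rho_{x_r}\le B$, picking up the factor $B^2$. The only cosmetic differences are in the references cited for the log-concave Poincar\'e step and in your more explicit bookkeeping of the basis on $\calX_\bot$.
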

\begin{proof}
See\smartref{proof:logconcave}.
\end{proof}
\begin{table}[h]
\caption{A summary of approximation error bounds. The second column indicates the functions marginalized by the approximate target densities.}
\label{tab:lis}
\begin{center}
\begin{tabular}{ l | l | c | c }
\hline
approximation method & marginalization    & approximation errors & upper bounds   \\
\hline
$H_1$ and $\apdf_f$ & $f$  & $\Error_h,\Hell, (\sqrt{D_{KL}})$ & O($\res(\calX_r,H_1)^{\frac12}$) \\
\hline
$H_1$ and $\apdf_g$ & $g = \sqrt{f}$ & $\Error_h,\Hell$  & O($\res(\calX_r,H_1)^{\frac12}$)\\
\hline
$H_0$ and $\apdf_f$ & $f$  & $\Error_h,\Hell$ & O($\res(\calX_r,H_0)^{\frac12}$) \\
\hline
$H_0$ and $\apdf_g$ & $g = \sqrt{f}$ & $\Error_h,\Hell$  & O($\res(\calX_r,H_0)^{\frac12}$)\\
\hline
$H_0$ and $\apdf_l$ & $l = \log f$ & $\Error_h, \Hell,\sqrt{D_{KL}} $  & O($\res(\calX_r,H_0)^{\frac14}$) \\
\hline
\end{tabular}
\end{center}
\end{table}

Under Assumption \ref{aspt:poincare} we will show that  $\Error_h(\pi,\apdf_s)$ and $D_{(\,\cdot\,)}(\pi,\apdf_s)$ can be upper bounded by a fractional power of $\res(\calX_r, H_k)$. We summarize the results in Table \ref{tab:lis}. The first row indicates that if the Gram matrix $H_1$ and the approximate density $\apdf_f$ are used, then $\Error_h(\pi,\apdf_f)$ is bounded by $O(\sqrt{\res(\calX_r,H_1)})$. The same applies to other entries in the table. Note that we have written parentheses around $\sqrt{D_{KL}}$ for $H_1$ and $\apdf_f$, since this scenario has been analyzed in \cite{zahm2018certified} under a similar assumption. Therefore we do not discuss bounds for $D_{KL}(\pi,\apdf_f)$ and focus on other bounds that have yet to be analyzed. 
We first consider the approximations $(H_1, \apdf_f)$ and $(H_1, \apdf_g)$ as described in Definitions \ref{def:pif} and \ref{def:pig}, respectively. 

\begin{prop}
\label{prop:pivar}
For a given subspace $\calX_r$, the expected conditional variance of the radical likelihood function $g = \sqrt{f}$ provides the following upper bounds:
\begin{enumerate}[1)]
\item $\displaystyle \Hell(\pi, \apdf_f)^2 \leq  \frac{1}{Z}  \int  \var_{\mu(x_\bot|x_r)} \big[ g \big] \mu(x_r)dx_r$.
\item $\displaystyle \Hell(\pi, \apdf_g)^2 \leq  \frac{1}{Z}  \int  \var_{\mu(x_\bot|x_r)} \big[ g \big] \mu(x_r)dx_r$.
\end{enumerate}
In addition, the normalizing constants $Z$ and $Z_g$ satisfy $Z \geq Z_g$.
\end{prop}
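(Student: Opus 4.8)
The plan is to work with the Hellinger affinity representation $\Hell(\pi,\nu)^2 = 1-\int\sqrt{\pi(x)\nu(x)}\,dx$, which is valid for any two densities and independent of the dominating reference. Writing $g=\sqrt f$, we have $\pi(x)=Z^{-1}g(x)^2\mu(x)$, $\apdf_f(x)=Z^{-1}\bar f(x_r)\mu(x)$ with $\bar f(x_r)=\E_{\mu(x_\bot|x_r)}[g^2]$, and $\apdf_g(x)=Z_g^{-1}\bar g(x_r)^2\mu(x)$ with $\bar g(x_r)=\E_{\mu(x_\bot|x_r)}[g]\ge 0$. The one fact used repeatedly is the conditioning identity $\int\phi(x)\mu(x)\,dx=\int\mu(x_r)\,\E_{\mu(x_\bot|x_r)}[\phi]\,dx_r$.

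First I would establish the normalizing-constant identity. Conditioning gives $Z=\int g^2\mu\,dx=\int\mu(x_r)\,\E_{\mu(x_\bot|x_r)}[g^2]\,dx_r=\int\mu(x_r)\bigl(\bar g(x_r)^2+\var_{\mu(x_\bot|x_r)}[g]\bigr)\,dx_r = Z_g+\int\var_{\mu(x_\bot|x_r)}[g]\,\mu(x_r)\,dx_r$. Nonnegativity of the conditional variance yields $Z\ge Z_g$, the last claim of the proposition, and this identity also shows that the right-hand side of both parts 1) and 2) equals $Z^{-1}(Z-Z_g)$. For part 2), I then compute the affinity between $\pi$ and $\apdf_g$: using the conditioning identity and $\bar g(x_r)=\E_{\mu(x_\bot|x_r)}[g]$, $\int\sqrt{\pi\,\apdf_g}\,dx=(ZZ_g)^{-1/2}\int g(x)\bar g(x_r)\mu(x)\,dx=(ZZ_g)^{-1/2}\int\bar g(x_r)^2\mu(x_r)\,dx_r=\sqrt{Z_g/Z}$, hence $\Hell(\pi,\apdf_g)^2=1-\sqrt{Z_g/Z}$. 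Since $Z_g/Z\in[0,1]$, the elementary inequality $1-\sqrt t\le 1-t$ gives $\Hell(\pi,\apdf_g)^2\le 1-Z_g/Z=Z^{-1}(Z-Z_g)$, which is exactly the claimed bound.

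For part 1), I compute $\int\sqrt{\pi\,\apdf_f}\,dx=Z^{-1}\int g(x)\sqrt{\bar f(x_r)}\,\mu(x)\,dx=Z^{-1}\int\sqrt{\bar f(x_r)}\,\bar g(x_r)\,\mu(x_r)\,dx_r$, so $\Hell(\pi,\apdf_f)^2=Z^{-1}\int\mu(x_r)\sqrt{\bar f(x_r)}\bigl(\sqrt{\bar f(x_r)}-\bar g(x_r)\bigr)\,dx_r$. By Jensen's inequality $\sqrt{\bar f(x_r)}=\sqrt{\E_{\mu(x_\bot|x_r)}[g^2]}\ge\E_{\mu(x_\bot|x_r)}[g]=\bar g(x_r)\ge 0$, so the factor $\sqrt{\bar f(x_r)}-\bar g(x_r)$ is nonnegative and $\sqrt{\bar f(x_r)}\le\sqrt{\bar f(x_r)}+\bar g(x_r)$; multiplying gives $\sqrt{\bar f(x_r)}\bigl(\sqrt{\bar f(x_r)}-\bar g(x_r)\bigr)\le\bar f(x_r)-\bar g(x_r)^2=\var_{\mu(x_\bot|x_r)}[g]$, and integrating against $\mu(x_r)$ yields part 1).

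The computations are short, and I do not expect a genuine obstacle; the only points needing care are the sign/Jensen facts $0\le\bar g(x_r)\le\sqrt{\bar f(x_r)}$ used to control the cross terms, the inequality $1-\sqrt t\le 1-t$ on $[0,1]$, and verifying that all integrals are finite (which follows from $Z<\infty$, $Z_g\le Z$, and Cauchy–Schwarz applied to the affinity integrals).
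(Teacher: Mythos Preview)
Your proof is correct and essentially matches the paper's: for part 1) both arrive at the same integrand bound $\sqrt{\bar f}\bigl(\sqrt{\bar f}-\bar g\bigr)\le\bar f-\bar g^{2}=\var_{\mu(x_\bot|x_r)}[g]$, and for part 2) both rely on $Z\ge Z_g$ together with the identity $Z-Z_g=\int\var_{\mu(x_\bot|x_r)}[g]\,\bar\mu(x_r)\,dx_r$. Your use of the affinity representation in part 2) yields the exact formula $\Hell(\pi,\apdf_g)^2=1-\sqrt{Z_g/Z}$ before applying $1-\sqrt{t}\le 1-t$, which is a touch cleaner than the paper's direct expansion of the squared difference but uses the same ingredients.
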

\begin{proof}
See\smartref{proof:pivar}.
\end{proof}

\begin{thm}
\label{thm:pif}\label{thm:pig}
Suppose the approximate densities $\apdf_f$ and $\apdf_g$ are obtained using a subspace $\calX_r$ constructed from the matrix $H_1$. Under Assumption \ref{aspt:poincare}, we have the following:
\begin{enumerate}[1)]
\item The Hellinger distance between $\pi$ and $\apdf_f$ is bounded by 
\begin{equation}
\label{eqn:Hpif}
\Hell(\pi, \apdf_f) \leq \frac12 \sqrt{\kappa \res(\calX_r,H_1)}.
\end{equation}
\item The estimation error with any $L^2$ integrable function $h$ is given by 
\[
\Error_h(\pi,\apdf_f)\leq \sqrt{ \tfrac\kappa 2(\E_{\pi} [h^2]+\E_{\apdf_f}[h^2]) \res(\calX_r,H_1)}.
\]
\item The above two claims also hold for the approximation $\apdf_g$.
\end{enumerate}
\end{thm}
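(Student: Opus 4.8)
The plan is to derive all three parts from part~1), which itself follows by inserting the Poincar\'e inequality of Assumption~\ref{aspt:poincare} into Proposition~\ref{prop:pivar}. I would begin with the bound already supplied by that proposition,
\[
\Hell(\pi,\apdf_f)^2 \;\le\; \frac1Z \int \var_{\mu(x_\bot|x_r)}\big[g\big]\,\mu(x_r)\,dx_r ,
\]
which reduces the task to controlling the expected conditional variance of $g=\sqrt f$. For each fixed $x_r$ I would apply the Poincar\'e inequality to the function $x_\bot\mapsto g(x_r,x_\bot)$, viewed as a function on $\calX_\bot$: its gradient is then the component $P_\bot\nabla g$, so that $\var_{\mu(x_\bot|x_r)}[g]\le\kappa\int\|P_\bot\nabla g(x_r,x_\bot)\|^2\mu(x_\bot|x_r)\,dx_\bot$.

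Next I would use the elementary identity $\nabla g=\tfrac12\sqrt f\,\nabla\log f$, hence $\|P_\bot\nabla g\|^2=\tfrac14 f\,\|P_\bot\nabla\log f\|^2$, and recombine $\mu(x_\bot|x_r)\mu(x_r)=\mu(x)$ together with $f(x)\mu(x)=Z\pi(x)$ to obtain
\[
\frac1Z\int\var_{\mu(x_\bot|x_r)}[g]\,\mu(x_r)\,dx_r \;\le\; \frac{\kappa}{4Z}\int f(x)\,\|P_\bot\nabla\log f(x)\|^2\,\mu(x)\,dx \;=\; \frac{\kappa}{4}\,\E_\pi\!\big[\|P_\bot\nabla\log f\|^2\big].
\]
From the definition of $H_1$ in \eqref{eqn:H} and cyclicity of the trace, $\E_\pi[\|P_\bot\nabla\log f\|^2]=\trace(P_\bot H_1 P_\bot)=\res(\calX_r,H_1)$, which gives $\Hell(\pi,\apdf_f)^2\le\tfrac{\kappa}{4}\,\res(\calX_r,H_1)$, i.e. \eqref{eqn:Hpif}.

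For part~2) I would combine \eqref{eqn:Hpif} with Lemma~\ref{lem:hel2exp}, which bounds $\Error_h(\pi,\nu)$ in terms of $\Hell(\pi,\nu)$ and the second moments $\E_\pi[h^2]$ and $\E_\nu[h^2]$; substituting the Hellinger bound and tracking constants gives the stated estimate. (Equivalently, and with the cleanest constant, one can skip the Hellinger step and feed the conditional-variance estimate of Proposition~\ref{prop:pivar} directly into the estimation-error form of Lemma~\ref{lem:hel2exp}, which is what produces the factor $\kappa/2$.) For part~3), I would note that Proposition~\ref{prop:pivar} supplies the \emph{same} starting inequality for $\apdf_g$, namely $\Hell(\pi,\apdf_g)^2\le\tfrac1Z\int\var_{\mu(x_\bot|x_r)}[g]\mu(x_r)\,dx_r$; hence the identical chain of estimates gives $\Hell(\pi,\apdf_g)\le\tfrac12\sqrt{\kappa\,\res(\calX_r,H_1)}$, and the estimation-error bound for $\apdf_g$ follows exactly as in part~2).

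Since Proposition~\ref{prop:pivar} carries the conceptual weight, what remains is mostly bookkeeping rather than a genuine obstacle. The one point that must be handled carefully is the projection: applying the Poincar\'e inequality to the conditional measure must produce only the $\calX_\bot$-component $P_\bot\nabla g$ of the gradient, since it is precisely this restriction that turns the full trace $\trace(H_1)$ into the residual $\trace(P_\bot H_1 P_\bot)=\res(\calX_r,H_1)$; keeping the full gradient would degrade the bound to one involving $\trace(H_1)$. One also needs enough regularity on $f$ (positivity and differentiability, presumably among the paper's standing assumptions) so that $g=\sqrt f$ is an admissible $\mathbb{C}^1$ test function in Assumption~\ref{aspt:poincare} and $\nabla\log f$ is well defined, together with the routine constant-tracking needed to reconcile the $\kappa/4$ of part~1) with the $\kappa/2$ of part~2).
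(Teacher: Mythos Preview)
Your proposal is correct and follows essentially the same route as the paper: start from Proposition~\ref{prop:pivar}, apply the Poincar\'e inequality of Assumption~\ref{aspt:poincare} to $g=\sqrt f$ to bound the conditional variance by $\kappa\int\|P_\bot\nabla g\|^2\mu(x_\bot|x_r)\,dx_\bot$, use $\|P_\bot\nabla g\|^2=\tfrac14 f\,\|P_\bot\nabla\log f\|^2$ and $f\mu=Z\pi$ to arrive at $\tfrac\kappa4\res(\calX_r,H_1)$, and then obtain parts~2) and~3) from Lemma~\ref{lem:hel2exp} and the identical $\apdf_g$ bound in Proposition~\ref{prop:pivar}, respectively. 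The constant-tracking you flag ($\kappa/4$ in the squared Hellinger bound combining with the $\sqrt{2}$ from Lemma~\ref{lem:hel2exp} to give $\kappa/2$) is exactly how the paper's constants line up.
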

\begin{proof} 
See\smartref{proof:pig}.
\end{proof}

Although the result of Theorem \ref{thm:pif} claim 1) can also be obtained from Lemma \ref{lem:hel2exp} claim 2) and Corollary 1 of \cite{zahm2018certified} (which uses the logarithmic Sobolev inequality), our proof offers additional insights into the subspace construction. Proposition \ref{prop:pivar} connects the error of approximate target densities with the $\kappa$-Poincar\'{e} inequality via the expected conditional variance. This may also lead to new subspace construction techniques beyond the gradient-based methods. 

\begin{rem}
\label{rem:25}
Recalling the definitions of $H_k$, we have $H_1\leq \frac{1}{Z} \sup_x f \, H_0$. Thus, we have a direct corollary of Theorem \ref{thm:pif} for the case where the subspace $\calX_r$ is constructed from the matrix $H_0$:
\begin{equation}
\label{tmp:Hpif}
\Hell(\pi, \apdf_f) \leq \frac12 \sqrt{\frac{\kappa\,\sup_x f}{Z} \res(\calX_r,H_0)}, \quad \Hell(\pi, \apdf_g) \leq \frac12 \sqrt{\frac{\kappa\,\sup_x f}{Z} \res(\calX_r,H_0)}.
\end{equation}
Similar bounds for the $L^2$ distance between $\log \apdf_l$ and $\log \pi$ assuming $\sup_x f = 1$ can be found in Theorem 3.1 \cite{CKB16} with a more complicated pre-constant. 
In problems where the likelihood function $f$ concentrates in a small region, the associated normalizing constant $Z$ can be small. This way, the constant on the right-hand side of \eqref{tmp:Hpif} can have a large value. In contrast, the only constant in \eqref{eqn:Hpif} is $\kappa$, which is of value $0.5$ when the reference density is the standard Gaussian distribution. This partially explains why using $H_0$ can be suboptimal.
Following this observation, we predict that the reduced subspace from $H_1$ will perform better than the one from $H_0$, especially when the likelihood function has concentrated support. This will be verified in our numerical examples. %
\end{rem}

For the approximate target density $\apdf_l$, one can obtain bounds on the associated approximation errors only if the matrix $H_0$ is used to construct the subspace. In contrast, for the approximate target densities $\apdf_f$ and $\apdf_g$, error bounds can be obtained using both $H_1$ and $H_0$. See Table \ref{tab:lis}. The error bounds for $\apdf_l$ are in general weaker than those for $\apdf_l$ and $\apdf_g$---they depend on additional constants that can take large values and the exponent of $\res(\calX_r, H_0)$ in the error bounds is $1/4$.
\begin{thm}
\label{thm:pil}
Suppose the approximate density $\apdf_l$ is obtained using a subspace $\calX_r$ constructed from the matrix $H_0$. Under Assumption \ref{aspt:poincare}, we have the following:
\begin{enumerate}[1)]
\item The error in KL-divergence is bounded by
\[
D_{KL}(\pi,\apdf_l)\leq \frac{\sqrt{\kappa}\|f\|_{2,\mu}}{Z}\sqrt{\res(\calX_r, H_{0})},\quad  \|f\|_{2,\mu}:=\sqrt{\int f^2(x)\mu(x) dx}\geq Z. 
\]
This also leads to an upper bound in Hellinger distance, since $D_{H}(\pi,\apdf_l)\leq\sqrt{\frac12D_{KL}(\pi,\pi_l)}$. 
\item The estimation error is bounded by
\[
|\E_{\pi} [h]-\E_{\apdf_l} [h]|\leq (\E_{\pi} [h^2]+\E_{\apdf_l}[h^2])^{\frac12}\sqrt{\frac{\|f\|_{2,\mu}}{Z}} (\kappa\res(\calX_r, H_{0}))^{\frac14}.
\]
\end{enumerate}
\end{thm}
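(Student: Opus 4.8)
The plan is to expand the KL divergence explicitly, discard the ratio of normalizing constants using Jensen's inequality, and then control the remaining term by a conditional Cauchy--Schwarz estimate followed by the $\kappa$-Poincar\'e inequality of Assumption \ref{aspt:poincare}.

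First I would write, using $\apdf_l(x) = Z_l^{-1}\exp(\bar l(x_r))\mu(x)$, $\pi(x) = Z^{-1}\exp(l(x))\mu(x)$ with $l=\log f$, and the fact that $\bar l(x_r)$ is by definition the conditional mean $\E_{\mu(x_\bot|x_r)}[l]$,
\[
D_{KL}(\pi,\apdf_l) = \log\frac{Z_l}{Z} + \E_\pi\big[\,l(X)-\bar l(P_rX)\,\big].
\]
Since $Z_l = \int \exp\!\big(\E_{\mu(x_\bot|x_r)}[l]\big)\mu(x_r)\,dx_r$ while $Z = \int \E_{\mu(x_\bot|x_r)}[\exp l]\,\mu(x_r)\,dx_r$, Jensen's inequality applied to the conditional expectation gives $Z_l\le Z$, so $\log(Z_l/Z)\le 0$ and this term may be dropped from the upper bound. (The same reasoning applied to $Z=\E_\mu[f]$ gives the stated inequality $\|f\|_{2,\mu}\ge Z$.) Disintegrating $\pi$ over $\calX_r$ then turns the surviving term into a conditional covariance,
\[
\E_\pi\big[l-\bar l\big] = \frac1Z\int {\rm Cov}_{\mu(x_\bot|x_r)}\!\big(l,f\big)\,\mu(x_r)\,dx_r .
\]

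Next I would bound the inner covariance by Cauchy--Schwarz, ${\rm Cov}_{\mu(x_\bot|x_r)}(l,f)\le \sqrt{\var_{\mu(x_\bot|x_r)}(l)}\,\sqrt{\var_{\mu(x_\bot|x_r)}(f)}\le \sqrt{\var_{\mu(x_\bot|x_r)}(l)}\,\sqrt{\E_{\mu(x_\bot|x_r)}[f^2]}$, and apply Assumption \ref{aspt:poincare} to the map $x_\bot\mapsto l(x_r,x_\bot)$ for fixed $x_r$, whose gradient is $P_\bot\nabla l$, giving $\var_{\mu(x_\bot|x_r)}(l)\le\kappa\,\E_{\mu(x_\bot|x_r)}[\|P_\bot\nabla l\|^2]$. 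A second Cauchy--Schwarz, now in the $x_r$-integral, separates the two factors; using $\nabla\log f=\nabla l$ so that $\int\E_{\mu(x_\bot|x_r)}[\|P_\bot\nabla l\|^2]\mu(x_r)\,dx_r=\trace(P_\bot H_0P_\bot)=\res(\calX_r,H_0)$, and $\int\E_{\mu(x_\bot|x_r)}[f^2]\mu(x_r)\,dx_r=\|f\|_{2,\mu}^2$, yields claim 1. The Hellinger consequence then follows from $\Hell(\pi,\apdf_l)\le\sqrt{D_{KL}(\pi,\apdf_l)/2}$ in Lemma \ref{lem:hel2exp}, and claim 2 follows by composing claim 1 with the estimation-error/Hellinger inequality of Lemma \ref{lem:hel2exp}: $\Error_h\le\sqrt{2(\E_\pi[h^2]+\E_{\apdf_l}[h^2])}\,\Hell\le\sqrt{\E_\pi[h^2]+\E_{\apdf_l}[h^2]}\,\sqrt{D_{KL}}$, and $\sqrt{D_{KL}}$ is precisely $(\|f\|_{2,\mu}/Z)^{1/2}(\kappa\res(\calX_r,H_0))^{1/4}$.

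The step I expect to be the main obstacle is the bookkeeping around the conditional Poincar\'e estimate: one must check that the gradient in Assumption \ref{aspt:poincare}, for a test function depending only on $x_\bot$, reduces to $P_\bot\nabla l$ so that it matches the definition of $\res(\calX_r,H_0)$, and one must arrange the covariance bound so that the constant multiplying $\sqrt{\res(\calX_r,H_0)}$ is exactly $\|f\|_{2,\mu}/Z$ and nothing larger. Crucially, bounding $\var_{\mu(x_\bot|x_r)}(f)$ by $\E_{\mu(x_\bot|x_r)}[f^2]$ rather than running a second Poincar\'e inequality on $f=e^l$ (which would reintroduce an uncontrolled factor) is what keeps the constant clean, at the cost of the weaker $1/4$ exponent recorded in Table \ref{tab:lis}. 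Verifying the integrability/Fubini conditions for the disintegration is then routine given Assumption \ref{aspt:poincare}, the finiteness of $\|f\|_{2,\mu}$, and the finiteness of $\res(\calX_r,H_0)$.
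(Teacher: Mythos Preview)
Your proposal is correct and follows essentially the same route as the paper: drop $\log(Z_l/Z)\le 0$ via Jensen, then control $\E_\pi[l-\bar l]$ by Cauchy--Schwarz together with the conditional Poincar\'e inequality to produce the factor $\|f\|_{2,\mu}/Z$ times $\sqrt{\kappa\,\res(\calX_r,H_0)}$, and deduce claim~2) exactly as you do from Lemma~\ref{lem:hel2exp}. The only cosmetic difference is that the paper applies one global Cauchy--Schwarz, $\big|\int(l-\bar l)f\,\mu\big|\le\|l-\bar l\|_{2,\mu}\,\|f\|_{2,\mu}$, in place of your conditional-covariance framing followed by a second Cauchy--Schwarz in $x_r$; the two arguments are equivalent.
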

\begin{proof}
See\smartref{proof:pil}.
\end{proof}

\section{Monte Carlo error of approximate target densities}
\label{sec:MCerror}
To construct the approximate densities $\apdf_s, s \in \{f,g,l\}$, the marginalization in the lower-dimensional likelihood approximations (cf. Definitions \ref{def:pif}--\ref{def:pil}) often needs to be computed by Monte Carlo integration, where i.i.d. samples $X^i_\bot|X_r{=}x_r, i = 1,\ldots, M,$ drawn from the conditional reference density $\mu(x_\bot|x_r)$ are used.
To estimate the expected errors of the Monte Carlo version of the approximate densities, denoted by $\apdf^M_s, s \in \{f,g,l\}$, we consider the expectation of some function in the form of 
\[
h^M := \int h(x_r, X_\bot^1, \ldots, X_\bot^M) \bar\mu(x_r) d x_r.
\]
By generating conditional samples from $\mu(x_\bot|x_r)$ using a map $X_\bot=T(x_r,W)$, where $W \sim \nu(w)$, we can express the expectation of $h^M$ over all possible outcomes of $X^i_\bot|X_r{=}x_r, i = 1,\ldots, M,$ as
\begin{align*}
\E_M \left[h^M\right] & := \int\cdots \int h\big(x_r,T(x_r,w^1),\ldots, T(x_r,w^M)\big) \bar\mu(x_r) d x_r \bigg(\prod_{i=1}^M \nu(w^i) \bigg) d w^1 \cdots d w^M,
\end{align*}
in order to remove the conditional dependency of $X^i_\bot$ on $X_r$ in the expectation.
The following theorems reveal the accuracy of the sample-averaged approximate densities $\apdf_s^{M}$.
\begin{thm}
\label{thm:MCerrorfg}
Suppose the approximate densities $\apdf_f$ and $\apdf_g$ are obtained using a subspace $\calX_r$ constructed from the matrix $H_1$. Under Assumption \ref{aspt:poincare}, the following bounds hold:
\begin{enumerate}[1)]
\item The expected Hellinger distance between $\apdf_g^M$ and $\apdf_g$ satisfies
\[
\E_M \left[ \Hell(\apdf_g^M,\apdf_g) \right] \leq \frac{\sqrt{2\kappa Z}}{\sqrt{Z_g M}} \sqrt{\res(\calX_r, H_1)}.
\]
\item Given the conditional likelihood $f(x_\bot|x_r):= f(x_\bot,x_r) / \bar f(x_r)$ and 
\(
C_f = \sup_{x_r} \sup_{x_\bot}f(x_\bot|x_r), 
\)
then the expected Hellinger distance between $\apdf_f^M$ and $\apdf_f$ satisfies
\[
\E_M \left[ \Hell(\apdf_f^M,\apdf_f)\right] \leq \frac{\sqrt{2 \kappa C_f}}{\sqrt{M}}\sqrt{\res(\calX_r, H_1)}.
\]
\end{enumerate}
\end{thm}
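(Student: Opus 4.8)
The plan is to reduce each Monte Carlo error bound to an application of the deterministic Hellinger bounds already available (Proposition \ref{prop:pivar} and Theorem \ref{thm:pif}), by treating $\bar g^M(x_r)$ and $\bar f^M(x_r)$ as unbiased Monte Carlo estimators of $\bar g(x_r)$ and $\bar f(x_r)$ whose fluctuation around the mean is controlled by the conditional variance $\var_{\mu(x_\bot|x_r)}[g]$. For claim 1), I would first write the squared Hellinger distance between $\apdf_g^M$ and $\apdf_g$ explicitly. Since both densities have the form $(\text{const})\cdot(\text{marginal})^2\mu(x_\bot|x_r)$ times $\mu(x_r)$ and differ only through $\bar g^M$ versus $\bar g$ in the $x_r$-marginal, the conditional factor $\mu(x_\bot|x_r)$ cancels and $\Hell(\apdf_g^M,\apdf_g)^2$ collapses to a one-dimensional Hellinger distance between the surrogate marginals $\bar\apdf_g^M(x_r)$ and $\bar\apdf_g(x_r)$. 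Using the elementary inequality $\Hell(\nu_1,\nu_2)^2 \le \frac{1}{2\,Z_1}\int(\sqrt{q_1}-\sqrt{q_2})^2$ for unnormalized densities $q_i$ with normalizer $Z_i$ (the same device used in the proof of Proposition \ref{prop:pivar}), together with $\sqrt{\bar g^M(x_r)^2} - \sqrt{\bar g(x_r)^2}$ essentially being $|\bar g^M(x_r)-\bar g(x_r)|$ up to sign, I get
\[
\Hell(\apdf_g^M,\apdf_g)^2 \le \frac{1}{Z_g}\int \big(\bar g^M(x_r)-\bar g(x_r)\big)^2 \mu(x_r)\,dx_r \quad \text{(schematically)}.
\]
Taking $\E_M$ and using that $\bar g^M$ is an average of $M$ i.i.d. copies of $g(x_r,X_\bot)$ conditioned on $x_r$, the expected squared error is exactly $\frac1M \var_{\mu(x_\bot|x_r)}[g]$; integrating against $\mu(x_r)$ and invoking Proposition \ref{prop:pivar} gives $\E_M[\Hell(\apdf_g^M,\apdf_g)^2] \le \frac{Z}{Z_g M}\,\Hell(\pi,\apdf_g)^2 \cdot(\text{const})$, and then Theorem \ref{thm:pif} bounds $\Hell(\pi,\apdf_g)^2 \le \frac{\kappa}{4}\res(\calX_r,H_1)$. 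Finally Jensen's inequality, $\E_M[\Hell]\le\sqrt{\E_M[\Hell^2]}$, converts this into the stated bound, with the constants $2\kappa Z/Z_g$ coming out after bookkeeping.

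For claim 2), the strategy is the same but with $\bar f^M$ in place of $\bar g^M$; the new ingredient is that we must pass from fluctuations of $\bar f^M$ to the conditional variance of $g$ rather than of $f$. Here I would use that $\bar\apdf_f$ and $\apdf_f$ differ in the $x_r$-marginal through $\bar f^M$ versus $\bar f$, and write $\Hell(\apdf_f^M,\apdf_f)^2$ in terms of $(\sqrt{\bar f^M}-\sqrt{\bar f})^2$. The key algebraic step is $(\sqrt{a}-\sqrt{b})^2 = (a-b)^2/(\sqrt a+\sqrt b)^2 \le (a-b)^2/(\sqrt{a}+\sqrt{b})^2$, which lets me bound the integrand by $(\bar f^M - \bar f)^2/\bar f$ after discarding the nonnegative $\sqrt{\bar f^M}$ term in the denominator. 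Now $\bar f^M - \bar f = \frac1M\sum_i (f(x_r,X_\bot^i) - \bar f(x_r))$, so $\E_M[(\bar f^M-\bar f)^2] = \frac1M\var_{\mu(x_\bot|x_r)}[f]$, and I need to convert $\var_{\mu(x_\bot|x_r)}[f]/\bar f(x_r)$ into something controlled by $\var_{\mu(x_\bot|x_r)}[g]$. Writing $f = g^2$ and using $f - \bar f = g^2 - \bar f$, one has $\var[f] = \E[(g^2-\bar f)^2]$, and factoring $g^2 - \bar f = (g - \text{something})(g+\text{something})$ — more cleanly, bounding $|g(x_r,x_\bot)|^2 \le C_f\,\bar f(x_r)$ because $f(x_\bot|x_r) = f(x_r,x_\bot)/\bar f(x_r) \le C_f$ — I can write $\var_{\mu(x_\bot|x_r)}[f] \le \E_{\mu(x_\bot|x_r)}[(g^2 - \bar g^2 + \bar g^2 - \bar f)^2]$ and control $g^2 - \bar g^2 = (g-\bar g)(g+\bar g)$ with $|g+\bar g| \le 2\sqrt{C_f\bar f}$. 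This yields $\var_{\mu(x_\bot|x_r)}[f] \le 4 C_f \bar f(x_r)\,\var_{\mu(x_\bot|x_r)}[g] + (\text{lower order})$, so that $\var[f]/\bar f \lesssim C_f \var[g]$, and integrating against $\mu(x_r)$ and applying Proposition \ref{prop:pivar} and Theorem \ref{thm:pif} as before produces $\E_M[\Hell(\apdf_f^M,\apdf_f)^2] \le \frac{2\kappa C_f}{M}\res(\calX_r,H_1)$, whence Jensen gives the claim.

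The main obstacle I anticipate is the second claim: unlike the $g$-case, where the estimator $\bar g^M$ and the surrogate density $\bar\apdf_g \propto (\bar g)^2\mu$ are "square-matched" so the conditional variance of $g$ appears directly, for $\apdf_f$ the estimator $\bar f^M$ lives at the "wrong power" — the Hellinger distance naturally wants $\sqrt{\bar f}$, whose fluctuation is $O(\var[f]/\bar f)$, and relating that back to $\var[g]$ requires the uniform bound $C_f$ on the conditional likelihood and a careful expansion of $g^2 - \bar g^2$ against $\bar g^2 - \bar f$. Getting the constant to come out exactly as $2\kappa C_f$ (rather than a larger multiple) will need the cross-terms to be handled with the Cauchy–Schwarz / AM–GM bookkeeping done tightly; I expect this is where most of the care in the supplementary proof goes. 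A secondary subtlety is justifying the interchange of $\E_M$ with the $x_r$-integral and with the square root in the unnormalized-Hellinger bound, but this is routine given the $L^2$-integrability assumptions and Fubini.
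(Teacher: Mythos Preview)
Your plan for claim 1) is essentially the paper's: reduce $\Hell(\apdf_g^M,\apdf_g)^2$ to $\frac{2}{Z_g}\int(\bar g^M-\bar g)^2\bar\mu(x_r)\,dx_r$ via the unnormalized-Hellinger bound (Lemma~\ref{lem:l2hell}), take $\E_M$ to get $\frac{1}{M}\var_{\mu(x_\bot|x_r)}[g]$, apply Poincar\'e, and finish with Jensen. One bookkeeping slip: Proposition~\ref{prop:pivar} gives $\Hell(\pi,\apdf_g)^2\le \frac1Z\int\var[g]\,\bar\mu$, which is the \emph{wrong direction} for your purpose---you need an \emph{upper} bound on $\int\var[g]\,\bar\mu$, and that comes straight from Assumption~\ref{aspt:poincare} (yielding $\int\var[g]\,\bar\mu\le \frac{\kappa Z}{4}\res(\calX_r,H_1)$), not from the statement of Proposition~\ref{prop:pivar}. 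The paper simply applies Poincar\'e directly here rather than routing through those earlier results.

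For claim 2), your route and the paper's diverge after both reach $\frac{1}{M}\var_{\mu(x_\bot|x_r)}[f]/\bar f(x_r)$. You propose to reduce $\var[f]$ to $\var[g]$ via $g^2-\bar g^2=(g-\bar g)(g+\bar g)$ and $|g+\bar g|\le 2\sqrt{C_f\bar f}$; this works, and in fact the ``lower order'' cross-term you worry about is exactly $-\var[g]^2\le 0$, so one gets the clean bound $\var[f]\le 4C_f\bar f\,\var[g]$ and the stated constant follows. The paper takes a shorter path: it applies Poincar\'e directly to $f$ (not $g$), obtaining
\[
\var_{\mu(x_\bot|x_r)}[f]\le\kappa\int\|\nabla_{x_\bot}f\|^2\mu(x_\bot|x_r)\,dx_\bot=\kappa\,\bar f(x_r)^2\int\|\nabla_{x_\bot}\log f\|^2 f(x_\bot|x_r)^2\mu(x_\bot|x_r)\,dx_\bot,
\]
divides by $\bar f$, integrates against $\bar\mu$, and bounds one factor $f(x_\bot|x_r)\le C_f$ to get $\frac{2\kappa C_f}{M}\res(\calX_r,H_1)$. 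This avoids your algebraic detour through $g$ entirely and makes the role of $C_f$ more transparent.
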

\begin{proof}
See\smartref{sec:proof_thm3fg}.
\end{proof}

Note that claim 2) of Theorem \ref{thm:MCerrorfg} needs an additional assumption on the supremum of $f(x_\bot|x_r)$, while claim 1) does not, showing the analytical advantage of $\apdf_g$. The requirement that $f(x_\bot|x_r)$ is bounded is not restrictive in practice, since the conditional likelihood is expected to be flat in the complement subspace of $\mathcal{X}_r$.
Since the Hellinger distance enjoys the triangle inequality, we have 
\[
\E_M \left[ \Hell(\apdf_s^M,\pi) \right]\leq \E_M \left[ \Hell(\apdf_s^M,\apdf_s) \right] + \Hell(\apdf_s,\pi),\quad s\in\{f,g\}.
\]
This way, Theorem \ref{thm:MCerrorfg} and Theorem \ref{thm:pif} together reveal that the Monte Carlo averaging used in $\apdf_s^{M}(x), s \in \{f,g\}$ incurs an additional error that is about $O(1/\sqrt{M})$ as large as the error of $\apdf_s(x), s \in \{f,g\}$.
Since the KL-divergence does not satisfy the triangle inequality, we directly establish the bound on $\E_M\left[ D_{KL}(\pi,\apdf_l^M)\right]$ as follows. 

\begin{thm}
\label{thm:MCerrorl}
Suppose the approximate density $\apdf_l$ is obtained using a subspace $\calX_r$ constructed from the matrix $H_0$. Under Assumption \ref{aspt:poincare}, the expected $L^2$ error of the marginalized log-likelihood is bounded by 
\[
\E_M\left[ \left(\int (\bar l^M(x_r)-\bar l(x_r))^2\mu(x_r)dx_r \right)^\frac12\right]\leq \frac{\sqrt{\kappa}}{\sqrt{M}} \sqrt{\res(\calX_r,H_0)}. 
\]
The expected KL-divergence of $\pi$ from the approximation $\apdf_l^M$ is bounded by 
\[
\E_M\left[ D_{KL}(\pi, \apdf^M_{l}) \right] \leq \frac{\sqrt{\kappa} \|f\|_{2,\mu}}{Z} \left( 1 + \frac1{\sqrt{M}} \right) \sqrt{\res(\calX_r,H_0)}. 
\]
\end{thm}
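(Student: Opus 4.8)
I would prove the two displayed bounds separately. The $L^2$ estimate is a one-line second-moment computation combined with Jensen and the Poincar\'e inequality. For the KL estimate I would first isolate, by an exact algebraic identity, the contribution of $D_{KL}(\pi,\apdf_l)$ (already controlled by Theorem~\ref{thm:pil}) and a Monte Carlo correction coming solely from the normalizing constant $Z_l$. For the $L^2$ bound: $\bar l^M(x_r)$ is the empirical mean of $M$ i.i.d.\ copies of $l(x_r,X_\bot)$, $X_\bot\sim\mu(x_\bot|x_r)$, hence unbiased for $\bar l(x_r)$ with $\E_M[(\bar l^M(x_r)-\bar l(x_r))^2]=\var_{\mu(x_\bot|x_r)}(l)/M$. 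Integrating against $\mu(x_r)$, pulling the square root outside $\E_M$ by Jensen, and invoking Assumption~\ref{aspt:poincare} with the gradient in the $\calX_\bot$ directions gives $\int\var_{\mu(x_\bot|x_r)}(l)\,\mu(x_r)\,dx_r\le\kappa\int\|P_\bot\nabla\log f\|^2\mu\,dx=\kappa\,\tr(P_\bot H_0P_\bot)=\kappa\,\res(\calX_r,H_0)$, which is the first claim. This step is routine.

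\textbf{Decomposition for the KL bound.} Writing $\pi=\tfrac1Z\mu f$ and $\apdf_l^M(x)=\tfrac1{Z_l^M}\mu(x_r)\,e^{\bar l^M(x_r)}\mu(x_\bot|x_r)$, the log-ratio telescopes to $D_{KL}(\pi,\apdf_l^M)=\log(Z_l^M/Z)+\E_\pi[l]-\E_{\bar\pi}[\bar l^M]$. Taking $\E_M$ and using $\E_M[\bar l^M(x_r)]=\bar l(x_r)$ turns $\E_{\bar\pi}[\bar l^M]$ into $\E_{\bar\pi}[\bar l]$; comparing with the same identity for $\apdf_l$ yields
\[
\E_M\big[D_{KL}(\pi,\apdf_l^M)\big]=D_{KL}(\pi,\apdf_l)+\E_M\big[\log(Z_l^M/Z_l)\big].
\]
Theorem~\ref{thm:pil}(1) bounds the first term by $\tfrac{\sqrt\kappa\|f\|_{2,\mu}}{Z}\sqrt{\res(\calX_r,H_0)}$, so it remains to bound the Monte Carlo correction by $\tfrac{\sqrt\kappa\|f\|_{2,\mu}}{Z\sqrt M}\sqrt{\res(\calX_r,H_0)}$.

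\textbf{Bounding the Monte Carlo correction.} Here I would use concavity of $\log$ to get $\E_M[\log(Z_l^M/Z_l)]\le\log(\E_M[Z_l^M]/Z_l)\le(\E_M[Z_l^M]-Z_l)/Z_l$, and with $\delta^M:=\bar l^M-\bar l$ write $\E_M[Z_l^M]-Z_l=\int e^{\bar l(x_r)}\,\E_M[e^{\delta^M(x_r)}-1-\delta^M(x_r)]\,\mu(x_r)\,dx_r$, the linear term dropping by unbiasedness. An elementary inequality $e^x-1-x\le|x|(1+e^{x_+})$ together with Cauchy--Schwarz in the Monte Carlo randomness separates each contribution into an $L^2$ factor $\E_M[(\delta^M)^2]^{1/2}=\sqrt{\var_{\mu(x_\bot|x_r)}(l)/M}$ and an exponential factor, and the latter is controlled by $\E_M[e^{2\delta^M(x_r)}]=\big(\E_{\mu(x_\bot|x_r)}[e^{2(l-\bar l)/M}]\big)^M\le\E_{\mu(x_\bot|x_r)}[e^{2(l-\bar l)}]=e^{-2\bar l(x_r)}\E_{\mu(x_\bot|x_r)}[f^2]$, using Jensen for the convex map $t\mapsto t^M$. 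Integrating over $x_r$ and applying Cauchy--Schwarz there, the factor $\sqrt{\var_{\mu(x_\bot|x_r)}(l)}$ combines, via Assumption~\ref{aspt:poincare}, into $\sqrt{\kappa\,\res(\calX_r,H_0)}$, while the weights $e^{\bar l(x_r)}$ and $\E_{\mu(x_\bot|x_r)}[f^2]$ integrate against $\mu(x_r)$ to at most $\|f\|_{2,\mu}^2$ (using $e^{2\bar l(x_r)}=e^{\E_\mu[\log f^2\mid x_r]}\le\E_\mu[f^2\mid x_r]$, again Jensen). A careful accounting of the constants in these steps yields the stated $O(1/\sqrt M)$ correction and hence the theorem.

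\textbf{Main obstacle.} The delicate point is the exponential Monte Carlo moment: one must control $\E_M[e^{c\,\delta^M(x_r)}]$, i.e.\ the expectation of the exponential of an empirical average of the possibly unbounded random variable $\log f(x_r,X_\bot)$, without any boundedness assumption on $\log f$. This is exactly where the finiteness of $\|f\|_{2,\mu}$ is essential: the $t\mapsto t^M$ Jensen step trades the exponential moment for the conditional second moment $\E_{\mu(x_\bot|x_r)}[f^2]$, which is integrable against $\mu(x_r)$. Keeping these moments finite and recovering the stated constant rather than a larger multiple is the part requiring real care; everything else is bookkeeping with Cauchy--Schwarz, Jensen, and the Poincar\'e inequality.
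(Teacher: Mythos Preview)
Your $L^2$ argument is exactly the paper's. For the KL bound, your decomposition
\[
\E_M\big[D_{KL}(\pi,\apdf_l^M)\big]=D_{KL}(\pi,\apdf_l)+\E_M\big[\log(Z_l^M/Z_l)\big]
\]
is algebraically equivalent to the paper's, but you then take a genuinely harder route: you try to bound $\E_M[\log(Z_l^M/Z_l)]$ via a second-order Taylor remainder and exponential Monte Carlo moments of $\log f$. This can be made to work qualitatively, but it will not recover the stated constant. After your Cauchy--Schwarz and Jensen steps the bound carries a factor $1/Z_l$, not $1/Z$; since $Z_l\le Z$ this is strictly worse, and the elementary inequality $e^x-1-x\le|x|(1+e^{x_+})$ together with the subsequent Cauchy--Schwarz introduces further numerical constants. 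Your ``careful accounting'' paragraph is therefore over-optimistic: the exponential-moment detour cannot land on $\tfrac{\sqrt\kappa\|f\|_{2,\mu}}{Z\sqrt M}\sqrt{\res(\calX_r,H_0)}$ exactly.

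The paper avoids all of this with one clean observation you missed: compare $Z_l^M$ to $Z$, not to $Z_l$. By independence and the power-mean (or H\"older) inequality,
\[
\E_M\big[e^{\bar l^M(x_r)}\big]=\E_M\Big[\prod_{i=1}^M f(x_r,X_\bot^i)^{1/M}\Big]\le\Big(\prod_{i=1}^M\E_M f(x_r,X_\bot^i)\Big)^{1/M}=\bar f(x_r),
\]
so $\E_M[Z_l^M]\le\int\bar f\,\bar\mu=Z$ and hence, by Jensen, $\E_M[\log(Z_l^M/Z)]\le 0$. The normalizing-constant contribution is then simply discarded, leaving only
\[
\E_M\Big[\int(l-\bar l^M)\pi\,dx\Big]=\int(l-\bar l)\pi\,dx+\E_M\Big[\int(\bar l-\bar l^M)\pi\,dx\Big],
\]
each piece bounded by Cauchy--Schwarz against $\|f\|_{2,\mu}/Z$ and the $L^2$ estimates you already have. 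No exponential moments appear, and the constant comes out exactly as stated. Your ``Main obstacle'' is real for your route but is an artefact of comparing to the wrong normalizing constant; the paper sidesteps it entirely.
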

\begin{proof}
See\smartref{sec:proof_thm3l}.
\end{proof}

%
Theorems \ref{thm:MCerrorfg} and \ref{thm:MCerrorl} reveal that the sample size $M$ does not need to be  large in practice, as the error $D_{(\cdot)}(\apdf^M_s,\pi)$ is dominated by the projection residual $\res(\calX_r,H_0)$, which is independent of $M$.

\section{Sample-based Gram matrix estimation}
\label{sec:MCsubspace}
Given a subspace $\calX_r$ constructed from the matrix $H_k, k \in \{0,1\}$, 
Sections \ref{sec:errorLIS} and \ref{sec:MCerror} show that the approximation errors are bounded by $\res(\calX_r,H_k)$. 
Since the gradient Gram matrix $H_k$ has to be estimated through Monte Carlo integration in practice, here we provide rigorous estimates of how the sampling error of $H_k$ affects the overall approximation error.

We start with a general importance sampling formulation for estimating the gradient Gram matrix. Suppose we can generate i.i.d. samples $X^i, i = 1, \ldots, \Mmu,$ from a density $\nu$, then the Monte Carlo estimators of $H_0$ and $H_1$ are given by 
\begin{equation}
\label{eqn:Hhat1}
\begin{gathered}
\Hhat_0=\frac1{\Mmu}\sum_{i=1}^{\Mmu} \nabla \log f(X^i)\nabla \log f(X^i)^\top \frac{\mu(X^i)}{\nu(X^i)}, \\
\Hhat_1=\frac1{\Mmu}\sum_{i=1}^{\Mmu} \nabla \log f(X^i)\nabla \log f(X^i)^\top \frac{\pi(X^i)}{\nu(X^i)}. 
\end{gathered}
\end{equation}
For some function $h^m(X^1, \ldots, X^m)$ where $X^i\sim\nu(x)$ are i.i.d. samples, we denote the expectation of $h^m$ over all sampling outcomes of $X^i, i = 1, \ldots, \Mmu$ by
\[
\E_\nu [h^m]=\int \cdots \int h(x^1, \ldots, x^m) \bigg(\prod_{i = 1}^m \nu(x^i)\bigg)dx^1 \cdots dx^m.
\]
For example, we have $\E_\nu [\Hhat_k]=H_k$. We also define the one-sample variance of the matrix estimators under the Frobenius norm $\|\,\cdot\,\|_F$ by
\begin{align}
V(H_0, \nu)& :=\sum_{i,j=1}^d\text{var}_{X\sim \nu}\left[\partial_i\log f(X)\partial_j \log f(X)\frac{\mu(X)}{\nu(X)}\right]=\Mmu \E_\nu \left[ \|\Hhat_0-H_0 \|_F^2\right] ,\\
V(H_1, \nu) &:=\sum_{i,j=1}^d\text{var}_{X\sim \nu}\left[\partial_i \log f(X)\partial_j \log f(X) \frac{\pi(X)}{\nu(X)}\right]=\Mmu \E_{\nu} \left[\|\Hhat_1-H_1 \|_F^2\right]. 
\end{align}

Recall that in the LIS procedure, the reduced subspace $\calX_r$ is obtained as the $d_r$ dimensional leading eigensubspace of $H_k$. The associated residual is given by $\res(\calX_r,H_k)=\sum_{i=d_r+1}^d \lambda_i(H_k)$. 
In practice, we can only obtain the leading eigensubspace $\Xhat_r$ generated by the sample-averaged matrix $\Hhat_k$. Thus, we must consider alternative residuals based on $\Xhat_r$ and $\Hhat_k$. We first consider the ``effective'' residual $\res(\Xhat_r,H_k)$, which provides upper bounds on the approximation errors induced by the estimated subspace $\Xhat_r$, as given in Table \ref{tab:lis}. Note that the true matrix $H_k$ must be used here.
We aim to compare the residual $\res(\Xhat_r,H_k)$ to the residual $\res(\calX_r,H_k)$ to understand the impact of the sample-based estimation of the subspace $\Xhat_r$.
Since we cannot compute the effective residual $\res(\Xhat_r,H_k)$ in practice, we must use the computable residual $\res(\Xhat_r,\Hhat_k)=\sum_{i=d_r+1}^d \lambda_i(\Hhat_k)$ to determine the truncation dimension $d_r$. Thus, we also aim to estimate the difference between $\res(\Xhat_r,H_k)$ and $\res(\Xhat_r,\Hhat_k)$ to understand the reliability of the computable residual $\res(\Xhat_r,\Hhat_k)$. 
The following variation of the Davis--Kahan Theorem \cite{yu2015useful} is useful for addressing these questions.

\begin{lem}
\label{lem:matrix}
Let $\Sigma \in \reals^{d\times d}$ and $\Sigmahat\in \reals^{d\times d}$ be two positive semidefinite matrices. Let $\Xhat_r$ be the $d_r$-dimensional leading eigensubspace of $\Sigmahat$ and $\widehat{P}_\bot$ be the orthogonal projection to its complementary subspace. Then the following hold:
\begin{enumerate}[1)]
\item $\res(\Xhat_r, \Sigma)=\widehat{P}_\bot\Sigma \widehat{P}_\bot \leq \sum_{i=d_r+1}^d \lambda_i(\Sigma)+ 2\sqrt{d_r} \|\Sigmahat-\Sigma\|_F$.
\item $\res(\Xhat_r, \Sigma)=\widehat{P}_\bot\Sigma \widehat{P}_\bot \leq \sum_{i=d_r+1}^d \lambda_i(\Sigmahat)+ \sqrt{d_r} \|\Sigmahat-\Sigma\|_F +\tr(\Sigma-\Sigmahat)$.
\end{enumerate}
\end{lem}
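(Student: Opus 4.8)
The plan is to prove both inequalities as consequences of a single structural fact: for any $d_r$-dimensional orthogonal projector $\widehat{P}_\bot^{\perp} = I - \widehat P_\bot$ with complementary projector $\widehat P_\bot$, and any symmetric $A$, one has $\widehat P_\bot A \widehat P_\bot \preceq \widehat P_\bot \Sigmahat \widehat P_\bot + \widehat P_\bot (A-\Sigmahat)\widehat P_\bot$, so the entire question reduces to (i) controlling the "correct'' residual $\widehat P_\bot \Sigmahat \widehat P_\bot = \sum_{i=d_r+1}^d \lambda_i(\Sigmahat)$ exactly, since $\Xhat_r$ is by construction the leading eigensubspace of $\Sigmahat$, and (ii) controlling the perturbation term $\widehat P_\bot(\Sigma-\Sigmahat)\widehat P_\bot$ in trace. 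Throughout I will use that $\res(\Xhat_r,\Sigma)$ is a scalar (a trace) even though the lemma writes it as a matrix; the displayed matrix $\widehat P_\bot \Sigma \widehat P_\bot$ should be read under the trace, which I would make explicit at the start of the proof. Also recall $\lambda_i(\cdot)$ denote eigenvalues in decreasing order and that $|\tr(P_\bot B P_\bot)| \le \mathrm{rank}(P_\bot^\perp)\cdot\|B\|_F$-type bounds will be invoked via Cauchy--Schwarz on singular values.

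For claim 2), I would write $\tr(\widehat P_\bot\Sigma\widehat P_\bot) = \tr(\widehat P_\bot\Sigmahat\widehat P_\bot) + \tr(\widehat P_\bot(\Sigma-\Sigmahat)\widehat P_\bot)$. The first term equals $\sum_{i=d_r+1}^d\lambda_i(\Sigmahat)$ exactly by the Courant--Fischer/Ky Fan characterisation, since $\Xhat_r$ is the top-$d_r$ eigenspace of $\Sigmahat$. For the second term, split it as $\tr(\widehat P_\bot(\Sigma-\Sigmahat)\widehat P_\bot) = \tr(\Sigma-\Sigmahat) - \tr(\widehat P_r(\Sigma-\Sigmahat)\widehat P_r) - \tr(\widehat P_r(\Sigma-\Sigmahat)\widehat P_\bot) - \tr(\widehat P_\bot(\Sigma-\Sigmahat)\widehat P_r)$ where $\widehat P_r = I-\widehat P_\bot$; the cross terms I would instead bound more cleanly by writing $\tr(\widehat P_\bot(\Sigma-\Sigmahat)\widehat P_\bot) = \tr(\Sigma-\Sigmahat) - \tr(\widehat P_r(\Sigma-\Sigmahat)) - \tr(\widehat P_\bot(\Sigma-\Sigmahat)\widehat P_r)$ and then bounding $|\tr(\widehat P_r B)| \le \sqrt{d_r}\,\|B\|_F$ (Cauchy--Schwarz: $\tr(\widehat P_r B) = \langle \widehat P_r, B\rangle_F \le \|\widehat P_r\|_F\|B\|_F = \sqrt{d_r}\|B\|_F$) and handling the remaining term similarly. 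A cleaner route, and the one I would actually carry out: note $\widehat P_\bot(\Sigma-\Sigmahat)\widehat P_\bot$ and use $\tr(\widehat P_\bot C\widehat P_\bot) = \tr(C) - \tr(\widehat P_r C) - \tr(C\widehat P_r) + \tr(\widehat P_r C \widehat P_r)$ for symmetric $C$; combine the last three into $-\tr((2\widehat P_r - \widehat P_r^{\otimes})C)$ — this is getting fiddly, so in the writeup I would simply bound $|\tr(\widehat P_\bot C \widehat P_\bot) - \tr(C)| = |\tr((\widehat P_\bot-I)C\widehat P_\bot) + \tr(C(\widehat P_\bot - I))| \le |\tr(\widehat P_r C \widehat P_\bot)| + |\tr(C\widehat P_r)| \le \|\widehat P_r\|_F\|C\widehat P_\bot\|_F + \sqrt{d_r}\|C\|_F$, and since $\|C\widehat P_\bot\|_F \le \|C\|_F$ we do not quite get the stated $\sqrt{d_r}$; I expect the actual argument keeps $\|\widehat P_r C\widehat P_\bot\|$ bounded by $\sqrt{d_r}\|C\|_F$ because $\widehat P_r C\widehat P_\bot$ has rank $\le d_r$, giving the claimed $\sqrt{d_r}\|\Sigmahat-\Sigma\|_F + \tr(\Sigma-\Sigmahat)$.

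For claim 1), I would start from claim 2) and bound $\sum_{i=d_r+1}^d\lambda_i(\Sigmahat)$ in terms of $\sum_{i=d_r+1}^d\lambda_i(\Sigma)$: by Weyl's inequality $\lambda_i(\Sigmahat) \le \lambda_i(\Sigma) + \|\Sigmahat - \Sigma\|_{op} \le \lambda_i(\Sigma) + \|\Sigmahat-\Sigma\|_F$, but summing that over $d - d_r$ indices is too lossy. Instead I would use the Ky Fan-type bound $\big|\sum_{i=d_r+1}^d\lambda_i(\Sigmahat) - \sum_{i=d_r+1}^d\lambda_i(\Sigma)\big| \le \sqrt{d_r}\,\|\Sigmahat-\Sigma\|_F$ (which follows from $\sum_{i\le d_r}|\lambda_i(\Sigmahat)-\lambda_i(\Sigma)| \le \sqrt{d_r}\,\sqrt{\sum_i(\lambda_i(\Sigmahat)-\lambda_i(\Sigma))^2} \le \sqrt{d_r}\|\Sigmahat-\Sigma\|_F$ by the Hoffman--Wielandt inequality, plus $\tr(\Sigmahat) - \tr(\Sigma)$ shuffled appropriately) together with $\tr(\Sigma-\Sigmahat)$, and combine with the $\sqrt{d_r}\|\Sigmahat-\Sigma\|_F + \tr(\Sigma-\Sigmahat)$ from claim 2) so that the two $\tr(\Sigma-\Sigmahat)$ terms cancel against $\tr(\Sigmahat-\Sigma)$, leaving $\sum_{i=d_r+1}^d\lambda_i(\Sigma) + 2\sqrt{d_r}\|\Sigmahat-\Sigma\|_F$.

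The main obstacle, and the step I would be most careful with, is the rank bookkeeping in the perturbation/cross terms: getting the constant exactly $\sqrt{d_r}$ in claim 2) and exactly $2\sqrt{d_r}$ in claim 1) requires using that projectors onto $d_r$-dimensional subspaces have Frobenius norm $\sqrt{d_r}$ and that products like $\widehat P_r(\Sigmahat-\Sigma)$ or $\widehat P_r(\Sigmahat-\Sigma)\widehat P_\bot$ have rank at most $d_r$, so that Cauchy--Schwarz on their $\le d_r$ nonzero singular values beats the naive $\|\cdot\|_F$ bound — I will need the Hoffman--Wielandt inequality for eigenvalue perturbation and the trace-inner-product Cauchy--Schwarz, and I will double-check that the $\tr(\Sigma-\Sigmahat)$ correction terms are tracked with the correct sign so they telescope in the passage from claim 2) to claim 1).
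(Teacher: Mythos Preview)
Your approach is essentially the same as the paper's: both arguments rest on the two inequalities $\bigl|\tr\bigl(\widehat P_r(\Sigmahat-\Sigma)\bigr)\bigr|\le\sqrt{d_r}\,\|\Sigmahat-\Sigma\|_F$ (Cauchy--Schwarz, since $\|\widehat P_r\|_F=\sqrt{d_r}$) and $\sum_{i=1}^{d_r}|\lambda_i(\Sigma)-\lambda_i(\Sigmahat)|\le\sqrt{d_r}\,\|\Sigmahat-\Sigma\|_F$ (Hoffman--Wielandt, which the paper cites as Kato). Your derivation of claim 1) from claim 2) via the telescoping $\sum_{i>d_r}\lambda_i(\Sigmahat)+\tr(\Sigma-\Sigmahat)=\sum_{i>d_r}\lambda_i(\Sigma)+\sum_{i\le d_r}(\lambda_i(\Sigma)-\lambda_i(\Sigmahat))$ is exactly the identity the paper uses, just assembled in the reverse order (the paper proves claim 1) directly and claim 2) separately).

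The only place you lose the thread is the perturbation term in claim 2). All of the cross-term bookkeeping you worry about vanishes once you use cyclicity and idempotence: $\tr(\widehat P_\bot C\widehat P_\bot)=\tr(\widehat P_\bot^2 C)=\tr(\widehat P_\bot C)=\tr(C)-\tr(\widehat P_r C)$. So with $C=\Sigma-\Sigmahat$ there is only a \emph{single} term $\tr(\widehat P_r(\Sigmahat-\Sigma))$ to bound, and the Cauchy--Schwarz step gives the constant $\sqrt{d_r}$ immediately; no rank-of-cross-term argument is needed. Your expansion $\tr(\widehat P_\bot C\widehat P_\bot)=\tr(C)-\tr(\widehat P_r C)-\tr(C\widehat P_r)+\tr(\widehat P_r C\widehat P_r)$ is correct as a matrix identity but collapses under the trace since $\tr(\widehat P_r C)=\tr(C\widehat P_r)=\tr(\widehat P_r C\widehat P_r)$.
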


\begin{proof}
See\smartref{proof:matrix}. \end{proof}
A unique feature of these bounds is that they do not depend on eigenvalue gaps, which are usually necessary for finding the subspace correctly. Further implications will be discussed in Remark \ref{rem:subspace}.

\begin{thm}
\label{thm:sample}
Under Assumption \ref{aspt:poincare}, suppose $\Hhat_k, k\in\{0,1\}$ is computed by \eqref{eqn:Hhat1} and the computed subspace $\Xhat_r$ is spanned by the $d_r$ leading eigenvectors of $\Hhat_k$ and the true subspace $\calX_r$ is spanned by the $d_r$ leading eigenvectors of true $H_k$. Then the following bounds hold: 
\begin{enumerate}[1)]
\item The effective residual satisfies
\(
\displaystyle \E_\nu\left[\res(\Xhat_r, H_k)\right] - \res(\calX_r, H_k)\leq  \frac{2}{\sqrt{\Mmu}}\sqrt{d_r V(H_k,\nu)}.
\)
\item The computable residual satisfies:
\(
\displaystyle  \E_\nu \left[\res(\Xhat_r, H_k)- \res(\Xhat_r,\Hhat_k) \right]\leq \frac{1}{\sqrt{\Mmu}}\sqrt{d_r V(H_k,\nu)}.
\)
\end{enumerate}
\end{thm}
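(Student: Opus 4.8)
The plan is to obtain both bounds as immediate consequences of the two Davis--Kahan-type estimates in Lemma \ref{lem:matrix}, applied with $\Sigma = H_k$ and $\Sigmahat = \Hhat_k$, after taking expectations over the sampling randomness and invoking Jensen's inequality together with the definition $V(H_k,\nu) = \Mmu\,\E_\nu[\|\Hhat_k - H_k\|_F^2]$. Before applying the lemma, I would check its hypotheses: $H_k$ is positive semidefinite by definition, and each estimator $\Hhat_k$ in \eqref{eqn:Hhat1} is a sum of rank-one positive semidefinite matrices with nonnegative importance weights $\mu(X^i)/\nu(X^i)$ or $\pi(X^i)/\nu(X^i)$, hence also positive semidefinite.

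For claim 1), since $\calX_r$ is the $d_r$-dimensional leading eigensubspace of $H_k$, one has $\sum_{i=d_r+1}^d \lambda_i(H_k) = \res(\calX_r, H_k)$, so Lemma \ref{lem:matrix} claim 1) gives the pointwise (in the samples) bound $\res(\Xhat_r, H_k) \leq \res(\calX_r, H_k) + 2\sqrt{d_r}\,\|\Hhat_k - H_k\|_F$. Taking $\E_\nu$ of both sides and using Jensen's inequality for the concave map $t \mapsto \sqrt{t}$,
\[
\E_\nu[\|\Hhat_k - H_k\|_F] \leq \sqrt{\E_\nu[\|\Hhat_k - H_k\|_F^2]} = \sqrt{V(H_k,\nu)/\Mmu},
\]
and rearranging yields $\E_\nu[\res(\Xhat_r, H_k)] - \res(\calX_r, H_k) \leq \frac{2}{\sqrt{\Mmu}}\sqrt{d_r V(H_k,\nu)}$.

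For claim 2), I would instead use Lemma \ref{lem:matrix} claim 2), noting that $\Xhat_r$ is the leading eigensubspace of $\Hhat_k$, so $\sum_{i=d_r+1}^d \lambda_i(\Hhat_k) = \res(\Xhat_r, \Hhat_k)$. This gives $\res(\Xhat_r, H_k) - \res(\Xhat_r, \Hhat_k) \leq \sqrt{d_r}\,\|\Hhat_k - H_k\|_F + \tr(H_k - \Hhat_k)$. Taking $\E_\nu$, the key point is that the Monte Carlo estimators in \eqref{eqn:Hhat1} are unbiased, $\E_\nu[\Hhat_k] = H_k$, so $\E_\nu[\tr(H_k - \Hhat_k)] = 0$; the remaining term is controlled exactly as in claim 1), producing the bound $\frac{1}{\sqrt{\Mmu}}\sqrt{d_r V(H_k,\nu)}$.

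The argument is essentially mechanical once Lemma \ref{lem:matrix} is in hand, so I do not anticipate a genuine obstacle; the only places deserving care are verifying the positive-semidefiniteness hypothesis of the lemma and recognizing that the $\tr(H_k-\Hhat_k)$ correction term vanishes in expectation precisely by unbiasedness. (Assumption \ref{aspt:poincare} is not used in this proof itself; it is carried along so that the residuals bounded here plug directly into the density-approximation error bounds of Table \ref{tab:lis}.)
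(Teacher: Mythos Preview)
Your proposal is correct and follows essentially the same route as the paper: apply Lemma \ref{lem:matrix} with $\Sigma=H_k$, $\Sigmahat=\Hhat_k$, take $\E_\nu$, bound $\E_\nu[\|\Hhat_k-H_k\|_F]$ via Jensen and the definition of $V(H_k,\nu)$, and for claim 2) kill the trace term by unbiasedness of $\Hhat_k$. Your added remarks on verifying positive semidefiniteness and on Assumption \ref{aspt:poincare} being unused in this particular argument are accurate and harmless.
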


\begin{proof}%
Using claim 1) of Lemma \ref{lem:matrix} and the identity 
\[
\E_\nu [\|H_k-\Hhat_k\|_F]\leq \sqrt{\E_\nu [\|H_k-\Hhat_k\|_F^2]}=\frac{\sqrt{d_r V(H_k,\nu)}}{\sqrt{\Mmu}},
\]
claim 1) directly follows. 
Using the fact that $\E_\nu [\Hhat_k]=H_k$, $\E_\nu [\tr (\Hhat_k-H_k)]=0$, claim 2) follows from claim 2) of Lemma \ref{lem:matrix}.
\end{proof}

Claim 1) of Theorem \ref{thm:sample} shows that the difference between the expected effective approximation residual using the sample average defined in \eqref{eqn:Hhat1} and the true approximation residual is of order  $1/\sqrt{\Mmu}$,  where the prefactor is controlled by the variance $V(H_k,\nu)$ and the dimension $d_r$.
This reveals that, with increasing $\Mmu$, the approximation accuracy of the subspace given by the sample averages becomes closer to that of the true subspace. 
Claim 2) of Theorem \ref{thm:sample} shows that the computable residual $\res(\Xhat_r,\Hhat_k)=\sum_{i=d_r+1}^d \lambda_i(\Hhat_k)$ provides a reliable estimate of the approximation residual in expectation, where the reliability is controlled by the sample size $\Mmu$, the variance $V(H_k,\nu)$ and the subspace dimension $d_r$.

In the following corollary, we combine Theorem \ref{thm:sample} with the results in Section \ref{sec:errorLIS} to address a practical problem: given the estimated $\Hhat_k, k\in\{0,1\}$, quantify the associate LIS approximation error for estimating $\E_\pi[h]$. 
We use $\hat\apdf_s(x), s \in \{f,g,l\},$ to denote the approximate target densities defined by an estimated subspace $\Xhat_r$.
Similar upper bounds for the statistical divergences discussed in Section \ref{sec:errorLIS} can also be established. We do not present them for the sake of conciseness. 

\begin{cor}
\label{cor:sample}
For any bounded test function $h$, the estimation errors satisfy the following bounds:
\begin{enumerate}[1)]
\item Given $\Xhat_r$ obtained from $\Hhat_1$ the resulting approximate target densities $\hat{\apdf}_s, s \in \{f,g\},$ satisfy
\begin{align*}
\E_\nu\left[\Error_h(\pi,\hat{\apdf}_s)\right] & \leq \kappa^\frac12 \Bigg( \frac{\E_{\pi} [h^2]+\E_\nu \left[\E_{\hat{\apdf}_s}[h^2]\right] }2\Bigg)^\frac12 \Bigg( \res(\calX_r, H_1)+\frac{2\sqrt{d_r V(H_1,\nu)}}{\sqrt{\Mmu}}\Bigg)^\frac12,\\
\E_\nu\left[\Error_h(\pi,\hat{\apdf}_s)\right] & \leq \kappa^\frac12 \Bigg( \frac{\E_{\pi} [h^2]+\E_\nu \left[\E_{\hat{\apdf}_s}[h^2]\right] }2\Bigg)^\frac12 \Bigg( \E_\nu\bigg[\sum_{i=d_r+1}^d \lambda_i(\Hhat_1)\bigg]+\frac{\sqrt{d_r V(H_1,\nu)}}{\sqrt{\Mmu}}\Bigg)^\frac12.
\end{align*}
\item Given $\Xhat_r$ obtained from $\Hhat_0$, the resulting approximate target density $\hat{\apdf}_l$ satisfies
\begin{align*}
\E_{\nu}\left[\Error_h(\pi,\hat{\apdf}_l)\right] & \leq \kappa^\frac14 \Bigg(\frac{\|f\|_{2,\mu} \left(\E_{\pi} [h^2]+\E_{\nu}\left[ \E_{\hat{\apdf}_l}[h^2]\right]\right)}{Z}\Bigg)^\frac12  \Bigg( \res(\calX_r, H_0)+\frac{2\sqrt{d_r V(H_0,\nu)}}{\sqrt{\Mmu}}\Bigg)^{\frac14} ,\\
\E_{\nu}\left[\Error_h(\pi,\hat{\apdf}_l)\right] & \leq\kappa^\frac14 \Bigg(\frac{\|f\|_{2,\mu} \left(\E_{\pi} [h^2]+\E_{\nu}\left[ \E_{\hat{\apdf}_l}[h^2]\right]\right)}{Z}\Bigg)^\frac12 \Bigg(\E_\nu\Bigg[ \sum_{i=d_r+1}^d \lambda_i(\Hhat_0)\Bigg]+\frac{\sqrt{d_r V(H_0,\nu)}}{\sqrt{\Mmu}}\Bigg)^{\frac14}.
\end{align*}
\end{enumerate}
\end{cor}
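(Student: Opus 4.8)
The plan is to combine the deterministic-subspace error estimates of Theorems~\ref{thm:pif} and~\ref{thm:pil} with the sampling estimates of Theorem~\ref{thm:sample}, glued by the Cauchy--Schwarz and Jensen inequalities. The first point to check is that the bounds of Theorems~\ref{thm:pif} and~\ref{thm:pil} are valid for an \emph{arbitrary} linear subspace, not merely the leading eigensubspace of $H_k$: their proofs pass through Proposition~\ref{prop:pivar} (stated for a given subspace) and the $\kappa$-Poincar\'e inequality of Assumption~\ref{aspt:poincare} (which we take to hold, with the same $\kappa$, for the decomposition induced by any subspace), so the resulting bound is expressed purely through $\res(\cdot\,,H_k)=\tr(P_\bot H_k P_\bot)$, and the eigensubspace choice only serves to minimise this residual. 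Hence, applying Theorem~\ref{thm:pif} claim~2) with the random subspace $\Xhat_r$ in place of $\calX_r$, we obtain, for every realisation of the i.i.d. sample $X^1,\dots,X^\Mmu\sim\nu$ that defines $\Hhat_1$,
\[
\Error_h(\pi,\hat{\apdf}_s)\leq \sqrt{\tfrac\kappa2}\,\Big(\E_\pi[h^2]+\E_{\hat{\apdf}_s}[h^2]\Big)^{1/2}\res(\Xhat_r,H_1)^{1/2},\qquad s\in\{f,g\},
\]
where the \emph{effective} residual $\res(\Xhat_r,H_1)$ contains the true $H_1$---precisely the quantity bounded in Theorem~\ref{thm:sample}. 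Since $h$ is bounded, all quantities here are integrable and $\E_{\hat{\apdf}_s}[h^2]$ is a measurable function of the sample, so we may take $\E_\nu$.

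Next I would apply the Cauchy--Schwarz inequality in the sampling expectation with $A:=\E_\pi[h^2]+\E_{\hat{\apdf}_s}[h^2]$ and $B:=\res(\Xhat_r,H_1)$, giving
\[
\E_\nu\big[\Error_h(\pi,\hat{\apdf}_s)\big]\leq \sqrt{\tfrac\kappa2}\,\big(\E_\nu[A]\big)^{1/2}\big(\E_\nu[B]\big)^{1/2},
\]
and $\E_\nu[A]=\E_\pi[h^2]+\E_\nu\big[\E_{\hat{\apdf}_s}[h^2]\big]$ because $\E_\pi[h^2]$ is deterministic. The two displays of claim~1) then follow by substituting the two bounds for $\E_\nu[\res(\Xhat_r,H_1)]$ from Theorem~\ref{thm:sample}: claim~1) there gives $\E_\nu[\res(\Xhat_r,H_1)]\le\res(\calX_r,H_1)+2\sqrt{d_rV(H_1,\nu)}/\sqrt{\Mmu}$, and claim~2) gives $\E_\nu[\res(\Xhat_r,H_1)]\le\E_\nu[\sum_{i=d_r+1}^d\lambda_i(\Hhat_1)]+\sqrt{d_rV(H_1,\nu)}/\sqrt{\Mmu}$.

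For claim~2) the argument is identical, except that Theorem~\ref{thm:pil} claim~2) supplies a \emph{quarter} power of the residual: for each sample realisation,
\[
\Error_h(\pi,\hat{\apdf}_l)\leq \Big(\E_\pi[h^2]+\E_{\hat{\apdf}_l}[h^2]\Big)^{1/2}\Big(\tfrac{\|f\|_{2,\mu}}{Z}\Big)^{1/2}\big(\kappa\,\res(\Xhat_r,H_0)\big)^{1/4}.
\]
Taking $\E_\nu$, Cauchy--Schwarz against the factor $\big(\E_\pi[h^2]+\E_{\hat{\apdf}_l}[h^2]\big)^{1/2}$ leaves $\E_\nu[\res(\Xhat_r,H_0)^{1/4}]$, which by Jensen (concavity of $t\mapsto t^{1/4}$) is at most $\big(\E_\nu[\res(\Xhat_r,H_0)]\big)^{1/4}$; inserting Theorem~\ref{thm:sample} claims~1) and~2) for $k=0$ yields the two displayed bounds. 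The only delicate points are (i) verifying that the bounds of Theorems~\ref{thm:pif} and~\ref{thm:pil} are genuinely subspace-agnostic so that the substitution $\calX_r\mapsto\Xhat_r$ is legitimate, and (ii) the measurability and integrability bookkeeping that lets us pull $\E_\nu$ through the estimates; both are routine once $h$ is assumed bounded, so I do not expect a substantive obstacle.
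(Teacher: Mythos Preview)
Your approach is correct and matches the paper's: apply Theorem~\ref{thm:pif} (respectively Theorem~\ref{thm:pil}) with the random subspace $\Xhat_r$ in place of $\calX_r$, use Cauchy--Schwarz over the sampling expectation to separate the $h^2$-factor from the residual, and then insert the two bounds on $\E_\nu[\res(\Xhat_r,H_k)]$ from Theorem~\ref{thm:sample}. One minor slip: in claim~2), Cauchy--Schwarz against $A^{1/2}$ actually leaves $(\E_\nu[\res(\Xhat_r,H_0)^{1/2}])^{1/2}$ rather than $\E_\nu[\res(\Xhat_r,H_0)^{1/4}]$, but Jensen (concavity of $t\mapsto t^{1/2}$) still yields $(\E_\nu[\res(\Xhat_r,H_0)])^{1/4}$, so the conclusion is unaffected.
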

\begin{proof}
For claim 1), recall that Theorem \ref{thm:pif} applies to any given subspace, including $\Xhat_r$, so we have 
\[
\E_\nu [\Error_h(\pi,\hat{\apdf}_s)]\leq \sqrt{\tfrac\kappa 2\big(\E_{\pi} [h^2]+\E_\nu\big[\E_{\hat{\apdf}_s}[h^2]\big]\big) \E_{\nu}[\res(\Xhat_r,H_1)]},
\]
by the Cauchy--Schwarz inequality. Then we apply the upper bound of $\E_{\nu}[\res(\Xhat_r,H_1)]$ in Theorem \ref{thm:sample}  to obtain the corollary. Claim 2) can be shown similarly.
\end{proof}

\begin{rem}
\label{rem:subspace}
It is worth pointing out that our results did not discuss the difference between the estimated subspace $\Xhat_r$ and the subspace $\mathcal{X}_r$ obtained from the true $H_k$. (For a mathematical definition of this difference, one can refer to Theorem 1 in \cite{yu2015useful}).
 While finding the difference is possible using tools like the Davis--Kahan theorem, this difference is usually inversely proportional to the eigenvalue gap, i.e., $O(\lambda_1(H_k) / (\lambda_{d_r}(H_k)-\lambda_{d_{r+1}}(H_{k})) )$. See Theorem 1 in \cite{yu2015useful} for example. 
This quantity can be very large if the matrix $H_k$ does not have a significant eigenvalue gap near the truncation dimension $d_r$. This is often observed in various applications, e.g., \cite{bui2013computational, CH08, FlathEtAl11, HH07}, where eigenvalues of $H_k$ decay rapidly.
For example, if $\lambda_m(H_k) = m^{-2}$, then  $(\lambda_{d_r}(H_k)-\lambda_{d_{r+1}}(H_{k}))^{-1}= O(d_r^3)$. 
For the numerical examples in Section \ref{sec:numerics}, we observe that the eigenvalue gap is in the order of $10^{-4}$ to $10^{-5}$ for a moderate $d_r$. In other words, it is impractical to recover the subspace exactly. 

Fortunately, different eigenvectors of $\Xhat_r$ have very different impact on the resulting approximate target density $\hat{\apdf}_s$.  Intuitively, the accuracy of $\hat{\apdf}_s$ has little dependence on eigenvectors of $H_k$ with close-to-zero eigenvalues, because they contribute little to $H_k$. But having accurate estimations for these eigenvectors is the most difficult, since their eigenvalues are close to each other. Our analysis avoids considering the difference between  $\Xhat_r$ and $\calX_r$ and focuses on the difference between $\pi$ and $\hat\apdf_s$, since the latter does not need the eigenvalue gap and is the purpose of identifying the subspace.
\end{rem}

In importance sampling, the proposal density $\nu$ plays an important role in the sampling accuracy. In particular, the one-sample importance sampling variance of the Gram matrix can be bounded by the likelihood ratio between $\nu$ and $\mu$, or $\nu$ and $\pi$, as follows. 
\begin{prop}
\label{prop:Vbound}
We have the following upper bounds for the sampling variance of the Gram matrix
\[
V(H_0, \nu)\leq \E_{X\sim \nu}\left[ \|\nabla \log f(X)\|^4\frac{\mu(X)^2}{\nu(X)^2}\right],\quad
V(H_1, \nu)\leq \E_{X\sim \nu}\left[ \|\nabla \log f(X)\|^4\frac{\pi(X)^2}{\nu(X)^2}\right].
\]
\end{prop}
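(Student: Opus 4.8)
The plan is to bound each summand of $V(H_k,\nu)$ by its raw second moment and then resum, exploiting the fact that the resulting double sum factorizes into a product of sums. First, fix a pair of indices $(i,j)$ and apply the elementary bound $\var_{X\sim\nu}[Y]\leq \E_{X\sim\nu}[Y^2]$ to the random variable $Y=\partial_i\log f(X)\,\partial_j\log f(X)\,\frac{\mu(X)}{\nu(X)}$, which gives
\[
\var_{X\sim\nu}\!\left[\partial_i\log f(X)\,\partial_j\log f(X)\,\frac{\mu(X)}{\nu(X)}\right]\leq \E_{X\sim\nu}\!\left[(\partial_i\log f(X))^2(\partial_j\log f(X))^2\,\frac{\mu(X)^2}{\nu(X)^2}\right].
\]

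Next I would sum this inequality over $i,j\in\{1,\dots,d\}$ and interchange the (finite) sum with the expectation, so that the integrand becomes $\big(\sum_{i=1}^d(\partial_i\log f(X))^2\big)\big(\sum_{j=1}^d(\partial_j\log f(X))^2\big)\frac{\mu(X)^2}{\nu(X)^2}$. Recognizing $\sum_{i=1}^d(\partial_i\log f(X))^2=\|\nabla\log f(X)\|^2$, the product of the two inner sums collapses to $\|\nabla\log f(X)\|^4$, yielding exactly
\[
V(H_0,\nu)\leq \E_{X\sim\nu}\!\left[\|\nabla\log f(X)\|^4\,\frac{\mu(X)^2}{\nu(X)^2}\right].
\]
The bound for $V(H_1,\nu)$ is obtained verbatim, replacing the importance weight $\mu(X)/\nu(X)$ by $\pi(X)/\nu(X)$ throughout.

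There is no genuine obstacle in this argument: the only inequality used is $\var\leq$ second moment, and the only algebraic identity is the factorization $\sum_{i,j}a_i^2a_j^2=\big(\sum_i a_i^2\big)^2$. The one point worth a sentence of care is that the interchange of the sum and the expectation is justified because the sum is finite and all terms are nonnegative, so no integrability hypothesis beyond finiteness of the right-hand side is needed (and if the right-hand side is infinite the bound is trivial).
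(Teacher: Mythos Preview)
Your proposal is correct and follows essentially the same approach as the paper's own proof: bound each componentwise variance by its second moment, sum over $i,j$, and use the factorization $\sum_{i,j}(\partial_i\log f)^2(\partial_j\log f)^2=\|\nabla\log f\|^4$. The paper's argument is identical in substance, only slightly more terse.
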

\begin{proof}
See\smartref{proof:Vbound}.
\end{proof}

Proposition \ref{prop:Vbound} shows that one should make the ratios, $\tfrac{\mu}{\nu}$ and  $\tfrac{\pi}{\nu}$, close to one in order to minimize the sampling variance for $H_0$ and $H_1$, respectively. 
For estimating $H_0$, we can naturally use the reference distribution, which is easy to sample from, as the biasing distribution, i.e., $\nu=\mu$. %
For estimating $H_1$, the second inequality in Proposition \ref{prop:Vbound} suggests that using the reference distribution may not be a feasible strategy. 
Consider a scenario where the likelihood function is bounded as $\sup_x f = 1$ and the gradient of the log-likelihood is bounded as $\sup_x \|\nabla \log f(x)\| = M_f$. Using $\nu=\mu$, the variance $V(H_1, \mu)$ is inversely quadratic in the normalizing constant $Z$, i.e., 
\(
V(H_1, \mu)\leq M_f^4 \big/ Z^2.
\)
For a target density concentrating in a small region of the parameter space, the normalizing constant $Z$  can take a small value, and thus the variance $V(H_1, \mu)$ can take a rather large value. 
This way, alternative strategies such as MCMC and SMC must be used to adaptively collect samples from the target distribution for estimating $H_1$, while the intermediate estimation of $H_1$ provides approximate target densities that can be used to accelerate MCMC and SMC. Further details are presented in the next section.

\section{Integration with MCMC and SMC}
\label{sec:MCs}
In this section, we discuss the integration of MCMC and SMC with the approximate target densities defined by LIS for estimating the Gram matrix $H_1$. %
\subsection{MCMC with LIS}
\label{sec:MCMC}
For a given target density $\pi(x)$, the Metropolis--Hastings (MH) method employs a proposal density $p(x,x')$ and an acceptance/rejection step with the acceptance probability 
\[
\beta(x,x')=1\wedge \frac{\pi(x')p(x',x)}{\pi(x)p(x,x')}
\]
to construct a Markov chain of random variables with $\pi(x)$ as the invariant density. 
With the subspace identified by the LIS approach, we can apply different strategies to different subspaces to accelerate the convergence of MCMC. 
For a given subspace $\calX_r$, we can formulate an MCMC transition kernel on $\calX_r$ that has one of the lower-dimensional surrogate densities  $\bar\apdf_s(x_r), s \in\{f,g,l\},$ as the invariant density. Then, combining the transition kernel on $\calX_r$ and the conditional reference density $\mu(x_\bot|x_r)$, we can define a Markov chain transition kernel that has the full target density $\pi(x)$ as the invariant density. 
This procedure is summarized in Algorithm \ref{alg:MCMC}.

\begin{algorithm}[H]
 \KwIn{target density $\pi(x)$, an initial state $X^0 = x^0$, a LIS subspace $\calX_r$, proposal density $p$, conditional reference density $\mu(\,\cdot\,|\,\cdot\,)$, likelihood function $f$, and lower-dimensional surrogate density $\bar\apdf_s, s \in \{f,g,l\}$, iteration count $t$}
 \KwOut{a Markov chain $X^1,\ldots, X^t$}
 \For{$ j=1, \ldots, t$}{
 Given the previous state $X^{(j-1)} = x$, decompose it as $x = x_r + x_\bot$ based on the subspace decomposition $\reals^d=\calX_r\oplus \calX_\bot$\;
 Generate a MCMC proposal $x'_r\sim p(x_r, \,\cdot\,)$\;
 Let $x'_r=x_r$ with rejection probability $1-\beta(x_r,x_r')$,  $\beta(x_r,x_r')=1 \wedge \frac{\bar\apdf_s(x'_r)p(x'_r, x_r)}{\bar\apdf_s(x_r)p(x_r, x'_r)}$\; 
 Generate a proposal $x'_\bot\sim \mu(x_\bot| x_r)$ and set $x'=x'_r + x'_\bot$\;
 Compute the acceptance probability
 $\displaystyle \alpha(x,x') = 1 \wedge \frac{f(x') \bar\apdf_s(x_r) \bar\mu(x'_r)}{f(x)\bar\apdf_s(x'_r) \bar\mu(x_r)}$\;
 With probability $\alpha(x,x')$, accept the complement proposal and set $X^{k}=x'$, otherwise reject $x'$ and set $X^{j}=x$. 
 }
\caption{MCMC with LIS proposal}
\label{alg:MCMC}
\end{algorithm}

The acceptance and rejection steps used in lines 4 and 7 of Algorithm \ref{alg:MCMC} are consistent with the approximation of the target density. 
Since the lower-dimensional surrogate density $\bar\apdf_s(x_r)$ carries most of the information provided by the likelihood function, it may have a complicated structure to explore. However, the rather low dimensionality of  $\bar\apdf_s(x_r)$ makes it possible to design efficient MCMC transition kernels. 
Note that the product of the lower-dimensional surrogate density and the conditional reference density, $\bar\apdf_s(x_r)\mu(x_\bot| x_r)$, defines an approximation of the full-dimensional target density, in which the approximation accuracy has been extensively analyzed in previous sections. 
This way, in the complement space $\calX_\bot$, we embed $\mu(x_\bot| x_r)$, which is an approximation of $\pi(x_\bot|x_r)$, into another MCMC transition kernel to explore the full-dimensional target density $\pi(x)$. 
Thus, the efficiency of the complement subspace MCMC transition in lines 5--7 of Algorithm \ref{alg:MCMC} should strongly depend on its acceptance rate. 
In the following proposition, we show that $\pi(x)$ is indeed the invariant measure of Algorithm \ref{alg:MCMC}. In addition, we also provide a lower bound on the complement transition acceptance rate in line 6 of Algorithm \ref{alg:MCMC}.

\begin{prop}\label{prop:MCMC}
The full-dimensional target density $\pi(x)$ is the invariant density for Algorithm \ref{alg:MCMC}. Moreover, the expected acceptance rate for the $x_\bot$ part is lower bounded by 
\[
\E \left[\alpha(X,X')\right]\geq 1-4\sqrt{2}\Hell(\pi,\apdf_s). 
\]
Here $X$ is a random sample from $\pi$ and $X'$ is a proposal generated by Algorithm \ref{alg:MCMC}. 
\end{prop}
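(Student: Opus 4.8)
The plan is to recognize Algorithm \ref{alg:MCMC} as a single Metropolis--Hastings kernel with target $\pi$ and then to read the acceptance rate off the structure of its proposal. First I would isolate lines 3--4: for fixed $x_\bot$ this is an ordinary MH sub-step with proposal $p$ and acceptance probability $\beta$, so it defines a transition kernel $q(x_r,\cdot)$ on $\calX_r$ that is reversible with respect to $\bar\apdf_s$, i.e. $\bar\apdf_s(x_r)\,q(x_r,x_r') = \bar\apdf_s(x_r')\,q(x_r',x_r)$ (including the Dirac mass at $x_r'=x_r$ produced by the internal rejection branch). The one-step proposal realized by lines 3--6 is then $Q(x,x') = q(x_r,x_r')\,\mu(x_\bot'\mid x_r')$, and the composite step is an MH step for $\pi$ provided line 6's $\alpha$ equals the corresponding MH ratio.

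For the invariance claim I would compute $\pi(x')Q(x',x)\big/\bigl(\pi(x)Q(x,x')\bigr)$ by writing $\pi = Z^{-1}\mu f$ and $\mu(x) = \bar\mu(x_r)\,\mu(x_\bot\mid x_r)$: the conditional reference factors $\mu(x_\bot\mid x_r)$ and $\mu(x_\bot'\mid x_r')$ cancel between numerator and denominator, the ratio $q(x_r',x_r)/q(x_r,x_r')$ becomes $\bar\apdf_s(x_r)/\bar\apdf_s(x_r')$ by reversibility of $q$, and what remains is exactly $f(x')\bar\mu(x_r')\bar\apdf_s(x_r)\big/\bigl(f(x)\bar\mu(x_r)\bar\apdf_s(x_r')\bigr)$, the quantity inside line 6. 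Hence $\pi$ is invariant. On the event $x_r'=x_r$ the point-mass weight $1-\int p(x_r,z)\beta(x_r,z)\,dz$ occurs symmetrically in $Q(x,x')$ and $Q(x',x)$ and cancels, so the ratio is well defined there too.

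For the acceptance rate, the same substitution shows line 6 has the clean form $\alpha(x,x') = 1\wedge \frac{\pi(x')\,\apdf_s(x)}{\pi(x)\,\apdf_s(x')}$: the move is an importance-weighted MH step whose proposal $Q$ is \emph{reversible with respect to $\apdf_s$}, which is immediate from $q$ being $\bar\apdf_s$-reversible together with the product forms $Q(x,x') = q(x_r,x_r')\mu(x_\bot'\mid x_r')$ and $\apdf_s(x) = \bar\apdf_s(x_r)\mu(x_\bot\mid x_r)$. Thus $\sigma(x,x') := \apdf_s(x)Q(x,x')$ is symmetric and, writing $w := \pi/\apdf_s$ so that $\pi(x)Q(x,x') = w(x)\sigma(x,x')$,
\[
\E[\alpha(X,X')] = \int \sigma(x,x')\,\bigl(w(x)\wedge w(x')\bigr)\,dx\,dx'.
\]
Applying $a\wedge b = \tfrac12(a+b) - \tfrac12|a-b|$, the symmetry of $\sigma$ and $\int\sigma(x,x')\,dx' = \apdf_s(x)$ make the average term equal to $1$, while $|w(x)-w(x')|\le|w(x)-1|+|w(x')-1|$ and $\int \apdf_s(x)\,|w(x)-1|\,dx = \int|\pi-\apdf_s| = 2\,d_{TV}(\pi,\apdf_s)$ control the remainder, giving $\E[\alpha]\ge 1-2\,d_{TV}(\pi,\apdf_s)$; combining with $d_{TV}\le\sqrt{2}\,\Hell$ yields $\E[\alpha]\ge 1-2\sqrt2\,\Hell(\pi,\apdf_s)$, which in particular implies the stated bound $1-4\sqrt2\,\Hell(\pi,\apdf_s)$.

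The only real friction I anticipate is the measure-theoretic bookkeeping around the Dirac atom that line 4 injects into $q$, both in the detailed-balance check and in justifying the $a\wedge b$ manipulation for the mixed continuous/atomic kernel $Q$; one also has to adopt the convention that $x_\bot'$ in line 5 is drawn from $\mu(\cdot\mid x_r')$ (the updated subspace coordinate), since that is precisely what makes $Q$ reversible with respect to $\apdf_s$ and the ratio coincide with line 6. Everything else is substitution and the elementary total-variation/Hellinger inequalities.
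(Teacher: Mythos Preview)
Your argument is correct. For invariance you proceed exactly as the paper does---verify detailed balance for $\pi$ via the $\bar\apdf_s$-reversibility of the inner $x_r$-kernel---though you package it more cleanly by recognizing the whole step as a single Metropolis--Hastings move with proposal $Q$ and checking that line~6 is its ratio. You also correctly flag that line~5 must draw $x_\bot'\sim\mu(\,\cdot\mid x_r')$ (the updated coordinate), which is the convention the paper's proof silently uses.

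For the acceptance bound your route differs from the paper's. The paper writes $1-\E[\alpha]=\int\sigma(x,x')\bigl[(1-b(x))\vee(1-b(x'))\bigr]$ with $b=\pi/\apdf_s$, uses the pointwise inequality $1-b\le|2-2\sqrt b|$ to pass to $4\int\apdf_s|1-\sqrt b|$, and then applies Cauchy--Schwarz to reach $4\sqrt2\,\Hell(\pi,\apdf_s)$. You instead use $a\wedge b=\tfrac12(a+b)-\tfrac12|a-b|$ and the triangle inequality $|w(x)-w(x')|\le|w(x)-1|+|w(x')-1|$ to obtain $\E[\alpha]\ge 1-2\,d_{TV}(\pi,\apdf_s)$, and then Kraft's inequality gives $1-2\sqrt2\,\Hell(\pi,\apdf_s)$. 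Your bound is therefore sharper by a factor of two; it also isolates a clean total-variation statement as an intermediate result, while the paper's square-root trick goes straight to Hellinger. Both arguments handle the atom at $x_r'=x_r$ the same way, via the symmetric rejection mass in $q$.
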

\begin{proof}
See\smartref{proof:MCMC}.
\end{proof}

Proposition \ref{prop:MCMC} indicates that when running Algorithm \ref{alg:MCMC}, the acceptance rate of the MCMC transition in the complement subspace $\calX_\bot$ is controlled by the accuracy of the approximate target density. One anticipates that the acceptance rate in line 6 approaches $1$ if the approximation error approaches $0$. In other words, the efficiency of Algorithm \ref{alg:MCMC} depends largely on the efficiency of the MCMC on the low dimensional $\calX_r$. %
To implement Algorithm \ref{alg:MCMC},  we need the lower-dimensional surrogate density $\bar\apdf_s, s \in \{f,g,l\}$. This in practice can be replaced by the Monte Carlo version $\bar\apdf^M_s$ (cf. Definitions \ref{def:pif}--\ref{def:pil}), with accuracy guaranteed by Theorems \ref{thm:MCerrorfg} and \ref{thm:MCerrorl}. 
Note that the surrogate density $\bar\apdf^M_f(x_r)$ provides an unbiased estimate of the marginal target density $\bar\pi(x_r)$. In  \cite{cui2020data}, this is used together with the pseudo-marginal technique \cite{andrieu2009pseudo,AV15} to design alternative sampling methods.

Another key ingredient in Algorithm \ref{alg:MCMC} is the LIS subspace $\calX_r$, which is obtained by estimating either the matrix $H_0$ or the matrix $H_1$. 
While $H_0$ is easy to compute, the resulting subspace may have inferior approximation accuracy compared with that obtained by $H_1$. 
However the estimation of $H_1$ often requires samples from the target $\pi(x)$ (cf. Section \ref{sec:MCsubspace}). To resolve this dilemma, we consider to adaptively estimate $H_1$ and the LIS $\calX_r$ within MCMC. 
The procedure is summarized in Algorithm \ref{alg:adaptMCMC}.

\begin{algorithm}[H]
 \KwIn{target density $\pi(x)$, reference density $\mu$, likelihood function $f$, number of epochs $K$, iteration count $t$, and a truncation index $K_\ast$}
 \KwOut{a Markov chain $X^1,\ldots, X^{(K+1)t}$ and a subspace $\calX_r$} 
 Generate $X^1,\ldots, X^{t}$ from $\mu$ and compute $H^{(0)}=\frac1{t} \sum_{i=1}^{t}\nabla \log f(X^{i}) \nabla \log f(X^{i})^\top$\;
 Find the leading eigenvectors $\{v_1,\ldots, v_{d_r}\}$ of $H^{(0)}$ to define $\calX_r=$span$\{v_1,\ldots, v_{d_r}\}$\;
 \For {$j=1,\ldots, K$}{
Run Algorithm \ref{alg:MCMC} with target density $\pi$, initial state $X^{jt}$, and LIS subspace $\calX_r$ for $t$ iterations to generate the Markov chain $X^{jt+1},\ldots X^{jt+t}$\;
Compute $H^{(j)} = \frac1{t} \sum_{i=1}^{t}\nabla \log f(X^{jt+i}) \nabla \log f(X^{jt+i})^\top$\;
Compute $\bar H = \frac{1}{j+1- {\rm min}(j,K_\ast)} \sum_{i = {\rm min}(j,K_\ast)}^{j} H^{(i)} $\;
Find the leading eigenvectors $\{v_1,\ldots, v_{d_r}\}$ of $\bar H$ to define $\calX_r=$span$\{v_1,\ldots, v_{d_r}\}$\;
 }
\caption{MCMC with adaptive LIS}
\label{alg:adaptMCMC}
\end{algorithm}

Our starting point is an initial LIS $\calX_0$ that can be estimated using $H_0$. Then, we run Algorithm \ref{alg:MCMC} using $\calX_0$ to generate samples from $\pi$. 
We call this the first epoch. Algorithm \ref{alg:MCMC} in this epoch might not be efficient, since $\calX_0$ may not be a good subspace. However, we can re-estimate the matrix $H_1$ and an improved LIS $\calX_1$ using the samples in the first epoch. 
Then, the updated $\calX_1$ is used in the next epoch to run Algorithm \ref{alg:MCMC}. This procedure can be carried out iteratively, where each epoch creates better estimates of the Gram matrix $H_1$ and the corresponding LIS. 
The truncation index $K_\ast$ is introduced to discard burn-in samples from initial epochs in estimating $H_1$.
Since the subspace estimation error follows a Monte Carlo convergence rate (cf. Theorem \ref{thm:sample}), we often only need to implement Algorithm \ref{alg:adaptMCMC} for $O(10^3)$ iterations in each epoch to estimate the LIS $\calX_r$ in many practical scenarios. In practice, the LIS usually stabilizes after a few epochs (e.g., about $10$ epochs) of training, so in total $O(10^4)$ iterations are needed to build the LIS.
%
%
Then, the estimated subspace $\calX_r$  can be used in the non-adaptive Algorithm \ref{alg:MCMC} to explore the target density.

\subsection{SMC with LIS}
\label{sec:SMC}

For target densities with complicated and multi-modal structures, SMC offers an efficient alternative to MCMC. 
Here we present the integration of SMC with LIS. This integration also offers a layered subspace construction procedure that is naturally embedded within SMC.
In our context, SMC uses a sequence of densities $\pi_k$, $k\in0,\ldots, K$, such that $\pi_0 = \mu$, $\pi_K = \pi$, and each ratio $\pi_{k+1}/\pi_{k}$ has a small variance. For example, one can obtain such a sequence using the tempering formula
\[
\pi_k(x)= \frac{1}{Z_k} \mu (x) f(x)^{\beta_k}, \quad Z_k = \int \mu (x) f(x)^{\beta_k} dx, \quad k\in0,1,\ldots, K,
\]
where $\beta_k \geq 0$ is an increasing sequence with $\beta_0=0$ and $\beta_K=1$.
This way, given samples from $\pi_k$, one can apply importance sampling to obtain weighted samples from $\pi_{k+1}$ and estimate associated statistics. Then, these statistics can be used to formulate MCMC transition kernels with the invariant density $\pi_{k+1}$ to update the weighted samples.
This procedure is summarized in Algorithm \ref{alg:SMC}.

\begin{algorithm}[H]
 \KwIn{likelihood function $f$, reference density $\mu$, tempering coefficients $\beta_k$ and iteration count $t_k$ for each level $k\in1,\ldots,K$}
 \KwOut{Samples $X^1,\ldots, X^T$ from $\pi\propto f \mu$}
  Generate samples $X^1,\ldots, X^{T}$ from $\pi_0 = \mu$\;
 \For{$ k\in0, \ldots, K-1$}{
Compute the weights $W^j = f^{\beta_{k+1}-\beta_k}(X^j)$ for $j=1,\ldots,T$\;
Let $H'_1= (\sum_{j=1}^{T} W^j)^{-1} \sum_{j=1}^{T}\nabla \log f(X^{j}) \nabla \log f(X^{j})^\top W^j $\;
Find the leading eigenvectors $\{v_1,\ldots, v_{d_r}\}$ of $H'_1$ to define $\calX_r=$span$\{v_1,\ldots, v_{d_r}\}$\;
\For {$t=1,\ldots, T$}{
Draw a resampling index $J$  from the categorical distribution with the probability mass function $\mathbb{P}[J = j] \propto W^j, j = 1,\ldots, T$, and set $Y^0 = X^J$\;
Run MCMC Algorithm \ref{alg:MCMC} with the invariant density $\pi_{k+1}\propto \mu f^{\beta_{k+1}}$,
initial state $Y^0$, LIS subspace $\calX_r$, iteration count $t_k$, let $Y^1,\ldots Y^{t_k}$ be the output \;
Update $X^t=Y^{t_k}$\;}
 }
\caption{SMC for LIS proposal}
\label{alg:SMC}
\end{algorithm}

In Algorithm \ref{alg:SMC}, a resampling step is used to transform a weighted particle representation of $\pi_{k+1}$ into an equally weighted particle representation, followed by MCMC updates. 
In practice, the tempering coefficients can be chosen adaptively. Given the weighting function $f^{\beta_{k+1} - \beta_k}(x)$, one can choose the next tempering coefficient $\beta_{k+1}$ such that
\[
\frac{\mathbb{E}_{\pi_k}\left[f^{\beta_{k+1} - \beta_k}(X) \right]^2}{\mathbb{E}_{\pi_k}\left[f^{\beta_{k+1} - \beta_k}(X)^2\right]} \approx  \frac{ESS}{n} < \tau,\quad \text{where }\,ESS=\frac{\left(\sum_{i = 1}^{n}f^{\beta_{k+1} - \beta_k}(X^i)\right)^2}{\sum_{i = 1}^{n} f^{\beta_{k+1} - \beta_k}(X^i)^2}
\]
is the effective sample size and $\tau \in (0, 1)$ is a predetermined threshold. 

For each tempering coefficient $\beta_{k+1}$, we construct the matrix $H_1$ for the corresponding target density $\pi_{k+1}$ using importance sampling with $\pi_k$ as the importance density, and then build MCMC transition kernels as described in Algorithm \ref{alg:MCMC}. Denoting the matrix $H_1$ for the density $\pi_{k+1}$ by $H_1^{(k+1)}$, its importance sampling estimate takes the form $\E_{\pi_k} [H^{(k,k+1)}(X)]$, 
where 
\begin{align*}
 H^{(k,k+1)}(x)= \frac{Z_k}{Z_{k+1}} \nabla \log f(x) \nabla \log f(x)^\top f(x)^{(\beta_{k+1}-\beta_k)}.
\end{align*}
The introduction of the tempering sequence reduces the variance of each $H_1$ estimation compared to the direct estimation of $H_1$ from the reference $\mu$. 
This can be characterized in the following proposition. 
\begin{prop}
\label{prop:SMC}
Suppose $\|\nabla \log f(x)\|^4f(x)^{\beta_{k+1}+\delta}$ with $\delta=\beta_{k+1}-\beta_k$ is integrable under the reference $\mu$. Then the variance upper bound
\[
V_{k+1}(H_1, \pi_{k}) := \sum_{i,j=1}^d {\rm var}_{X\sim\pi_k} \left[\big[H^{(k,k+1)}(X)\big]_{ij}\right] \leq \frac{Z_k}{Z_{k+1}}\E_{\pi_{k+1}} \left[ \|\nabla \log f(X)\|^4 f(X)^\delta \right]
\]%
for the subspace estimation in SMC is finite.
\end{prop}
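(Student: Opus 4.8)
The plan is to follow the template of the proof of Proposition~\ref{prop:Vbound}: dominate each coordinatewise variance by the corresponding second moment, sum the resulting terms into a single Frobenius-norm expectation, use the rank-one structure of $\nabla\log f\,\nabla\log f^\top$ to collapse the double sum, and finally transport the expectation from $\pi_k$ to $\pi_{k+1}$ via a change of measure.

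First I would write, for each pair $(i,j)$, $\var_{X\sim\pi_k}\big[[H^{(k,k+1)}(X)]_{ij}\big]\leq \E_{X\sim\pi_k}\big[[H^{(k,k+1)}(X)]_{ij}^2\big]$, so that
\[
V_{k+1}(H_1,\pi_k)\leq \E_{X\sim\pi_k}\big[\|H^{(k,k+1)}(X)\|_F^2\big].
\]
Because $H^{(k,k+1)}(x)=\tfrac{Z_k}{Z_{k+1}}f(x)^{\delta}\,vv^\top$ with $v=\nabla\log f(x)$, and $\|vv^\top\|_F^2=\big(\sum_i v_i^2\big)^2=\|v\|^4$ (which collapses the double sum $\sum_{i,j}$ to $\|\nabla\log f\|^4$), this gives
\[
\|H^{(k,k+1)}(x)\|_F^2=\Big(\tfrac{Z_k}{Z_{k+1}}\Big)^2 f(x)^{2\delta}\,\|\nabla\log f(x)\|^4 .
\]
Next, since $\pi_k(x)/\pi_{k+1}(x)=\tfrac{Z_{k+1}}{Z_k}f(x)^{-\delta}$, we have $f(x)^{2\delta}\pi_k(x)=\tfrac{Z_{k+1}}{Z_k}f(x)^{\delta}\pi_{k+1}(x)$, hence
\[
\E_{X\sim\pi_k}\big[f(X)^{2\delta}\|\nabla\log f(X)\|^4\big]=\tfrac{Z_{k+1}}{Z_k}\,\E_{X\sim\pi_{k+1}}\big[f(X)^{\delta}\|\nabla\log f(X)\|^4\big].
\]
Multiplying through by $(Z_k/Z_{k+1})^2$ and combining with the two displays above yields exactly the claimed bound.

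It remains to record finiteness: unfolding $\pi_{k+1}(x)=Z_{k+1}^{-1}\mu(x)f(x)^{\beta_{k+1}}$ turns the right-hand side into $\tfrac1{Z_{k+1}}\int\|\nabla\log f(x)\|^4 f(x)^{\beta_{k+1}+\delta}\mu(x)\,dx$, which is finite precisely by the stated integrability hypothesis (together with $Z_{k+1}>0$). The computation is routine; the only points that need care are the rank-one Frobenius identity and the bookkeeping of the normalizing-constant ratios $Z_k/Z_{k+1}$ through the change of measure, and I do not anticipate any genuine obstacle.
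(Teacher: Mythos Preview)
Your proposal is correct and follows essentially the same route as the paper: bound each coordinatewise variance by its second moment, sum to $\E_{\pi_k}\|H^{(k,k+1)}\|_F^2$, use the rank-one identity to collapse to $\|\nabla\log f\|^4$, and change measure from $\pi_k$ to $\pi_{k+1}$ to arrive at the stated bound with the $Z_k/Z_{k+1}$ prefactor. The only cosmetic difference is that the paper passes through $\mu$ explicitly (writing $f^{2\delta}\pi_k=\tfrac{1}{Z_k}f^{\beta_{k+1}+\delta}\mu$ and then regrouping as $\tfrac{Z_{k+1}}{Z_k}f^{\delta}\pi_{k+1}$) whereas you go directly from $\pi_k$ to $\pi_{k+1}$; both are the same computation.
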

\begin{proof}
See\smartref{proof:SMC}.
\end{proof}

By setting $\nu=\mu$ in the inequality for $V(H_1, \nu)$ in Proposition \ref{prop:Vbound}, and by using
that 
%
%
$\pi(x)/\mu(x)=\tfrac{1}{Z}f(x)$ for a.e. $x$, it follows that  $V(H_1,\mu)$ is finite if
$\|\nabla \log f(x)\|^4f(x)^{2}$ is integrable with respect to $\mu$. This is strictly stronger than the requirement in Proposition \ref{prop:SMC} if and only if $\beta_{k+1}+\delta<2$. The latter condition is valid as soon as $K\geq 2$ tempering coefficients are used.
Consider the same example used in Section \ref{sec:MCsubspace} where the likelihood function is bounded as $\sup_x f = 1$ and the gradient of the log-likelihood is bounded as $\sup_x \|\nabla \log f(x)\| =  M_f$. The variance $V_{k+1}(H_1, \pi_{k})$ in SMC satisfies
\[
V_{k+1}(H_1, \pi_{k})\leq \frac{M_f^4 Z_k}{Z_{k+1}},
\]
which can be much smaller than the upper bound $V(H_1, \mu)\leq M_f^4 \big/ Z^2$ of the direct importance sampling formula in Proposition \ref{prop:Vbound}. On the other hand, one needs to implement the SMC scheme which is in general more involved and computationally more expensive than using importance sampling.

\section{Dimension independent errors for linear inverse problems}
\label{sec:dimension}
LIS mainly targets high-dimensional problems with intrinsic low-dimensional structures. For infinite-dimensional problems, it is highly desired that the subspace approximation error (e.g., the result in Corollary \ref{cor:sample}) is independent of the ambient parameter dimension $d$. 
While the dimension independence of sampling methods has been extensively investigated in the literature, see \cite{cotter2013mcmc} and references therein, there has been little investigation on the dimension independence of LIS. 
Intuitively, for the approximation error to be dimension independent,  $H_k, k \in \{0,1\}$ must be trace-class in the limit as $d \rightarrow \infty$ and  the variance $V(H_k, \nu)$ must be bounded independently of $d$. 
It is an open question to establish conditions under which these properties hold for general likelihood functions. We will show that the two conditions above are satisfied for linear Gaussian Bayesian inverse problems, i.e. where the prior and likelihood are Gaussian and the parameter-to-observable map is linear.

We consider a Bayesian problem with unknown parameter $z \in \mathbb{R}^d$ and the prior $p_0(z)$ being $\mathcal{N}(0,\Gamma)$. Given a linear parameter-to-observable map $G$, the data are given by
\[
Y = G Z+\xi,\quad \xi\sim \mathcal{N}(0, I_{d_y}). 
\]
Applying a whitening transformation $X=\Gamma^{-1/2}Z$, we obtain $X\sim\mu(x)$ where $\mu(x) = \mathcal{N}(0,I_d)$ and 
\[
Y=AX+\xi,\quad A=G\,\Gamma^{1/2}. 
\] 
This defines the likelihood function 
\(
f(x) = \exp(-\tfrac12 \|Ax - Y\|^2).
\)
First we will establish a series of estimates for the quantities we derived in previous sections. 
\begin{prop}
\label{prop:linear}
Denote $C_A=A^\top A=G^\top\Gamma G$. The following hold:
\begin{enumerate}[1)]
\item The eigenvalues of $H_0$ are controlled by 
\(\displaystyle
\lambda_{i+1} (H_0)\leq\lambda_{i}(C_A)^2. \vphantom{\tfrac{f}{Z}}
\)
\item The eigenvalues of $H_1$ are controlled by 
\(\displaystyle
\lambda_{i+1} (H_1)\leq \lambda_i(C_A)^2 \big/ 1+\lambda_i(C_A).\vphantom{\tfrac{f}{Z}}
\)
\item The normalizing constant is bounded by 
\(\displaystyle
\sqrt{ \det(C_A+I)} \leq \tfrac{1}{Z}\leq \sqrt{\det(C_A+I)}\exp\big(\tfrac12 \|y\|^2 \big).
\)

\item The constant $\tfrac{\|f\|_{2,\mu}}{Z}$ is bounded by 
\(
\frac{\|f\|_{2,\mu}}{Z} \leq  \det(I + C_A^2)^{1/4} \exp\big( \tfrac12 (\sqrt{2}-1)^2 \|y\|^2 \big) \vphantom{\tfrac{f}{Z}}
\)

\item $\vphantom{\tfrac{f}{Z}}$When the reference density $\mu$ is used for estimating $H_k, k \in \{0,1\},$ the variances are bounded by 
\begin{align*}
V(H_0,\mu) & \leq  6 \Bigg( \left(\sum_{i = 1}^d \lambda_i(C_A)^2 \right)^2 + \|A^\top y\|^4 \Bigg), \\
V(H_1,\mu) & \leq  6 \sqrt{\det(I + C_A^2)}  \exp\left( (\sqrt{2}-1)^2 \|y\|^2 \right) \Bigg( \left( \sum_{i = 1}^d \frac{\lambda_i(C_A)^2}{1 + 2\lambda_i(C_A)}\right)^2 + \| A^\top y\|^4 \Bigg). 
\end{align*}
\item The SMC sampling variance in Proposition \ref{prop:SMC} is bounded by 
\[
V_{k+1}(H_1,\pi_k) \leq 6 \sqrt{\det(I + \delta^2 C_A^2)} \exp\left(\delta^2 \|A^\top y\|^2\right) \Bigg( \left(\sum_{i = 1}^d \frac{\lambda_i(C_A)^2}{1 + \tau \lambda_i(C_A)}\right)^2 + \| A^\top y\|^4 \Bigg).
\]
where $\delta = \beta_{k+1}-\beta_k$ and $\tau = \beta_{k+1} + \delta$.
\end{enumerate}
\end{prop}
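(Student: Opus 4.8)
The key structural fact is that $\log f(x)=-\tfrac12\|Ax-y\|^2$ is quadratic, so $\nabla\log f(x)=A^\top y-C_A x$ is \emph{affine} in $x$; consequently, whenever $X$ is Gaussian the vector $\nabla\log f(X)$ is again Gaussian, and every density appearing in the statement --- the reference $\mu=\mathcal N(0,I_d)$, the posterior $\pi$, the density $\tilde\pi\propto f^2\mu$ governing $\|f\|_{2,\mu}$, and the tempered densities $\rho_\tau\propto f^\tau\mu$ (with normalizing constant $Z_\tau$) --- is Gaussian, with mean and covariance obtained by completing the square. The whole argument then reduces to two elementary ingredients: for $W\sim\mathcal N(m,S)$ one has $\E[WW^\top]=S+mm^\top$ and $\E\|W\|^4=(\tr S+\|m\|^2)^2+2\tr(S^2)+4m^\top Sm$; and for a symmetric matrix $M$ with a rank-one positive semidefinite perturbation, Weyl interlacing gives $\lambda_{i+1}(M+vv^\top)\le\lambda_i(M)$.

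For claims 1)--2) I would compute $H_0=C_A^2+(A^\top y)(A^\top y)^\top$ and, using $\pi=\mathcal N(m_{\mathrm{post}},(I+C_A)^{-1})$ with $m_{\mathrm{post}}=(I+C_A)^{-1}A^\top y$, $H_1=C_A(I+C_A)^{-1}C_A+m_{\mathrm{post}}m_{\mathrm{post}}^\top$; discarding the rank-one term by interlacing leaves the eigenvalues $\lambda_i(C_A)^2$ and $\lambda_i(C_A)^2/(1+\lambda_i(C_A))$. For claim 3) completing the square gives $Z=\det(I+C_A)^{-1/2}\exp\big(-\tfrac12\|y\|^2+\tfrac12\,y^\top A(I+C_A)^{-1}A^\top y\big)$; the upper bound is immediate since the quadratic form is nonnegative, and the lower bound follows from the push-through identity $A(I+A^\top A)^{-1}A^\top=I-(I+AA^\top)^{-1}\preceq I$, which bounds that form by $\|y\|^2$. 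For claim 4) the same computation yields $\|f\|_{2,\mu}^2=\det(I+2C_A)^{-1/2}\exp\big(-\|y\|^2+2\,y^\top A(I+2C_A)^{-1}A^\top y\big)$; the determinant ratio in $\|f\|_{2,\mu}/Z$ is controlled eigenvalue by eigenvalue through $\tfrac{(1+\lambda)^2}{1+2\lambda}=1+\tfrac{\lambda^2}{1+2\lambda}\le 1+\lambda^2$, and the exponent, after substituting $B=AA^\top$ via push-through, reduces to the scalar bound $\max_{b\ge0}\tfrac{b}{(1+b)(1+2b)}=3-2\sqrt2=(\sqrt2-1)^2$ (attained at $b=1/\sqrt2$).

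For claims 5)--6) I would use Proposition \ref{prop:Vbound} and Proposition \ref{prop:SMC} to express each variance as a ratio of normalizing constants times $\E\|\nabla\log f(X)\|^4$ under, respectively, $\mu$, $\tilde\pi\propto f^2\mu$, and $\rho_\tau\propto f^\tau\mu$ with $\tau=\beta_{k+1}+\delta$. Each fourth moment is then read off from the identity above: the covariance of $\nabla\log f$ has eigenvalues $\lambda_i(C_A)^2/(1+c\,\lambda_i(C_A))$ with $c\in\{0,2,\tau\}$, its mean has norm $\le\|A^\top y\|$ since the relevant resolvent is a contraction, and the crude bounds $\tr(S^2)\le(\tr S)^2$, $m^\top Sm\le\lambda_1(S)\|m\|^2\le(\tr S)\|m\|^2$, together with $3a^2+6ab+b^2\le 6(a^2+b^2)$, produce the prefactor $6$ and the stated sums. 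The normalizing-constant prefactors $\|f\|_{2,\mu}^2/Z^2$ and $Z_kZ_\tau/Z_{k+1}^2$ are computed by completing the square: their determinant parts reduce to $\tfrac{(1+\lambda)^2}{1+2\lambda}\le 1+\lambda^2$ and to $\tfrac{(1+\beta_{k+1}\lambda)^2}{(1+\beta_k\lambda)(1+\tau\lambda)}=1+\tfrac{\delta^2\lambda^2}{(1+\beta_k\lambda)(1+\tau\lambda)}\le 1+\delta^2\lambda^2$, while their exponents simplify because the $\|y\|^2$ terms cancel exactly ($\beta_k+\tau-2\beta_{k+1}=0$), leaving a quadratic form in $y$ whose eigenvalues are the second central difference $\phi(\beta_k)+\phi(\beta_k+2\delta)-2\phi(\beta_k+\delta)$ of $\phi(\beta)=\beta^2/(1+\beta\lambda)$; since $\phi''(\beta)=2/(1+\beta\lambda)^3\le 2$, this difference is $\le 2\delta^2$, giving the factor $\exp(\delta^2\|A^\top y\|^2)$.

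The individual steps are all short linear-algebra or one-variable calculus computations, so the main obstacle is bookkeeping: identifying the correct Gaussian (hence mean and covariance) in each importance-sampling reduction, verifying that the $\|y\|^2$ contributions in the normalizing-constant prefactors cancel identically, and recognizing the leftover quadratic form in claim 6) as a second difference bounded by $\sup\phi''=2$. Keeping these pieces straight --- rather than any single estimate --- is where the care is required.
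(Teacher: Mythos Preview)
Your proposal is correct and follows essentially the same route as the paper: write each relevant density as a Gaussian by completing the square, express $H_0$ and $H_1$ as a symmetric matrix plus a rank-one term and invoke interlacing, and reduce the variance bounds via Proposition~\ref{prop:Vbound} to a Gaussian fourth moment times a ratio of normalizing constants. The differences are stylistic rather than substantive. For the fourth moment in claim~5) the paper uses the coordinatewise inequality $\E[\zeta_i^4]\le 3(\E[\zeta_i^2])^2$ together with Cauchy--Schwarz to get $\E\|\zeta\|^4\le 3(\sum_i\E[\zeta_i^2])^2$ and then $(a+b)^2\le 2(a^2+b^2)$, whereas you start from the exact identity for $\E\|W\|^4$ and use $\tr(S^2)\le(\tr S)^2$, $m^\top Sm\le(\tr S)\|m\|^2$; both yield the same constant~$6$. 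For the exponent in claim~6) the paper computes the eigenvalues of $T=2\beta_{k+1}(I+\beta_{k+1}AA^\top)^{-1}-\beta_k(I+\beta_kAA^\top)^{-1}-\tau(I+\tau AA^\top)^{-1}$ by brute force and checks $\lambda_i(T)\le 2\delta^2\lambda_i(AA^\top)$, whereas you recognize the same quantity (after writing $Z_\beta$ with $C_A$ and $A^\top y$) as the second central difference of $\phi(\beta)=\beta^2/(1+\beta\lambda)$ and bound it by $\delta^2\sup\phi''=2\delta^2$. Your version is a bit more conceptual; the paper's is more direct algebra. Either way the bookkeeping you flag---tracking which Gaussian is in play and verifying the $\|y\|^2$ cancellation $\beta_k+\tau-2\beta_{k+1}=0$---is exactly where the work lies, and you have it right.
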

\begin{proof}
See\smartref {proof:linear}.
\end{proof}
Claim 1) implies that the spectrum of $H_0$ is bounded by the spectrum of $C_A^2$.
Claim 2) implies that the spectrum of $H_1$ is bounded by the spectrum of $C_A^2(I+C_A)^{-1}$.
Note that 
\[
\frac{\lambda_i(C_A)^2}{1+\lambda_i(C_A)}\leq \lambda_i(C_A)^2,
\]
and the ratio between the two is large if $\lambda_i(C_A)$ is large. Also recall that in Remark \ref{rem:25}, we showed the approximation errors with subspace obtained from $H_0$ involve the pre-constant $1/Z$, which is estimated in claim 3), but not the subspace obtained from $H_1$. Thus, in the Gaussian linear setting, the error estimates obtained from $H_1$ will be much tighter than those obtained from $H_0$ when the dominating eigenvalues of $C_A$ are large, which is often the case in practice.

On the other hand, $H_0$ is easier for  Monte Carlo based estimations than $H_1$. This can be seen from the bounds on the  variance of the Gram matrices in claim 5). Comparing with  $V(H_0,\mu)$, $V(H_1,\mu)$ has an additional dependence on $\det(I + C_A^2)$ and $\exp\big( \tfrac12 (\sqrt{2}-1)^2 \|y\|^2 \big)$. 
Claim 6) shows that this estimation difficulty can be remedied by SMC, because the upper bound of the variance $V_{k+1}(H_1,\pi_k)$ is smaller than that of $V(H_1,\mu)$.

In many applications, the spectrum of the prior covariance $\Gamma$ is assumed to exhibit polynomial decay, i.e. $\lambda_j(\Gamma)\leq C_\Gamma j^{-\alpha}, \alpha>0$. This kind of assumption is common for functional data analysis \cite{CH08,HH07,RS05} and inverse problems \cite{St10}.
With $\alpha > 1/2$, the prior covariance is trace-class, and thus the prior has measure 1 on some suitably constructed Banach space. 
In the following corollary, we replace the bounds in Proposition \ref{prop:linear} with estimates obtained using this trace-class constraint to demonstrate the dimension scalability.

\begin{cor}
\label{cor:lin}
Suppose the eigenvalues of $\Gamma$ exhibit polynomial decay, $\lambda_j(\Gamma)\leq C_\Gamma j^{-\alpha}$ with $\alpha > 1/2$, the observation matrix $G$  has bounded $\ell^2$ operator norm, and the observed data have bounded $\ell^2$ norm. Then the following estimates hold independently of the ambient parameter dimension $d$. Consequently, the estimation error $\Error_h$ for any bounded $h$ is independent of $d$ by Corollary \ref{cor:sample}. 
\begin{enumerate}[1)]
\item When using $\calX_r$ as the subspace spanned by the first $d_r$ eigenvectors of $H_k$, $k\in\{0,1\}$,
\[
\res(\calX_r, H_k)\leq \frac{1}{2\alpha-1}\|G\|^4 \, C_\Gamma^2 (d_r-1) ^{1 - 2\alpha}.
\]
\item The constant $\tfrac{\|f\|_{2,\mu}}{Z}$ is bounded by
\(
\frac{\|f\|_{2,\mu}}{Z} \leq  \exp\big( \frac12 (\sqrt{2}-1)^2 \|y\|^2  + \frac{\alpha}{4\alpha -2} \|G\|^4 \, C_\Gamma^2 \big).
\)
\item When the reference distribution $\mu$ is used for estimating the matrices $H_k, k \in \{0,1\},$ the variances are bounded by 
\begin{align*}
V(H_0,\mu) & \leq 6 \left( \frac{4\alpha^2}{(2\alpha -1)^2} \|G\|^8 \, C_\Gamma^4 + \|A^\top y\|^4 \right), \\
V(H_1,\mu) & \leq 6 \exp\left( (\sqrt{2}-1)^2 \|y\|^2 + \frac{\alpha}{ 2 \alpha - 1} \|G\|^4 C_\Gamma^2  \right) \left( \frac{4\alpha^2}{(2\alpha -1)^2} \|G\|^8 \, C_\Gamma^4 + \|A^\top y\|^4 \right).
\end{align*}
\item The SMC sampling variance in Proposition \ref{prop:SMC} is bounded by 
\[
V_{k+1}(H_1,\pi_k) \leq 6 \exp\left(\delta^2 \|A^\top y\|^2 + \frac{\delta^2 \alpha}{ 2 \alpha - 1} \|G\|^4 C_\Gamma^2  \right) \left( \frac{4\alpha^2}{(2\alpha -1)^2} \|G\|^8 \, C_\Gamma^4 + \| A^\top y\|^4 \right).
\]
where $\delta = \beta_{k+1}-\beta_k$. 
\end{enumerate}
\end{cor}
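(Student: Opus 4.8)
\emph{Proof strategy.} The plan is to take each dimension-dependent estimate of Proposition~\ref{prop:linear} and bound every occurrence of an eigenvalue of $C_A=G^\top\Gamma G$, of a trace of a function of $C_A$, and of a determinant $\det(I+c\,C_A^2)$ by a quantity that depends only on $\alpha$, $C_\Gamma$, $\|G\|$ and $\|y\|$. The single enabling ingredient is the eigenvalue comparison
\[
\lambda_i(C_A)\leq \|G\|^2\lambda_i(\Gamma)\leq \|G\|^2 C_\Gamma\, i^{-\alpha},
\]
which follows from $C_A=\Gamma^{1/2}(G^\top G)\Gamma^{1/2}\preceq \|G\|^2\,\Gamma$ together with monotonicity of eigenvalues under the Loewner order (Weyl's inequality) and the assumed polynomial decay of $\lambda_i(\Gamma)$. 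I would also record at the outset that $\|A^\top y\|=\|\Gamma^{1/2}G^\top y\|\leq \sqrt{C_\Gamma}\,\|G\|\,\|y\|$ is dimension independent, so the $\|A^\top y\|^4$ terms in the stated bounds do not threaten dimension independence.

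For claim 1) I would first merge claims 1) and 2) of Proposition~\ref{prop:linear} into the single estimate $\lambda_{i+1}(H_k)\leq\lambda_i(C_A)^2$, valid for both $k\in\{0,1\}$ since $\lambda_i(C_A)^2/(1+\lambda_i(C_A))\leq\lambda_i(C_A)^2$. Shifting the index in the residual sum then gives
\[
\res(\calX_r,H_k)=\sum_{i=d_r+1}^d \lambda_i(H_k)\leq \sum_{i=d_r}^\infty \lambda_i(C_A)^2\leq \|G\|^4 C_\Gamma^2\sum_{i=d_r}^\infty i^{-2\alpha},
\]
and the integral comparison $\sum_{i\geq N} i^{-2\alpha}\leq \int_{N-1}^\infty t^{-2\alpha}\,dt=(N-1)^{1-2\alpha}/(2\alpha-1)$ (using $\alpha>1/2$) at $N=d_r$ yields the stated bound. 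The same comparison at $N=1$ gives the uniform trace bound
\[
\tr(C_A^2)=\sum_{i\geq 1}\lambda_i(C_A)^2\leq \|G\|^4 C_\Gamma^2\Big(1+\tfrac{1}{2\alpha-1}\Big)=\tfrac{2\alpha}{2\alpha-1}\|G\|^4 C_\Gamma^2 ,
\]
which I would reuse throughout the remaining claims.

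For claims 2)--4) I would feed this trace bound into the determinant and trace factors of Proposition~\ref{prop:linear} using $\log(1+x)\leq x$ for $x\geq0$. Concretely, $\log\det(I+C_A^2)=\sum_i\log(1+\lambda_i(C_A)^2)\leq \tr(C_A^2)\leq \tfrac{2\alpha}{2\alpha-1}\|G\|^4 C_\Gamma^2$, hence $\det(I+C_A^2)^{1/4}\leq\exp\bigl(\tfrac{\alpha}{4\alpha-2}\|G\|^4 C_\Gamma^2\bigr)$, and claim 4) of Proposition~\ref{prop:linear} then gives claim 2). For the variances, the sums $\sum_i\lambda_i(C_A)^2$, $\sum_i\lambda_i(C_A)^2/(1+2\lambda_i(C_A))$ and $\sum_i\lambda_i(C_A)^2/(1+\tau\lambda_i(C_A))$ occurring in claims 5)--6) of Proposition~\ref{prop:linear} are each $\leq\tr(C_A^2)\leq\tfrac{2\alpha}{2\alpha-1}\|G\|^4 C_\Gamma^2$, while $\sqrt{\det(I+C_A^2)}$ and $\sqrt{\det(I+\delta^2 C_A^2)}$ are bounded by $\exp\bigl(\tfrac{\alpha}{2\alpha-1}\|G\|^4 C_\Gamma^2\bigr)$ and $\exp\bigl(\tfrac{\delta^2\alpha}{2\alpha-1}\|G\|^4 C_\Gamma^2\bigr)$ respectively, again by $\log(1+x)\leq x$; substituting into claims 5) and 6) of Proposition~\ref{prop:linear} produces claims 3) and 4). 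Finally, the asserted $d$-independence of $\Error_h$ for bounded $h$ follows by plugging these estimates into Corollary~\ref{cor:sample}, noting $\E_\pi[h^2]\leq\|h\|_\infty^2$ and $\E_{\hat\apdf_s}[h^2]\leq\|h\|_\infty^2$ and the $d$-independence of $\res(\calX_r,H_k)$, of $\|f\|_{2,\mu}/Z$, and of $V(H_k,\mu)$ established above.

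This argument is essentially bookkeeping once the eigenvalue comparison and the integral tail estimates are in place; the only mild subtleties are the index shift in $\lambda_{i+1}(H_k)\leq\lambda_i(C_A)^2$ and arranging the integral comparisons so that the constants coincide exactly with those stated, and I do not expect any substantial obstacle.
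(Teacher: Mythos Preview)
Your proposal is correct and essentially identical to the paper's proof: the paper also derives $\lambda_i(C_A)\le \|G\|^2\lambda_i(\Gamma)$ (using the Courant--Fischer min--max principle rather than the Loewner order, but this is the same estimate), then records the integral tail bounds $\sum_{j\ge r}j^{-2\alpha}\le (r-1)^{1-2\alpha}/(2\alpha-1)$ and $\sum_{j\ge1}j^{-2\alpha}\le 2\alpha/(2\alpha-1)$ and the determinant bound via $1+x\le e^x$, and substitutes these into Proposition~\ref{prop:linear} exactly as you outline. Your additional remark that $\|A^\top y\|\le \sqrt{C_\Gamma}\,\|G\|\,\|y\|$ is dimension independent is a useful observation that the paper leaves implicit.
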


\begin{proof}
See\smartref {proof:lin}.
\end{proof}

\section{Numerical examples}\label{sec:numerics}

Now we provide several numerical examples to illustrate the theoretical results developed in the preceding sections. We start with a synthetic linear inverse problem to demonstrate various likelihood approximation methods and continue with a more practical nonlinear Bayesian inference problem governed by a partial differential equation (PDE). 

\subsection{Example 1: synthetic example}
In the first example, we consider a Bayesian inverse problem with linear observations and log-normal prior. Problems of this type are applied in X-ray tomography and atmospheric remote sensing, see \cite{haario2004markov} and references therein. The parameter to be inferred can be modeled by a random Gaussian vector $X \sim \mu=\mathcal{N}(0, \Gamma)$, where $\Gamma \in \mathbb{R}^{d \times d}$ is the prior covariance matrix. The observation data are modeled through
\[
Y = G \exp(X) + \xi, \quad \xi \sim \mathcal{N}(0, \sigma^2 I_{d_y}),
\]
where $G \in \mathbb{R}^{d_y \times d}$ is a matrix. 
The observation likelihood is then given by 
\[
f(x;y) \propto \exp\left( -\frac1{2 \sigma^2} \left\| y - G \exp(x) \right\|^2 \right).
\]
To compare different approximations by exploring regimes where the data have differing impacts on different parameter directions, we generate random observation matrices and prescribe the spectra of the observation matrix and the prior covariance matrix. 
We specify the prior covariance by setting $\Gamma = {\rm diag}(\gamma_1, \gamma_2, \ldots, \gamma_d)$ with $\gamma_j = \gamma_0\,j^{-\beta_\gamma}$.
To create a random observation matrix $G$, we use the reduced singular value decomposition $G = U \Lambda V^\top$, where the matrices $U$ and $V$ are randomly and independently generated from the orthogonal group \cite{stewart1980efficient} and $\Lambda = {\rm diag}(\lambda_1, \lambda_2, \ldots, \lambda_{d_y})$ with $\lambda_j = \lambda_0 \, j^{-\beta_\lambda}$. 
In particular, $U$ and $V$ are computed using a QR decomposition of a matrix of independent standard Gaussian entries.
By using randomly generated $G$, we can confirm the observed phenomena are not restricted to a specific choice of $G$. In this example, we present 3 independently generated $G$, and we will see the numerical results have little differences among them. 
The problem dimensions are set to $d = 500$ and $d_y = 50$. The variables that determine $\Gamma$ and $G$ are given by $\gamma_0 = 4$,  $\beta_\gamma = -2$, $\lambda_0 = 100$, and $\beta_\lambda = -1$. The standard deviation of the observation noise is given by $\sigma = 1$. 

\begin{figure}[htb]
\centering
\includegraphics[width=0.3\linewidth]{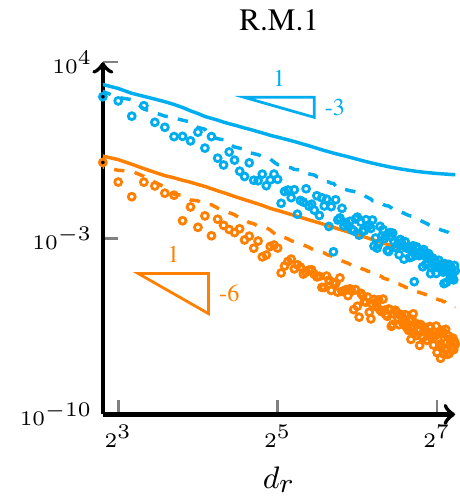}
\includegraphics[width=0.3\linewidth]{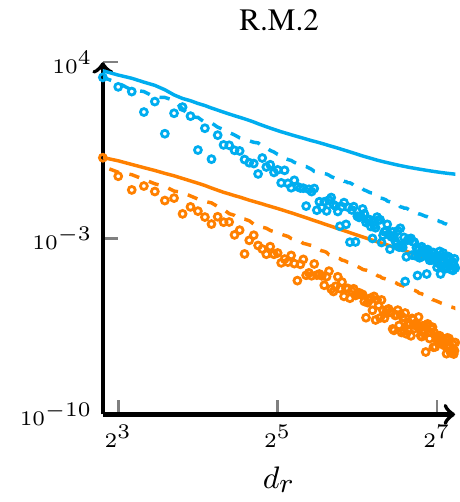}
\includegraphics[width=0.3\linewidth]{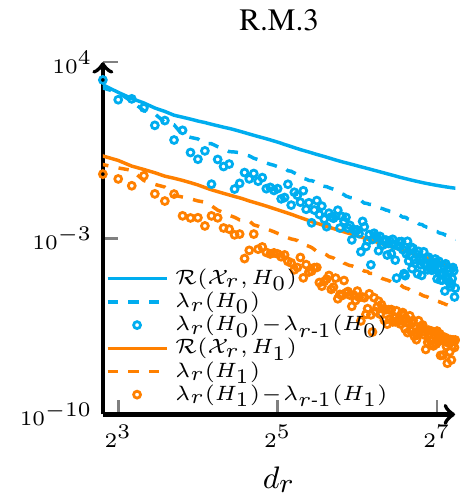}
\caption{Synthetic example. Eigenvalues and the gaps of eigenvalues of $H_k$ matrices for three randomly generated observation maps (R.M.) and the sums of the residual eigenvalues versus the projection dimensions.}
\label{fig:ex1_spec}
\end{figure}

\begin{figure}[htb]
\centering
\includegraphics[width=0.3\linewidth]{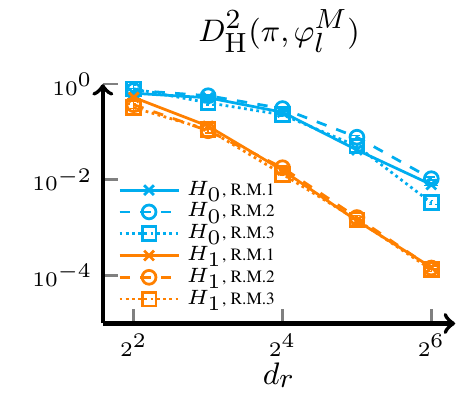}
\includegraphics[width=0.3\linewidth]{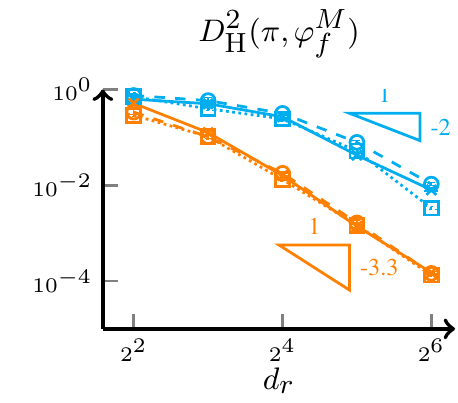}
\includegraphics[width=0.3\linewidth]{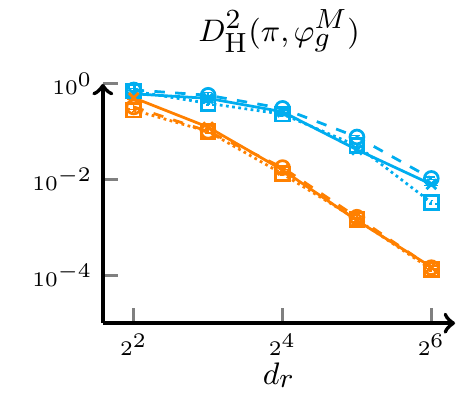}
\caption{Synthetic example.  Approximation errors of the approximate posterior densities versus projection dimensions for different $H_k$ matrices and different approximation methods. From left to right, approximation methods are $\apdf_l^M$, $\apdf_f^M$, and $\apdf_g^M$, respectively. Sample size $M = 4$ is used in computing the conditional expectation.}
\label{fig:ex1_errors}
\end{figure}

\begin{figure}[htb]
\centering
\includegraphics[width=0.3\linewidth]{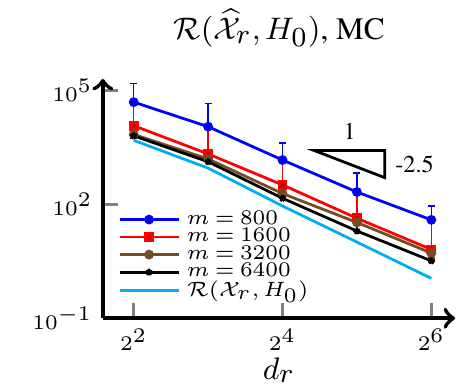}
\includegraphics[width=0.3\linewidth]{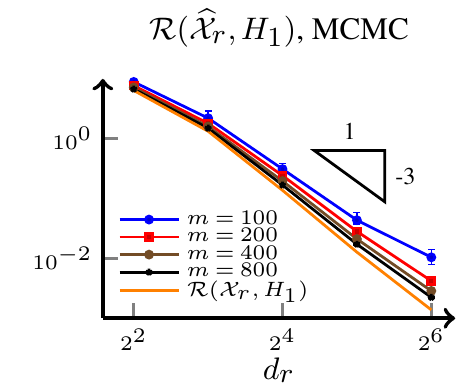}
\includegraphics[width=0.3\linewidth]{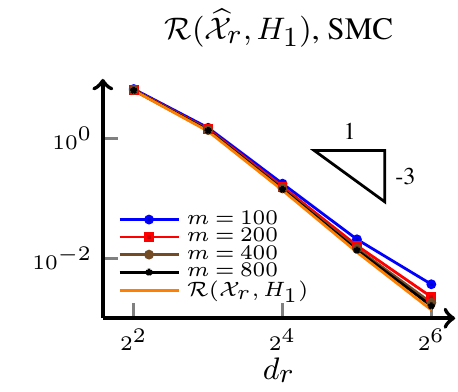}
\caption{Synthetic example with the first random observation matrix. The sums of the residual eigenvalues versus projection dimensions for different $H_k$ matrices and subspaces $\widehat{\mathcal{X}}_r$ computed using different samples sizes $\Mmu$ and different methods. From left to right, we have Monte Carlo estimation of $\widehat{H}_0$, MCMC estimation of $\widehat{H}_1$, and SMC estimation of $\widehat{H}_1$, respectively. The error bars represent $[10\%, 90\%]$ quantiles estimated using $100$ runs.}
\label{fig:ex1_H_data1}
\end{figure}
\begin{figure}[htb]
\centering
\includegraphics[width=0.3\linewidth]{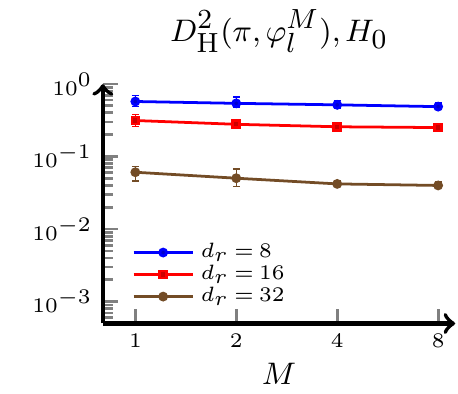}
\includegraphics[width=0.3\linewidth]{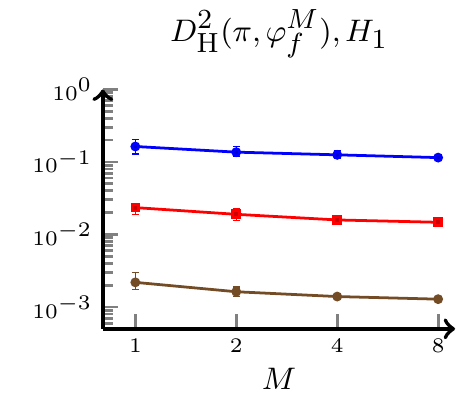}
\includegraphics[width=0.3\linewidth]{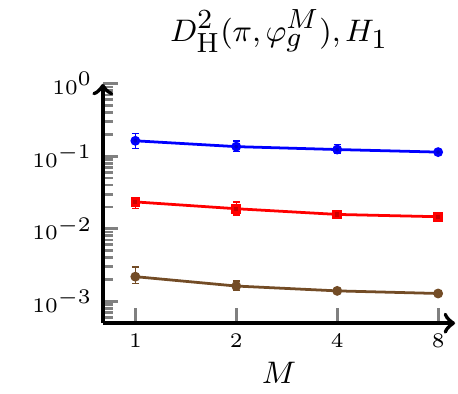}
\caption{Synthetic example with the first random observation matrix.  Approximation errors of the approximate posterior densities (with projection dimensions $d_r = \{8, 16, 32\}$) versus sample sizes $M = \{1, 2, 4, 8\}$. From left to right, approximation methods are $\apdf_l^M$ with the matrix $H_0$, $\apdf_f^M$ with the matrix $H_1$, and $\apdf_g^M$ with the matrix $H_1$, respectively. The error bars represent $[10\%, 90\%]$ quantiles estimated using $100$ runs.}
\label{fig:ex1_errors_data1}
\end{figure}

We present the numerical results based on three realizations of the randomly generated $G$ matrices. %
For each test case, we construct the ``true'' matrix $H_0$ using $5 \times 10^5$ Monte Carlo samples and construct the ``true'' matrix $H_1$ using $5 \times 10^6$ MCMC samples. 
As shown in Figure \ref{fig:ex1_spec}, for all three test cases, the eigenvalues of $H_1$ matrices are several orders smaller than those of $H_0$ matrices, and thus the sums of the residual eigenvalues, $\mathcal{R}(\mathcal{X}_r, H_1)$, are significantly lower than $\mathcal{R}(\mathcal{X}_r, H_0)$. 
This suggests that the approximate posterior density induced by the $H_1$ matrix should have better accuracy compared with that defined by the $H_0$ matrix in this example. This is confirmed in Figure \ref{fig:ex1_errors}, which shows the squared Hellinger distances between the true posterior and three approximations $\apdf_s^M, s \in \{f, g, l\}$ introduced in Section \ref{sec:MCerror}.
For all posterior approximation methods, the approximation subspaces defined by $H_1$ matrices yield significantly smaller squared Hellinger distances than those of $H_0$ matrices.

The eigenvalue gaps, $\lambda_r-\lambda_{r-1}$, are also plotted in Figure \ref{fig:ex1_spec}. Note that we use the same x-axis by assuming $r=d_r$. The eigenvalue gaps decay to zero quickly. In particular, for $H_1$,  the gap is around $10^{-3}$ for a moderate reduction dimension $r=32$, and $10^{-7}$ for a larger dimension $r=128$. For $H_0$, the effective eigenvalue gaps are of similar values, since we need to divide it by $\lambda_1$, which is around $10^4$ for $H_0$.  This illustrates that it is important for the theoretical results to be independent of eigenvalue gaps, as explained in Remark \ref{rem:subspace}. 

Since all the numerical results are similar among all three randomly generated observation matrices, we will focus on the first realization for subsequent discussion. 
Next, we investigate the impact of sample size $\Mmu$ for estimating the lower-dimensional subspace $\widehat{\mathcal{X}}_r$. 
Figure \ref{fig:ex1_H_data1} presents the quantiles of the sums of the residual errors for various sample-based subspace estimations. Although both $H_0$ and $H_1$ matrices have diminishing eigenvalue gaps in this example (c.f. Figure \ref{fig:ex1_spec}), the sample-based estimations still exhibit sufficient accuracy in probing the dimension reduced subspaces. For example, with a rather small sample size $\Mmu = 100$, using either MCMC or SMC can lead to accurate subspace estimations for the approach based on the $H_1$ matrix. This confirms the findings of our analysis in Section \ref{sec:MCsubspace}. In addition, the SMC-based estimation has better accuracy compared to that of MCMC-based estimation, which is also anticipated by our analysis in Section \ref{sec:MCs}.

Finally, we investigate the Monte Carlo sample size $M$ for computing the conditional expectations in various approximate posteriors, as discussed in  Section \ref{sec:MCerror}.
The results presented in Figure \ref{fig:ex1_errors_data1} confirm the results of our analysis. For problems with rather small residual eigenvalues $\mathcal{R}(\mathcal{X}_r, H_k)$, a small sample size $M$ is sufficient for accurate estimation of the conditional expectations. Increasing $M$ only leads to a marginal improvement in the approximation accuracy in this example. The reasons behind this were explained at the end of Section \ref{sec:MCerror}.

\subsection{Example 2: PDE problem}
We consider a classical Bayesian inverse problem governed by an elliptic PDE \cite{dashti2011uncertainty,dodwell2019multilevel}. Such problems arise in subsurface flows and oil reservoir management. %
Fix a domain of interest $D$ with boundary $\partial D$. The potential function $t \mapsto u(t)$ where $t \in D \subset \mathbb{R}^2$ is modeled by the PDE
\begin{equation}
-\nabla \cdot \big(\kappa(t) \nabla u(t)\big) = 0, \quad t \in D:= (0,1)^2,
\label{eq:pdeproblem}
\end{equation}
with Dirichlet boundary conditions $u|_{t_1=0}=1$ and $u|_{t_1=1}=0$ on the left and right boundaries, and homogeneous Neumann conditions on other boundaries.
The diffusion coefficient $\kappa(t)$ should be positive, and thus it is often parametrized by its logarithm, i.e., $\kappa(t) = \exp(x(t))$. The goal is to infer the unknown parameter function $x(t)$ from $d_y$ incomplete observations of the potential function $u(t)$. 
Following the setup of \cite{Cuietal16b}, a zero-mean Gaussian process prior with the exponential kernel
\[
K(t, t') = \exp\left( -\frac1{\ell} \|t-t'\| \right)
\]
is prescribed to the unknown parameter $x(t)$.

Given an arbitrary function $x(t)$, the PDE \eqref{eq:pdeproblem} cannot be solved analytically. This way, the functions $x(t)$ and $u(t)$ need to be discretized to numerically solve \eqref{eq:pdeproblem}. We tessellate the spatial domain $D$ with a uniform triangular grid with mesh size $h$, and then define continuous, piecewise quadratic finite element (FE) basis functions $\{\phi_1(t), \ldots, \phi_d(t)\}$ with cardinality $d$. Then, the infinite dimensional functions $x(t)$ and $u(t)$ can be approximated by $x(t) \approx x_h(t) := \sum_{i = 1}^{d} \phi_i(t) x_i$ and $u(t) \approx u_h(t) := \sum_{i = 1}^{d} \phi_i(t) u_i$, respectively. 
After discretization, the unknown function $x_h(t)$ can be effectively represented by a coefficient vector $x = (x_1, \ldots, x_d)$, which yields a multivariate Gaussian prior $\mu(x):=\mathcal{N}(0, \Gamma)$ where $\Gamma_{ij} = \int \int K(t, t') \phi_i(t) \phi_j(t') dt\, dt'$.

\begin{figure}[htb]
\centering
\includegraphics[width=0.8\linewidth]{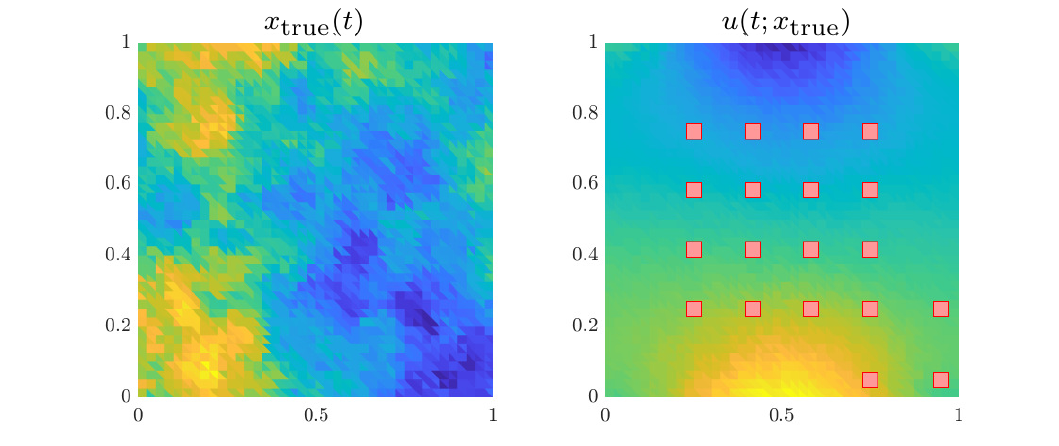}
\caption{Setup of the PDE inverse problem. Left: the true parameter $x_{\rm true}(t)$ used for generating synthetic data. Right: the corresponding potential function $u(t; x_{\rm true})$ and observation locations.}
\label{fig:ex2_setup}
\end{figure}

For any given parameter coefficients $x$, the corresponding discretized potential function $u_h(t; x)$ is obtained by solving the Galerkin projection of the PDE \eqref{eq:pdeproblem}. 
Observations are collected as $d_y = 19$ local averages of the potential function $u(t)$ over sub-domains $D_{i} \subset D$, $i=1,\ldots,d_y$. The subdomains are shown by the squares in Figure \ref{fig:ex2_setup}. To simulate the observable model outputs, we define the forward model $G: \mathbb{R}^d \mapsto \mathbb{R}^{d_y}$ with 
$$
G_i(x) = \frac{1}{|D_{i}|}\int_{D_i} u_h(t;x) dt, \quad i=1,\ldots,d_y\,.
$$
Synthetic data for these $d_y$ local averages are produced by $y = G(x_{\rm true}) + \xi$, where  $\xi \sim \mathcal{N}(0, \sigma^2 I_{d_y})$ and $x_{\rm true}$ is a realization of the prior random variable. 
To investigate the impact of the observation noise in practical applications, we present three test cases with observational standard deviations $\sigma = \{0.034, 0.017, 0.0085\}$ that correspond to signal-to-noise ratios, 10, 20, and 40, respectively. The resulting posterior distribution concentrates with reducing $\sigma$.

\begin{figure}[htb]
\centering
\includegraphics[width=0.3\linewidth]{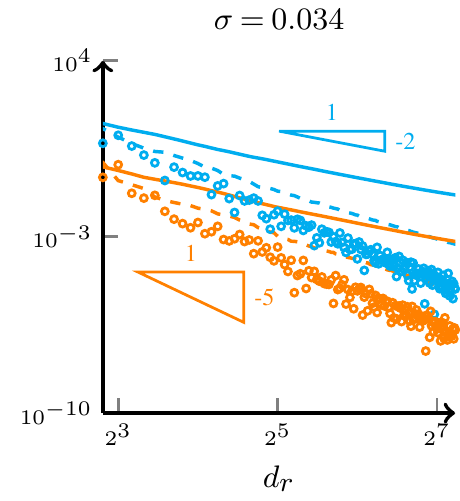}
\includegraphics[width=0.3\linewidth]{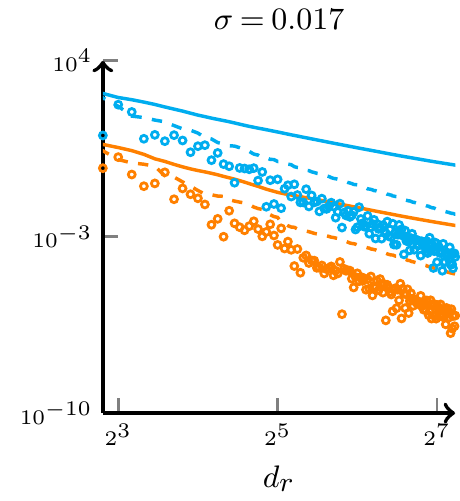}
\includegraphics[width=0.3\linewidth]{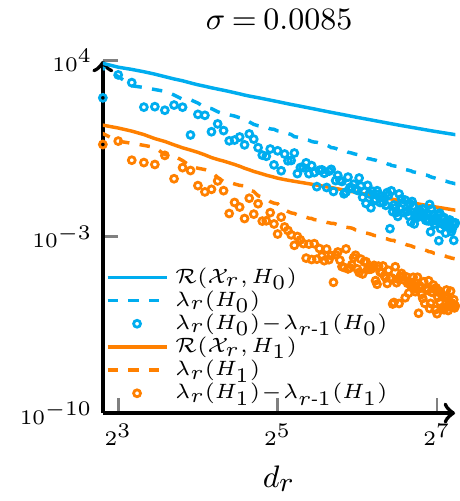}
\caption{PDE example. Eigenvalues and their gaps of $H_k$ matrices for three test cases with $\sigma = \{0.034, 0.017, 0.0085\}$ and the sums of the residual eigenvalues versus projector dimensions.}
\label{fig:ex2_spec_obs}
\end{figure}

\begin{figure}[htb]
\centering
\includegraphics[width=0.3\linewidth]{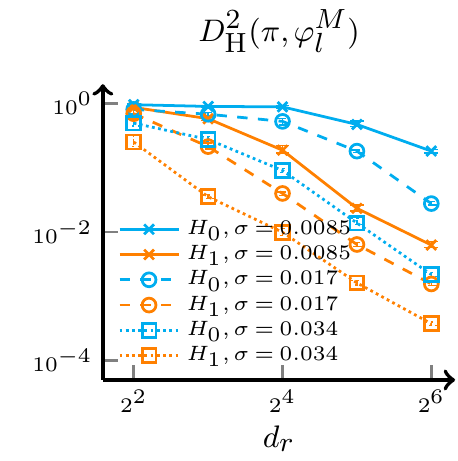}
\includegraphics[width=0.3\linewidth]{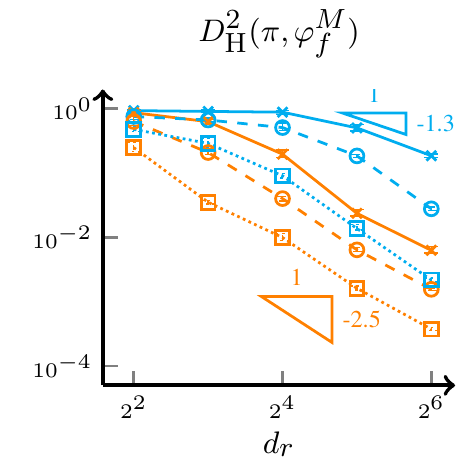}
\includegraphics[width=0.3\linewidth]{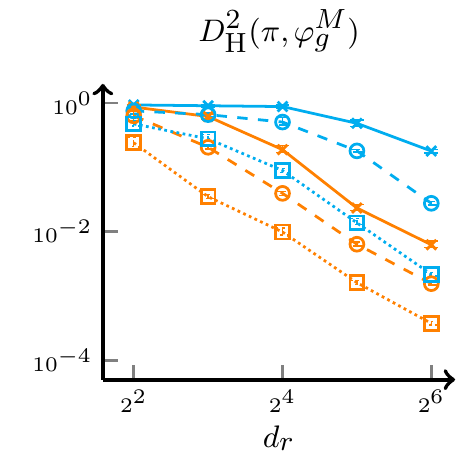}
\caption{PDE example. Approximation errors of the approximate posterior densities versus projection dimensions for different $H_k$ matrices and different approximation methods. From left to right, approximation methods are $\apdf_l^M$, $\apdf_f^M$, and $\apdf_g^M$, respectively. Sample size $M = 4$ is used in computing the conditional expectation.}
\label{fig:ex2_errors_obs}
\end{figure}

\begin{figure}[htb]
\centering
\includegraphics[width=0.3\linewidth]{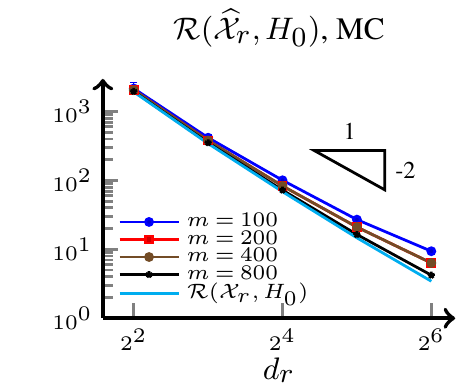}
\includegraphics[width=0.3\linewidth]{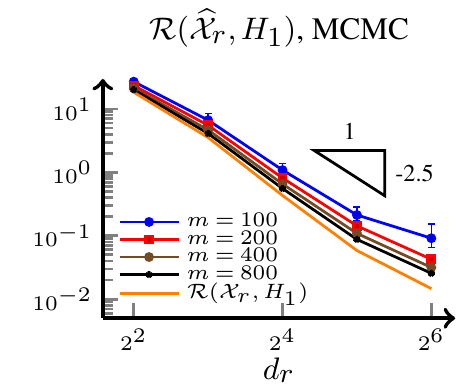}
\includegraphics[width=0.3\linewidth]{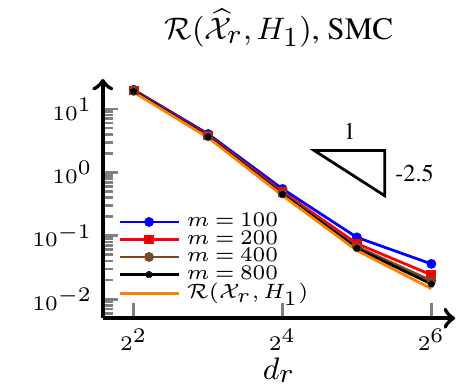}
\caption{PDE example with $\sigma = 0.017$. The sums of the residual eigenvalues versus projection dimensions for different $H_k$ matrices and subspaces $\widehat{\mathcal{X}}_r$ computed using different samples sizes $m$ and different methods. From left to right, we have Monte Carlo estimation of $\widehat{H}_0$, MCMC estimation of $\widehat{H}_1$, and SMC estimation of $\widehat{H}_1$, respectively. The error bars represent $[10\%, 90\%]$ quantiles estimated using $100$ runs.}
\label{fig:ex2_H_obs2}
\end{figure}

As shown in Figure \ref{fig:ex2_spec_obs}, for all three test cases, the eigenvalues of the $H_k$ matrices and their gaps decay rapidly, which are similar to the first numerical example. 
The eigenvalues of $H_1$ matrices and the associated sums of the residual eigenvalues, $\mathcal{R}(\mathcal{X}_r, H_1)$, are several orders smaller than the eigenvalues of $H_0$ matrices and $\mathcal{R}(\mathcal{X}_r, H_0)$. In addition, the gap between $\mathcal{R}(\mathcal{X}_r, H_1)$ and $\mathcal{R}(\mathcal{X}_r, H_0)$ increases with decreasing $\sigma$. 
This suggests that the accuracy improvement of the approximate posterior densities induced by the $H_1$ matrix over those of the $H_0$ matrix can be further enhanced for more concentrated posterior distributions in this example. Moreover, with a smaller $\sigma$, the signal-to-noise ratio is larger, and the posterior density is more different from the prior. As a result, the sampling problem becomes more challenging. This can be observed from the eigenvalue values for $\sigma=8.5\times 10^{-3}$, which are several magnitudes larger than the ones of  $\sigma=3.4\times 10^{-2}$.

In Figure \ref{fig:ex2_errors_obs}, we compare the performance of using $H_0$ and $H_1$ with three projection methods and noise scales. The performance is measured in Hellinger distance from the true posterior. We can see that all three projection methods yield similar results. Approximations using $H_1$ consistently outperform  the ones using $H_0$, especially when $d_r$ increases. By reducing $\sigma$, the sampling problem becomes harder, so the approximation becomes less accurate. But this is more severe for the approximations with $H_0$, since the approximation error is of order $1$ when  $d_r=2^6$ while the approximation error using  $H_1$ is of $10^{-2}$ when $d_r=2^6$.

Using the test case with $\sigma = 0.017$, we investigate the impact of sample size $\Mmu$ for estimating the lower-dimensional subspace $\widehat{\mathcal{X}}_r$.
Figure \ref{fig:ex2_H_obs2} presents the quantiles of the sums of the residual errors for various sample-based subspace estimations. Similar to the first numerical example, the diminishing eigenvalue gaps of $H_k$ matrices (c.f. Figure \ref{fig:ex2_spec_obs}) do not impact the accuracy of subspace estimations. 
With a rather small sample size $\Mmu = 100$, all methods (MC for $H_0$ and MCMC and SMC for $H_1$) can lead to accurate subspace estimations. %
\section{Conclusion}

This paper has provided a step-by-step analysis of the accuracy of the LIS method for approximating high-dimensional intractable target probability densities. %
We have shown that information about the spectrum of the Gram matrices $H_k, k \in \{0,1\},$ leads to upper bounds on the errors of various true approximations. We have also generalized these upper bounds to the numerical implementation of the approximate probability densities, in which Monte Carlo averaging is applied to both the estimation of $H_k, k \in \{0,1\},$ and the marginalization used during the construction of the approximate likelihood functions. 
Our analysis provides insights into the trade-off between the usage of $H_0$ and $H_1$ for constructing LIS: while the approximations based on $H_1$ can have smaller approximation errors compared with those obtained from $H_0$, the matrix $H_1$ is often more difficult to estimate. 
Fortunately, this difficulty can be addressed by integrating the LIS estimation process into sampling tools such as MCMC and SMC. We have also discussed the performance of the integration of MCMC and SMC with LIS. 
We have demonstrated our analysis on a linear Bayesian inverse problem, where all the error bounds presented in this paper are independent of the ambient parameter dimension, under suitable technical assumptions that are commonly used in high- or infinite-dimensional inverse problems. 
Finally, we have provided numerical examples to further demonstrate the efficacy of our analysis on nonlinear problems.

This work leads to some future research directions for dimension reduction techniques. Firstly, our analysis of the linear Bayesian inverse problem shows that various approximation errors are dimension independent. We conjecture this property will also hold for general nonlinear Bayesian inverse problems. Finding the conditions that guarantee this property remains an open problem. Secondly, our analysis indicates that the expected conditional variance of the square root of the likelihood controls the approximation error. This may lead to new dimension reduction techniques that bypass the usage of the Poincar\'{e} inequality and the gradient. 
Moreover, the analysis presented in this work can be further generalized to other types of log-concave reference distributions, for example, the Laplace distribution that is commonly used in sparsity-promoting learning. This may require further investigations on using weighted Poincar\'{e}-type inequalities \cite{bobkov1997poincare,bobkov2009weighted} for building alternative $H_{k}$ matrices and subspace approximations.

\appendix

\newcommand{\calE}{\mathcal{E}}

\newcommand{\R}{\mathbb{R}}
\newcommand{\gap}{\text{Gap}}

\section{Useful lemmas}
We begin with several useful lemmas for our discussion. Although some of them are not new, we provide proofs for all lemmas for the sake of completeness.
\begin{lem}
\label{lem:hel2exp}
The following holds
\begin{enumerate}[1)]
\item The estimation error of a $L^2$ function $h$ can be bounded by Hellinger distance
\[
|\E_{\pi} [h]-\E_\nu [h]|\leq \sqrt{2 \E_{\pi} [h^2]+2\E_{\nu}[h^2]}\Hell(\pi,\nu).
\]
This result can also be found in \cite[Proposition 5.12]{sullivan2015introduction}.
\item The Hellinger distance can be bounded by the square root of KL divergence
\[
\Hell(\pi,\nu)\leq \sqrt{\frac12 D_{KL}(\pi,\nu)}. 
\]
This is often referred as the Csisz{\'a}r-Kullback-Pinsker inequality \cite{bolley2005weighted, tsybakov2008introduction}. 
\item The total variation distance can be bounded by the Hellinger distance 
\[
D_{TV}(\pi,\nu)\leq \sqrt{2}\Hell(\pi,\nu). 
\]
This is often referred as the Kraft's inequality \cite{steerneman1983total}. 
\end{enumerate}
\end{lem}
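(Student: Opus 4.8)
The three bounds are classical relations between integral functionals of a pair of densities, and the plan is to establish all of them with one device: factor the difference $\pi-\nu$ (or the ratio $\pi/\nu$) through square roots, apply the Cauchy--Schwarz inequality, and close with the elementary pointwise facts $2\sqrt{ab}\le a+b$ for $a,b\ge 0$ and $-\log t\ge 1-t$ for $t>0$. Throughout I write densities with respect to the dominating reference $\lambda$ and suppress $\lambda$ in the notation; every manipulation below is legitimate as soon as $\pi,\nu\ll\lambda$, which is in force.

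For claim 1), I would write $\E_{\pi}[h]-\E_{\nu}[h]=\int h\,(\pi-\nu)=\int h\,(\sqrt{\pi}-\sqrt{\nu})(\sqrt{\pi}+\sqrt{\nu})$ and apply Cauchy--Schwarz to split the integral into the product of $\bigl\|h(\sqrt{\pi}+\sqrt{\nu})\bigr\|_{L^2(\lambda)}$ and $\bigl\|\sqrt{\pi}-\sqrt{\nu}\bigr\|_{L^2(\lambda)}$. The second factor is exactly $\sqrt{2}\,\Hell(\pi,\nu)$ by the definition of the Hellinger distance. For the first factor, expand $(\sqrt{\pi}+\sqrt{\nu})^2=\pi+\nu+2\sqrt{\pi\nu}$ and invoke $2\sqrt{\pi\nu}\le\pi+\nu$ to obtain $\bigl\|h(\sqrt{\pi}+\sqrt{\nu})\bigr\|_{L^2(\lambda)}^2\le 2\int h^2(\pi+\nu)=2\E_{\pi}[h^2]+2\E_{\nu}[h^2]$. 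Collecting the two factors gives a bound of the form asserted in claim 1); the only thing to watch is the bookkeeping of the multiplicative constant relative to the $\tfrac12$ in the definition of $\Hell$.

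For claim 2), I would start from $D_{KL}(\pi,\nu)=\int\pi\log(\pi/\nu)=-2\int\pi\log\sqrt{\nu/\pi}$ and apply $-\log t\ge 1-t$ with $t=\sqrt{\nu/\pi}$ pointwise on $\{\pi>0\}$, giving the lower bound $2\int\pi\bigl(1-\sqrt{\nu/\pi}\bigr)=2\int(\pi-\sqrt{\pi\nu})=2\bigl(1-\int\sqrt{\pi\nu}\bigr)$, where the last equality uses $\int\pi=1$ together with $\sqrt{\pi\nu}=0$ on $\{\pi=0\}$. Since $\int\nu=1$ as well, $2\bigl(1-\int\sqrt{\pi\nu}\bigr)=\int(\sqrt{\pi}-\sqrt{\nu})^2=2\,\Hell(\pi,\nu)^2$, and rearranging yields $\Hell(\pi,\nu)\le\sqrt{\tfrac12 D_{KL}(\pi,\nu)}$. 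For claim 3), I would write $D_{TV}(\pi,\nu)=\tfrac12\int|\pi-\nu|=\tfrac12\int|\sqrt{\pi}-\sqrt{\nu}|\,(\sqrt{\pi}+\sqrt{\nu})$ and apply Cauchy--Schwarz once more: $D_{TV}(\pi,\nu)\le\tfrac12\bigl(\int(\sqrt{\pi}-\sqrt{\nu})^2\bigr)^{1/2}\bigl(\int(\sqrt{\pi}+\sqrt{\nu})^2\bigr)^{1/2}$; here the first factor is $\sqrt{2}\,\Hell(\pi,\nu)$ and $\int(\sqrt{\pi}+\sqrt{\nu})^2=2+2\int\sqrt{\pi\nu}\le 4$, so $D_{TV}(\pi,\nu)\le\tfrac12\cdot\sqrt{2}\,\Hell(\pi,\nu)\cdot 2=\sqrt{2}\,\Hell(\pi,\nu)$.

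I do not expect a genuine obstacle here: all three arguments are short applications of Cauchy--Schwarz after a square-root factorization, plus one scalar inequality for the logarithm. The part requiring the most care is simply tracking constants against the normalization of $\Hell$ adopted in the paper, which is a routine matter I would carry out in the full proof in the \appname.
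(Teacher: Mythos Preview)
Your proposal is correct and follows essentially the same route as the paper's proof: for claims 1) and 3) you factor $\pi-\nu=(\sqrt{\pi}-\sqrt{\nu})(\sqrt{\pi}+\sqrt{\nu})$ and apply Cauchy--Schwarz together with $(\sqrt{a}+\sqrt{b})^2\le 2(a+b)$, and for claim 2) you use $-\log t\ge 1-t$ with $t=\sqrt{\nu/\pi}$ and then $\int\pi=\int\nu=1$; these are exactly the steps in the paper. Your remark about carefully tracking the $\tfrac12$ in the definition of $\Hell$ is well placed, as that is the only point where care is needed.
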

\begin{proof}
\textbf{Proof of claim 1).}  Let $\lambda$ be a reference density, e.g. the Lebesgue density,  for the Hellinger distance, so
\[
\Hell(\pi,\nu)^2=\frac12\int \left(\sqrt{\frac{\pi(x)}{\lambda(x)}}-\sqrt{\frac{\nu(x)}{\lambda(x)}}\right)^2 \lambda(x)dx.
\]
Note that
\begin{align*}
|\E_{\pi} [h]-\E_\nu [h]|^2&=\left(\int \left(\frac{\pi(x)}{\lambda(x)}-\frac{\nu(x)}{\lambda(x)}\right) h(x)\lambda(x)dx\right)^2\\
&=\left(\int \left(\sqrt{\frac{\pi(x)}{\lambda(x)}}-\sqrt{\frac{\nu(x)}{\lambda(x)}}\right)\left(\sqrt{\frac{\pi(x)}{\lambda(x)}}+\sqrt{\frac{\nu(x)}{\lambda(x)}}\right) h(x)\lambda(x)dx\right)^2\\
(\mbox{by Cauchy--Schwarz})&\leq\Hell(\pi,\nu)^2\left(\int \left(\sqrt{\frac{\pi(x)}{\lambda(x)}}+\sqrt{\frac{\nu(x)}{\lambda(x)}}\right)^2 h^2(x)\lambda(x)dx\right)\\
(\mbox{by Young's ineq.})&\leq\Hell(\pi,\nu)^2\left(\int 2\left(\frac{\pi(x)}{\lambda(x)}+\frac{\nu(x)}{\lambda(x)}\right)h^2(x)\lambda(x)dx\right)\\
&=2 (\E_{\pi} [h^2]+\E_{\nu}[h^2])\Hell(\pi,\nu)^2. 
\end{align*}
\textbf{Proof of claim 2).} The result comes from the following
\begin{align*}
D_{KL}(\pi,\nu)&=\int \log \frac{\pi(x)}{\nu(x)}\pi(x)dx\\
&= -2 \int \log \frac{\sqrt{\nu(x)}}{\sqrt{\pi(x)}}\pi(x)dx\\
\mbox{ \quad(by $-\log(1+x)\geq -x$)}\quad &\geq -2 \int (\frac{\sqrt{\nu(x)}}{\sqrt{\pi(x)}}-1) \pi(x)dx\\
&=\int (\pi(x)+\pi(x)-2\sqrt{\pi(x)\nu(x)}) dx\\
&=\int (\pi(x)+\nu(x)-2\sqrt{\pi(x)\nu(x)}) dx=2 \Hell(\pi,\nu)^2. 
\end{align*}
\textbf{Proof of claim 3).} The result comes from the following
\begin{align*}
D_{TV}(\pi,\nu)^2&= \left(\frac12 \int\left|\frac{\pi(x)}{\lambda(x)}-\frac{\nu(x)}{\lambda(x)}\right| \lambda(x)dx\right)^2\\
&\leq \left(\frac12\int\left(\sqrt{\frac{\pi(x)}{\lambda(x)}}-\sqrt{\frac{\nu(x)}{\lambda(x)}}\right)^2 \lambda(x)dx\right)\; \left(\frac12
\int\left(\sqrt{\frac{\pi(x)}{\lambda(x)}}+\sqrt{\frac{\nu(x)}{\lambda(x)}}\right)^2 \lambda(x)dx\right)
\\
&\leq \Hell(\pi,\nu)^2\int\left(\frac{\pi(x)}{\lambda(x)}+\frac{\nu(x)}{\lambda(x)}\right) \lambda(x)dx
= 2 \Hell(\pi,\nu)^2. 
\end{align*}
\end{proof}

\begin{lem}
\label{lem:l2hell}
Consider two probability densities $\pi(x) = \frac{1}{Z_f} f(x) \mu(x)$ and $p(x) = \frac{1}{Z_h} h(x) \mu(x)$ where 
\(
Z_f = \int f(x) \mu(x) dx
\)
and
\(
Z_h = \int h(x) \mu(x) dx.
\)
Given the $L^2$ distance between $\sqrt{f}$ and $\sqrt{h}$
\[
\|\sqrt{f}-\sqrt{h}\|_{2,\mu} = \left(\int \big( \sqrt{f(x)} - \sqrt{h(x)}\big)^2 \mu(x) dx\right)^\frac12.
\]
Then we have the following:
\begin{enumerate}[1)]
\item The normalizing constant difference is bounded as $\left| \sqrt{Z_f}  - \sqrt{Z_h}\right| \leq \|\sqrt{f}-\sqrt{h}\|_{2,\mu}$.
\item The squared Hellinger distance is bounded as $\Hell( \pi, p )^2 \leq \frac{2}{Z_f}\|\sqrt{f}-\sqrt{h}\|_{2,\mu}^2 .$
\end{enumerate}
\end{lem}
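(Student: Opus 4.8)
The plan is to work entirely in the Hilbert space $L^2(\mu)$, writing $u=\sqrt{f}$ and $v=\sqrt{h}$, so that $\|u\|_{2,\mu}^2=\int f\mu\,dx=Z_f$ and $\|v\|_{2,\mu}^2=Z_h$; both are finite since $Z_f,Z_h$ are the (finite) normalizing constants. With this notation the two claims become statements about the normalization map $w\mapsto w/\|w\|_{2,\mu}$.

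For claim 1), I would simply observe that $\sqrt{Z_f}=\|u\|_{2,\mu}$ and $\sqrt{Z_h}=\|v\|_{2,\mu}$, so the asserted bound $|\sqrt{Z_f}-\sqrt{Z_h}|\le\|\sqrt f-\sqrt h\|_{2,\mu}$ is exactly the reverse triangle inequality $\bigl|\,\|u\|_{2,\mu}-\|v\|_{2,\mu}\,\bigr|\le\|u-v\|_{2,\mu}$, which is an immediate consequence of the triangle inequality in $L^2(\mu)$.

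For claim 2), I would first use that the Hellinger distance is independent of the choice of dominating density, so that I may take $\mu$ itself as the reference and write
\[
\Hell(\pi,p)^2=\frac12\int\left(\frac{\sqrt{f(x)}}{\sqrt{Z_f}}-\frac{\sqrt{h(x)}}{\sqrt{Z_h}}\right)^2\mu(x)\,dx=\frac12\left\|\frac{u}{\|u\|_{2,\mu}}-\frac{v}{\|v\|_{2,\mu}}\right\|_{2,\mu}^2 .
\]
Then I would bound the normalized difference by the algebraic split
\[
\frac{u}{\|u\|_{2,\mu}}-\frac{v}{\|v\|_{2,\mu}}=\frac{u-v}{\|u\|_{2,\mu}}+v\left(\frac{1}{\|u\|_{2,\mu}}-\frac{1}{\|v\|_{2,\mu}}\right),
\]
so that the triangle inequality gives, after noting $\|v\|_{2,\mu}\bigl|\,\|u\|_{2,\mu}^{-1}-\|v\|_{2,\mu}^{-1}\,\bigr|=\bigl|\,\|u\|_{2,\mu}-\|v\|_{2,\mu}\,\bigr|/\|u\|_{2,\mu}$ and invoking claim 1),
\[
\left\|\frac{u}{\|u\|_{2,\mu}}-\frac{v}{\|v\|_{2,\mu}}\right\|_{2,\mu}\le\frac{\|u-v\|_{2,\mu}}{\|u\|_{2,\mu}}+\frac{\|u-v\|_{2,\mu}}{\|u\|_{2,\mu}}=\frac{2\|\sqrt f-\sqrt h\|_{2,\mu}}{\sqrt{Z_f}} .
\]
Squaring and dividing by $2$ then yields $\Hell(\pi,p)^2\le\frac{2}{Z_f}\|\sqrt f-\sqrt h\|_{2,\mu}^2$, as claimed.

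I do not expect any substantial obstacle: the argument is a routine Hilbert-space normalization estimate. The only points meriting a line of care are the reference-independence of the Hellinger distance (which licenses using $\mu$ as the dominating density in the first display) and the implicit nondegeneracy $Z_f>0$, which holds because $\pi$ is a genuine probability density; one could also remark that the same split with the roles of $u$ and $v$ interchanged gives the symmetric bound with $Z_h$ in place of $Z_f$.
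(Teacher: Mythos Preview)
Your proposal is correct and follows essentially the same route as the paper. For claim 1) you invoke the reverse triangle inequality in $L^2(\mu)$ directly, whereas the paper rederives it via Cauchy--Schwarz on $|Z_f-Z_h|=|\int(\sqrt f-\sqrt h)(\sqrt f+\sqrt h)\mu\,dx|$ and then divides by $\sqrt{Z_f}+\sqrt{Z_h}$; your phrasing is the cleaner of the two. For claim 2) both arguments use the identical algebraic split $\sqrt{f/Z_f}-\sqrt{h/Z_h}=(\sqrt f-\sqrt h)/\sqrt{Z_f}+\sqrt h\,(1/\sqrt{Z_f}-1/\sqrt{Z_h})$; the paper bounds the square pointwise by Young's inequality $(a+b)^2\le 2a^2+2b^2$ before integrating, while you take the $L^2$ triangle inequality first and then square, but both routes land on the same constant $2/Z_f$.
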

\begin{proof}
\textbf{Proof of claim 1).} Note that 
\begin{align*}
\left| Z_f  - Z_h\right| & = \left| \int \big( f(x) - h(x) \big) \mu(x) dx \right| \\
& = \left| \int (\sqrt{f(x)} - \sqrt{h(x)}) (\sqrt{f(x)} + \sqrt{h(x)}) \mu(x) dx \right| \\
& \leq \left| \int (\sqrt{f(x)} - \sqrt{h(x)}) ^2 \mu(x) dx \right|^{1/2} \left| \int (\sqrt{f(x)} + \sqrt{h(x)})^2 \mu(x) dx \right|^{1/2} \\
& \leq \left| \int (\sqrt{f(x)} - \sqrt{h(x)}) ^2 \mu(x) dx \right|^{1/2} \left( \left| \int f(x) \mu(x) dx \right|^{1/2} + \left| \int h(x) \mu(x) dx \right|^{1/2} \right)\\
&=(\sqrt{Z_f}+\sqrt{Z_h})\left| \int (\sqrt{f(x)} - \sqrt{h(x)}) ^2 \mu(x) dx \right|^{1/2} 
\end{align*}
Dividing both sides by $(\sqrt{Z_f}+\sqrt{Z_h})>0$ we have the result.

\noindent\textbf{Proof of claim 2).} The squared Hellinger distance of $\pi$ from $p$ satisfies
\begin{align*}
D_H^2( \pi, p ) & = \frac12 \int \left(  \sqrt{\pi(x)} - \sqrt{p(x)}  \right)^2 dx \\
& = \frac12 \int \left(  \sqrt{\frac{f(x)}{Z_f}} - \sqrt{\frac{h(x)}{Z_h}}  \right)^2 \mu(x) dx \\
& = \frac1{2 Z_f} \int \left(  \sqrt{f(x)} - \sqrt{h(x)} + \sqrt{h(x)} - \sqrt{h(x)} \sqrt{\frac{Z_f}{Z_h}}  \right)^2 \mu(x) dx \\
& = \frac1{2 Z_f} \int \left(  \sqrt{f(x)} - \sqrt{h(x)} + \sqrt{h(x)} \left( 1 -  \sqrt{\frac{Z_f}{Z_h}} \right) \right)^2 \mu(x) dx \\
\mbox{(by Young's ineq.)}& \leq \frac1{ Z_f}\left( \int \left(  \sqrt{f(x)} - \sqrt{h(x)} \right)^2 \mu(x) dx + \left| 1 -  \sqrt{\frac{Z_f}{Z_h}} \right|^2\int h(x) \mu(x) dx \right)  \\
& = \frac1{Z_f} \left( \|\sqrt{f}-\sqrt{h}\|_{\mu}^2 + \left| \sqrt{Z_h} -  \sqrt{Z_f} \right|^2 \right)\leq \frac{2}{Z_f} \|\sqrt{f}-\sqrt{h}\|_{2,\mu}^2. 
\end{align*}
Thus, the result follows.
\end{proof}

\begin{lem}
\label{lem:eigpert}
Let $C\in\reals^{d\times d}$ be symmetric and  positive semidefinite,  $U\in \reals^{d\times d}$ be a rank $p$ symmetric matrix, for $p\geq 1$. Then for any $k$
\[
\lambda_{k+p}(C+U)\leq \lambda_{k} (C).
\]
\end{lem}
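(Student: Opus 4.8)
The plan is to argue directly from the Courant--Fischer min--max characterization of eigenvalues. Write $\lambda_1(M)\ge\lambda_2(M)\ge\cdots\ge\lambda_d(M)$ for the eigenvalues of a symmetric matrix $M$ in decreasing order, and recall that for $1\le j\le d$,
\[
\lambda_j(M)=\min_{\substack{S\subseteq\reals^d\\ \dim S=d-j+1}}\ \max_{\substack{x\in S\\ x\neq 0}}\frac{x^\top M x}{x^\top x}.
\]
It suffices to treat the case $1\le k$ and $k+p\le d$; when $k+p>d$ the left-hand side is vacuous.

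First I would pin down an extremal subspace for $C$: if $v_1,\dots,v_d$ is an orthonormal eigenbasis with $Cv_i=\lambda_i(C)v_i$, set $T=\operatorname{span}\{v_k,\dots,v_d\}$, so that $\dim T=d-k+1$ and $x^\top C x\le\lambda_k(C)\,x^\top x$ for all $x\in T$. Next I would cut this down by the kernel of the perturbation: let $N=\ker U$, which satisfies $\dim N\ge d-p$ since $\operatorname{rank}U\le p$, and put $S=T\cap N$. A dimension count gives
\[
\dim S\ \ge\ \dim T+\dim N-d\ \ge\ (d-k+1)+(d-p)-d\ =\ d-(k+p)+1\ \ge\ 1 .
\]
For every nonzero $x\in S$ we have $Ux=0$ and $x\in T$, hence $x^\top(C+U)x=x^\top C x\le\lambda_k(C)\,x^\top x$. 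Picking any $S'\subseteq S$ with $\dim S'=d-(k+p)+1$ and feeding it into the min--max formula for $C+U$ at index $k+p$ yields
\[
\lambda_{k+p}(C+U)\ \le\ \max_{0\neq x\in S'}\frac{x^\top(C+U)x}{x^\top x}\ \le\ \lambda_k(C),
\]
which is the desired bound.

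The argument is short; the only point requiring care is the subspace-dimension bookkeeping for $S=T\cap N$ and the boundary case $k+p>d$, and one should note that positive semidefiniteness of $C$ is not actually used. As an aside, the statement is also immediate from Weyl's inequality $\lambda_{k+p}(C+U)\le\lambda_k(C)+\lambda_{p+1}(U)$ combined with $\lambda_{p+1}(U)\le 0$, the latter holding because a symmetric matrix of rank at most $p$ has at most $p$ positive eigenvalues; since Weyl's inequality is itself established by exactly this min--max manipulation, I would keep the self-contained version above.
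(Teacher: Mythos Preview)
Your proof is correct and is essentially the same argument as the paper's: the paper takes the orthogonal complement of $\operatorname{span}\{v_1,\dots,v_{k-1},u_1,\dots,u_p\}$ (where $u_1,\dots,u_p$ are the eigenvectors of $U$ with nonzero eigenvalues), which coincides with your $S=T\cap N$ since $N=\ker U=\{u_1,\dots,u_p\}^\perp$ and $T=\{v_1,\dots,v_{k-1}\}^\perp$. Your treatment is slightly more careful about the boundary case $k+p>d$ and correctly observes that semidefiniteness of $C$ is unused.
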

\begin{proof}
By the Courant--Fischer--Weyl min-max principle, we note that for any symmetric matrix $C$
\[
\lambda_{k+p}(C+U)=\min_V\{\max\{x^\top (C+U)x, \|x\|=1, x\in V\}, \text{dim}(V)=d-k-p+1\}.
\]
Let the eigenvectors of $C$ be $v_1,\ldots, v_d$ and the eigenvectors of $U$ with nonzero eigenvalues be $u_1,\ldots, u_p$. Now we pick 
\[
V_\bot=\text{span}\{v_{1},\ldots,v_{k-1},u_1,\ldots, u_p\},
\]
and its orthogonal complement $V'$ as a subspace of dimension at least $d-k-p+1$. Select any subspace $V$ of dimension $d-k-p+1$ from $V'$, then 
\[
\lambda_{k+p}(C+U)\leq \max_{x\in V,\|x\|=1}x^\top (C+U)x=\max_{x\in V,\|x\|=1}x^\top Cx\leq \lambda_{k}(C). 
\]
\end{proof}

\section{Proofs in Section 2}
\label{proof:sec2}

\subsection{Proof of Proposition \ref{prop:logconcave}}
\label{proof:logconcave}
Denote the density $\nu(x)\propto \exp(-V(x))$ and the associated conditional density as $\nu(x_\bot|x_r)$. 
Note that 
\begin{align*}
\mu(x_\bot|x_r)=\frac{\mu(x_r,x_\bot)}{\int \mu(x_r,x_\bot) dx_\bot}=\frac{\exp(-V(x_r,x_\bot))\exp(-U(x_r,x_\bot))}{\int \exp(-V(x_r,x_\bot))\exp(-U(x_r,x_\bot))dx_\bot}.
\end{align*}
Let $c_0=\inf_x \exp(-U(x))$. Then $\exp(-U(x_r,x_\bot))\leq B c_0$, so 
\[
\mu(x_\bot|x_r)\leq \frac{\exp(-V(x_r,x_\bot))Bc_0}{\int \exp(-V(x_r,x_\bot))c_0 dx_\bot}=B \nu(x_\bot|x_r).
\]
Likewise
\[
\mu(x_\bot|x_r)\geq \frac{\exp(-V(x_r,x_\bot))c_0}{\int \exp(-V(x_r,x_\bot))c_0 B dx_\bot}=B^{-1}\nu(x_\bot|x_r).
\]
Finally, note that  
\[
-\nabla^2_{x_\bot}  \log \nu(x_\bot|x_r)=-\nabla^2_{x_\bot} \log \nu(x_\bot,x_r),
\]
which is a sub-matrix of $-\nabla^2 \log \nu(x)$, so its minimal eigenvalue is greater than $c$ by the assumption of strong log-concavity.
Then the Bakry--Emery principle (see, e.g. Theorem 3.1 of \cite{menz2014poincare}) indicates that $\nu(x_\bot|x_r)$ satisfies the Poincar\'{e} inequality with coefficient $c$, i.e. for any $h\in C^1$
\[
\var_{\nu(x_\bot|x_r)} [h]\leq  \frac1{c}\int \|\nabla h(x_r,x_\bot)\|^2 \nu(x_\bot|x_r) dx_\bot.
\]
Finally, we have the Poincar\'{e} inequality for $\mu(x_\bot|x_r)$:
\begin{align*}
\var_{\mu(x_\bot|x_r)} [h]&\leq \int [h(x)-\E_{\nu(x_\bot|x_r)} [h (x)]]^2 \mu(x_\bot|x_r)dx_\bot\\
&\leq B\int [h(x)-\E_{\nu(x_\bot|x_r)} [h(x)]]^2 \nu(x_\bot|x_r)dx_\bot\\
&=B \var_{\nu(x_\bot|x_r)} (h)\\
&\leq  \frac B{c}\int \|\nabla_{x_\bot} h(x_r,x_\bot)\|^2 \nu(x_\bot|x_r)dx_\bot\\
&\leq \frac {B^2}{c}\int \|\nabla_{x_\bot} h(x_r,x_\bot)\|^2 \mu(x_\bot|x_r) dx_\bot.
\end{align*}
\qed

\subsection{Proof of Proposition \ref{prop:pivar}} \label{proof:pivar}
\textbf{Proof of claim 1).} Recall the squared Hellinger distance
\begin{align}
\notag
\Hell(\pi, \apdf_f)^2 &=\frac{1}{2}\int \left(  \sqrt{\frac{\pi(x)}{\mu(x)}}- \sqrt{\frac{\apdf_f(x)}{\mu(x)}} \right)^2 \mu(x)dx \\
\label{tmp:11}
&= \frac{1}{Z}\int \left( \frac12\int (f(x_r, x_\bot)^\frac12 - \bar{f}(x_r)^\frac12)^2 \mu(x_\bot|x_r)dx_\bot \right) \bar{\mu}(x_r)dx_r ,
\end{align}
and definitions of $\bar{f}(x_r)$ and $\bar{g}(x_r)$:  
\[
\bar{f}(x_r)  = \int g(x_r, x_\bot)^2 \mu(x_\bot|x_r)dx_\bot,\quad
\bar{g}(x_r)  = \int g(x_r, x_\bot) \mu(x_\bot|x_r)dx_\bot.
\]
We have the identity:
\begin{equation}
\label{rev:1}
\var_{\mu(x_\bot|x_r)} \big[ g \big] = \bar{f}(x_r) - \bar{g}(x_r)^2.
\end{equation}
The inner integral in \eqref{tmp:11} can be expressed as
\begin{align*}
&\frac12 \int (g(x_r, x_\bot) - \bar{f}(x_r)^\frac12)^2 \mu(x_\bot|x_r)dx_\bot\\
 & = \frac12 \int \big(g(x_r, x_\bot)^2 + \bar{f}(x_r) - 2g(x_r, x_\bot)\bar{f}(x_r)^\frac12 \big) \mu(x_\bot|x_r)dx_\bot \\
& = \bar{f}(x_r) - \bar{f}(x_r)^\frac12\bar{g}(x_r) .
\end{align*}
By the Cauchy--Schwarz inequality, we have $\bar{f}(x_r)^\frac12 \geq \bar{g}(x_r) \geq 0$, and therefore
\[
\bar{f}(x_r) \geq \bar{f}(x_r)^\frac12\bar{g}(x_r) \geq \bar{g}(x_r)^2 \geq 0.
\]
This leads to the inequality
\begin{align*}
\bar{f}(x_r) - \bar{f}(x_r)^\frac12\bar{g}(x_r) & \leq \bar{f}(x_r) - \bar{g}(x_r)^2 = \var_{\mu(x_\bot|x_r)} \big[ g \big].
\end{align*}
Applying this bound above to \eqref{tmp:11}, we find that:
\begin{align*}
\Hell(\pi, \apdf_f)^2 & \leq  \frac{1}{Z}  \int  \var_{\mu(x_\bot|x_r)} \big[ g \big] \bar{\mu}(x_r)dx_r .
\end{align*}
\textbf{Proof of claim 2).} Recall that the normalizing constant of $\apdf_g$ takes the form
\[
Z_g:=\int \bar{g}(x_r)^2 \mu(x)dx. 
\] 
The squared Hellinger distance can be written as 
\begin{equation}
\label{tmp:12}
\Hell(\pi, \apdf_g)^2 = \frac12 \int \Big( \int \Big(\frac{1}{Z^\frac12} g(x_r, x_\bot) - \frac{1}{Z_g^\frac12} \bar{g}(x_r) \Big)^2 \mu(x_\bot|x_r)dx_\bot \Big) \bar{\mu}(x_r)dx_r.
\end{equation}
Using the identity \eqref{rev:1} and $\bar{f}(x_r)^\frac12 \geq \bar{g}(x_r) \geq 0$,
we have
\[
Z = \int f(x) \mu(x) dx = \int \bar{f}(x_r) \bar{\mu}(x_r) dx_r \geq \int \bar{g}(x_r)^2 \bar{\mu}(x_r) dx_r = \int \bar{g}(x_r)^2 \mu(x) dx =  Z_g > 0,
\]
and therefore $\frac{Z_g}{Z} \leq 1$. Then, we can bound the inner integral in \eqref{tmp:12} by
\begin{align*}
\int \Big(\frac{1}{Z^\frac12} g(x_r, x_\bot) - \frac{1}{Z_g^\frac12} \bar{g}(x_r) \Big)^2 \mu(x_\bot|x_r)dx_\bot & = \frac{1}{Z} \Big( \bar{f}(x_r) + \frac{Z}{Z_g} \bar{g}(x_r)^2 - 2\sqrt{\frac{Z}{Z_g}} \bar{g}(x_r)^2 \Big) \\
& \leq \frac{1}{Z} \Big( \bar{f}(x_r) + \frac{Z}{Z_g} \bar{g}(x_r)^2 - 2 \bar{g}(x_r)^2 \Big) \\
& = \frac{1}{Z} \Big( \bar{f}(x_r) + \frac{Z - Z_g}{Z_g} \bar{g}(x_r)^2 - \bar{g}(x_r)^2 \Big) \\
& =\frac{1}{Z} \Big( \var_{\mu(x_\bot|x_r)} \big[ g \big] + \frac{Z - Z_g}{Z_g} \bar{g}(x_r)^2 \Big).
\end{align*}
Substituting this upper bound into \eqref{tmp:12}, we have
\begin{align*}
\Hell(\pi, \apdf_g)^2 & \leq \frac1{2Z} \int \Big( \var_{\mu(x_\bot|x_r)} \big[ g \big] + \frac{Z - Z_g}{Z_g} \bar{g}(x_r)^2 \Big) \bar{\mu}(x_r)dx_r \\
 & = \frac1{2Z} \Big( \int \var_{\mu(x_\bot|x_r)} \big[ g \big] \bar{\mu}(x_r)dx_r + (Z - Z_g) \Big) .
\end{align*}
The term $Z - Z_g$ satisfies
\begin{align}
Z - Z_g & = \int \Big(\int (g(x_r, x_\bot)^2 - \bar{g}(x_r)^2 ) \mu(x_\bot|x_r)dx_\bot\Big) \bar{\mu}(x_r) dx_r\nonumber\\
& = \int \var_{\mu(x_\bot|x_r)} \big[ g \big] \bar{\mu}(x_r) dx_r.\label{eq:zzg}
\end{align}
In summary, we have
\begin{align*}
\Hell(\pi, \apdf_g)^2 & \leq \frac1{Z} \Big( \int \var_{\mu(x_\bot|x_r)} \big[ g \big] \bar{\mu}(x_r)dx_r \Big).
\end{align*}
\qed

\subsection{Proof of Theorem \ref{thm:pig}}\label{proof:pig}

\textbf{Claim 1).} Recall that in \eqref{eqn:decompose}, the projector $P_r$ satisfies ${\rm range}(P_r) = \calX_r$. By the Poincar\'{e} inequality of $\mu(x_\bot|x_r)$, the expected conditional variance of $g$ satisfies
\[
\var_{\mu(x_\bot|x_r)} \big[ g \big] \leq \kappa\int \|(I-P_r)\nabla g(x)\|^2 \mu(x_\bot|x_r)dx_\bot.
\]
Applying Proposition \ref{prop:pivar}, we have
\begin{align*}
\Hell(\pi, \apdf_f)^2 & \leq  \frac{\kappa}{Z}  \int \Big( \int \|(I-P_r)\nabla g(x)\|^2 \mu(x_\bot|x_r)dx_\bot \Big) \mu(x_r)dx_r \\
& = \frac{\kappa}{Z}  \int \|(I-P_r)\nabla g(x)\|^2 \mu(x)dx \\
& = \frac{\kappa}{Z} \int \|P_\bot\nabla \log g(x)\|^2 g(x)^2\mu(x)dx\\
& =\frac{\kappa}{4}  \int \|P_\bot \nabla \log f(x)\|^2 \pi(x)dx.
\end{align*}
Since $\|P_\bot \nabla \log f(x)\|^2 = P_\bot \nabla \log f(x)\nabla \log f(x)^\top P_\bot $, the result follows from
\[
\int \|P_\bot \nabla \log f(x)\|^2 \pi(x)dx = \res(\calX_r,H_1).
\] 
\textbf{Claim 2).} This result follows from Lemma \ref{lem:hel2exp} and claim 1).\\
\textbf{Claim 3).} The same proofs of claims 1) and 2) can be applied.
\hfill \qed

\subsection{Proof of Theorem \ref{thm:pil}}
\label{proof:pil}
\textbf{Proof of claim 1).}
Note that 
\begin{align*}
\int (\log f(x)-\bar{l}(x_r))^2\mu(x_\bot|x_r)dx_\bot&=\var_{\mu(x_\bot|x_r)}[\log f(x)]\\
&\leq  \kappa\int \|\nabla_{x_\bot}\log f(x_\bot,x_r)\|^2 \mu(x_\bot|x_r)dx_\bot\\
&=  \kappa\int \|P_\bot\nabla \log f(x)\|^2 \mu(x_\bot|x_r)dx_\bot.
\end{align*}
Integrating both sides with respect to $\bar{\mu}(x_r)$  yields 
\[
\int (\log f(x)-\bar{l}(x_r))^2\mu(x)dx
\leq \kappa \res(\calX_r, H_{0}).
\]
Then by the Cauchy--Schwarz inequality, we find
\[
\int (\log f(x)-\bar{l}(x_r))^2\mu(x)dx \int f^2(x)\mu(x)dx
\geq Z^2\left(\int \log\frac{\pi(x) Z}{\apdf_l(x) Z_l}\pi(x) dx\right)^2,
\]
where 
\begin{align*}
Z_l&:=\int \exp(\bar{l}(x_r))\bar{\mu}(x_r)dx_r\\
&=\int \exp \left(\int\log f(x)\mu(x_\bot|x_r)dx_\bot\right)\bar{\mu}(x_r)dx_r\\
&\leq \int \left(\int f(x)\mu(x_\bot|x_r) dx_\bot\right)\bar{\mu}(x_r)dx_r=Z. \mbox{\quad(by Jensen's ineq.)}
\end{align*}
Moreover, it is well known that $D_{KL}(\pi,\apdf_l)=\int \log\frac{\pi(x) }{\apdf_l(x) }\pi(x) dx\geq 0$, so 
\[
\left(\int \log\frac{\pi(x) Z}{\apdf_l(x) Z_l}\pi(x) dx\right)^2= 
\left(D_{KL}(\pi,\apdf_l)+\log Z/Z_l\right)^2\geq D_{KL}^2(\pi,\apdf_l).
\]
In conclusion, we have claim 1) by
\begin{align*}
D_{KL}^2(\pi,\apdf_l)&\leq \left(\int \log\frac{\pi(x) Z}{\apdf_l(x) Z_l}\pi(x) dx\right)^2\\
&\leq \frac1{Z^2}\int (\log f(x)-\bar{l}(x_r))^2\mu(x)dx \int f^2(x)\mu(x)dx\\
&\leq\frac{\|f\|_{2,\mu}^2\kappa}{Z^2}\int  \|P_\bot\nabla \log f(x)\|^2\mu(x)dx=\frac{\|f\|_{2,\mu}^2\kappa}{Z^2}\res(\calX_r,H_0).  
\end{align*}
\noindent\textbf{Proof of claim 2).} Applying claim 1) of Lemma \ref{lem:hel2exp},  claim 2) of Lemma \ref{lem:hel2exp}, and then  claim 1) of Theorem \ref{thm:pil},  we have
\begin{align*}
|\E_{\pi} [h]-\E_{\apdf_l} [h]|&\leq\sqrt{ 2(\E_{\pi} [h^2]+\E_{\apdf_l}[h^2])} \Hell(\pi,\apdf_l)\\
&\leq (\E_{\pi} [h^2]+\E_{\apdf_l}[h^2])^{\frac12} \sqrt{D_{KL}(\pi,\apdf_l)}\\
&\leq (\E_{\pi} [h^2]+\E_{\apdf_l}[h^2])^{\frac12}\sqrt{\frac{\|f\|_{2,\mu}}{Z}} (\kappa\res(\calX_r, H_{0}))^{\frac14}. 
\end{align*}
Thus, the result follows.\qed

\section{Proofs in Section \ref{sec:MCerror}}

\subsection{Proof of Theorem \ref{thm:MCerrorfg}} \label{sec:proof_thm3fg}
\noindent\textbf{Proof of claim 1).} 
Recalling the function $g$ takes the form $g=\sqrt{f}$ and using
\[
\bar{g}^M(x_r)=\frac1{M}\sum_{i=1}^{M} g(x_r, X^i_\bot)=\frac1{M}\sum_{i=1}^{M} g(x_r, T(x_r,W^i)), \quad {\rm and}\quad \bar{g}(x_r)=\int g(x_r,x_\bot)\mu(x_\bot|x_r)dx_\bot,
\]
we have the corresponding approximate target densities $\apdf_g^M(x_r, x_\bot) \propto \bar{g}^M(x_r)^2 \mu(x_r, x_\bot) $ and $\apdf_g(x_r, x_\bot) \propto \bar{g}(x_r)^2 \mu(x_r, x_\bot) $, respectively. Note that 
\[
\E_M[\bar{g}^M(x_r)]=\frac1{M}\sum_{i=1}^{M} \E_M g(x_r, T(x_r,W^i)) =\bar{g}(x_r),
\]
\[
\E_M[(\bar{g}^M(x_r)-\bar{g}(x_r))^2]=\frac1{M}\sum_{i=1}^{M} \var_M [g(x_r, T(x_r,W^i))] =\text{var}_{\mu(x_\bot|x_r)}[g(x_r, x_\bot)].
\]
Applying Lemma \ref{lem:l2hell} claim 2), we have
\begin{align*}
\E_M \left[ \Hell(\apdf_g^M,\apdf_g)^2 \right]&\leq 
\frac{2}{Z_g}\E_M\left[\int \int \big( \bar{g}^M(x_r) - \bar{g}(x_r)\big)^2 \bar{\mu}(x_r) dx_r \, \mu(x_\bot|x_r) d x_\bot\right]\\ 
& = \frac{2}{Z_gM} \int \text{var}_{\mu(x_\bot|x_r)}(g(x_r, x_\bot)) \bar{\mu}(x_r) d x_r\\
&\leq \frac{2\kappa}{Z_gM}  \int \|\nabla_{x_\bot}g(x)\|^2 \mu(x)dx\\
&= \frac{2\kappa}{Z_gM}  \int \|\nabla_{x_\bot}\log g(x)\|^2 f(x)\mu(x)dx\\
& =\frac{2\kappa Z}{Z_g M} \res(\calX_r, H_1).   
\end{align*}
Thus, the result follows from Jensen's inequality.

\noindent\textbf{Proof of claim 2).} 
We have the corresponding approximate target densities $\apdf_f^M(x_r, x_\bot) \propto \bar{f}^M(x_r) \mu(x_r, x_\bot) $ and $\apdf_f(x_r, x_\bot) =\frac{1}{Z} \bar{f}(x_r) \mu(x_r, x_\bot) $ where
\[
\bar{f}^M(x_r)=\frac1{M}\sum_{i=1}^{M} f(x_r, X^i_\bot)=\frac1{M}\sum_{i=1}^{M} f(x_r, T(x_r,W^i)), \quad {\rm and}\quad \bar{f}(x_r)=\int f(x_r,x_\bot)\mu(x_\bot|x_r)dx_\bot.
\]
Similar to the proof of claim 1), we apply Lemma \ref{lem:l2hell} claim 2) and find
\begin{align}
\E_M \left[ \Hell(\apdf_f^M,\apdf_f)^2 \right]&\leq \frac{2}{Z}\E_M\left[\int \int \big( \sqrt{\bar{f}^M(x_r)} - \sqrt{\bar{f}(x_r)}\big)^2 \, \mu(x_r,x_\bot) dx_r d x_\bot \right] \nonumber\\
&= \frac{2}{Z}\E_M\left[\int \int \big( \sqrt{\bar{f}^M(x_r)} - \sqrt{\bar{f}(x_r)}\big)^2 \bar{\mu}(x_r) dx_r \, \mu(x_\bot|x_r) d x_\bot\right] \nonumber \\ 
& = \frac{2}{Z}\int \E_M\left[\big( \sqrt{\bar{f}^M(x_r)} - \sqrt{\bar{f}(x_r)}\big)^2\right] \bar{\mu}(x_r) dx_r ,\label{eq:pifm1}
\end{align}
here. Considering the identity 
\begin{align*}
(\sqrt{\bar{f}^M(x_r)}-\sqrt{\bar{f}(x_r)})^2&\leq (\sqrt{\bar{f}^M(x_r)}-\sqrt{\bar{f}(x_r)})^2(\sqrt{\bar{f}^M(x_r)/\bar{f}(x_r)}+1)^2\\
&=(\bar{f}^M(x_r)-\bar{f}(x_r))^2/\bar{f}(x_r),
\end{align*}
the inequality in \eqref{eq:pifm1} also satisfies 
\begin{align}
\E_M \left[ \Hell(\apdf_f^M,\apdf_f)^2 \right] & \leq  \frac{2}{Z} \int \frac{\E_M \left[ (\bar{f}^M(x_r)-\bar{f}(x_r))^2\right]}{\bar{f}(x_r)} \bar{\mu}(x_r) dx_r .\label{eq:pifm2}
\end{align}
Then for each given $x_r$, by independence of $x_\bot^i$ we have 
\begin{align*}
\E_M \left[(\bar{f}^M(x_r)-\bar{f}(x_r))^2\right]&=\frac{1}{M}\text{var}_{\mu(x_\bot|x_r)} f(x_r,x_\bot)\\
&\leq \frac{\kappa}{M} \int \|\nabla_{x_\bot}f(x_r,x_\bot)\|^2\mu(x_\bot|x_r)dx_\bot\\
&=\frac{\kappa}{M} \int \|\nabla_{x_\bot}\log f(x_r,x_\bot)\|^2f(x)^2\mu(x_\bot|x_r)dx_\bot\\
&=\frac{\kappa}{M} \bar{f}(x_r)^2\int \|\nabla_{x_\bot}\log f(x_r,x_\bot)\|^2f(x_\bot |x_r)^2\mu(x_\bot|x_r)dx_\bot.\\
\end{align*}
Substituting the above identify into \eqref{eq:pifm2}, we have
\begin{align*}
\E_M \left[ \Hell(\apdf_f^M,\apdf_f)^2 \right] & \leq \frac{2\kappa}{ZM} \int \int \|\nabla_{x_\bot}\log f(x)\|^2f(x_\bot |x_r)^2\mu(x_\bot|x_r) dx_\bot \, \bar{f}(x_r) \bar{\mu}(x_r)dx_r \\
& = \frac{2\kappa}{ZM} \int \|\nabla_{x_\bot}\log f(x)\|^2f(x_\bot |x_r)  \, f(x) \mu(x) dx \\
& \leq \frac{2 \kappa C_f}{M}\res(\calX_r, H_1),
\end{align*}
where $C_f = \sup_{x_r} \sup_{x_\bot}f(x_\bot|x_r)$. Then, the result follows from Jensen's inequality. \qed

\subsection{Proof of Theorem \ref{thm:MCerrorl}} \label{sec:proof_thm3l}
By the independence of the Monte Carlo samples, we have
\begin{align*}
\E_M \left[(\bar{l}^M(x_r)-\bar{l}(x_r))^2\right]
&=\E_M \left[\left(\frac{1}{M}\sum_{i=1}^Ml(x_r,T(x_r,W^i))-\bar{l}(x_r)
\right)^2\right]\\
&=\frac1{M} \var_{\mu(x_\bot|x_r)} \left[ l(x_r,x_\bot)\right],
\end{align*}
which leads to the following using Assumption \ref{aspt:poincare},
\[
\E_M \left[\int (\bar{l}^M(x_r)-\bar{l}(x_r))^2\bar{\mu}(x_r)dx_r \right]\leq \frac{\kappa}{M} \int \var_{\mu(x_\bot|x_r)} \left[ l(x_r,x_\bot)\right] \bar{\mu}(x_r)dx_r.
\]
By Jensen's inequality, the expected $L^2$ error of $\bar{l}^M(x_r)$ satisfies
\begin{equation}
\E_M\left[ \left(\int (\bar{l}^M(x_r)-\bar{l}(x_r))^2\bar{\mu}(x_r)dx_r \right)^\frac12\right]\leq \frac{\sqrt{\kappa}}{\sqrt{M}} \left(\int \var_{\mu(x_\bot|x_r)} \left[ l(x_r,x_\bot)\right] \bar{\mu}(x_r)dx_r\right)^\frac12.\label{eq:l2lfm}
\end{equation}
Assumption \ref{aspt:poincare} states that
\[
\int \var_{\mu(x_\bot|x_r)} \left[ l(x_r,x_\bot)\right] \bar{\mu}(x_r)dx_r \leq \kappa \int\int \|\nabla_{x_\bot}l(x_r,x_\bot)\|^2\mu(x_\bot|x_r)dx_\bot \bar{\mu}(x_r)dx_r= \kappa \res(\calX_r,H_0),
\]
together with \eqref{eq:l2lfm}, we have 
\[
\E_M\left[ \left(\int (\bar{l}^M(x_r)-\bar{l}(x_r))^2\bar{\mu}(x_r)dx_r \right)^\frac12\right]\leq \frac{\sqrt{\kappa}}{\sqrt{M}} \sqrt{\res(\calX_r,H_0)}.
\]

To obtain the KL divergence, we note that
\begin{align*}
\E_M\left[ D_{KL}(\pi , \apdf^M_{l}) \right] & =  \E_M\left[ \int \log\frac{\pi(x)}{\apdf^M_{l}(x)} \pi(x) d x \right]\\
& =   \E_M\left[ \int  \left( l(x) - \bar{l}^M(x_r) \right) \pi(x) d x \right] + \E_M\left[ \log\frac{Z_M}{Z} \right] ,
\end{align*}
where 
\(
Z_M = \int  e^{\bar{l}^M(x_r)} \bar{\mu}(x_r) d x_r.
\)
The expectation of $Z_M$ satisfies
\begin{align*}
\E_M[Z_M] & = \E_M\left[  \int e^{\bar{l}^M(x_r)} \bar{\mu}(x_r) d x_r  \right] \\
& =  \int \E_M\left[ \exp\left( \frac{1}{M} \sum_{i = 1}^{M} l(x_r, x_\bot^i)\right) \right] \bar{\mu}(x_r) d x_r \\
& \leq \int \left( \E_M\left[ \prod_{i = 1}^{M} \exp\left( l(x_r, x_\bot^i)\right) \right] \right)^{1/M} \bar{\mu}(x_r) d x_r \\
& = \int \left( \prod_{i = 1}^{M} \int \exp\left( l(x_r, x_\bot^i)\right) \mu(x^i_\bot|x_r) dx^i_\bot\right)^{1/M} \bar{\mu}(x_r) d x_r \\
& = \int \int f(x_r, x_\bot) \mu(x_\bot|x_r) dx_\bot \bar{\mu}(x_r) d x_r = Z.
\end{align*}
Therefore, by Jensen's inequality, we have
\[
\E_M\left[ \log\frac{Z_M}{Z} \right] \leq \log \E_M\left[\frac{Z_M}{Z} \right] \leq 0.
\]
Thus, the expected KL satisfies
\begin{align}
\notag
\E_M\left[ D_{KL}(\pi, \apdf_l^M) \right] & \leq  \E_M\left[ \int  \left( l(x) - \bar{l}(x_r) +  \bar{l}(x_r) - \bar{l}^M(x_r) \right) \pi(x) d x \right] \\
\label{tmp:KL}
& =  \int  \left( l(x) - \bar{l}(x_r) \right) \pi(x) d x  + \E_M\left[ \int  \left( \bar{l}(x_r) - \bar{l}^M(x_r) \right) \pi(x) d x \right].
\end{align}
Applying the Cauchy--Schwarz inequality, the first term in \eqref{tmp:KL} can be bounded by
\begin{align*}
\int  \left( l(x) - \bar{l}(x_r) \right) \pi(x) d x & \leq \frac{\|f\|_{2,\mu}}{Z} \left( \int  \left( l(x) - \bar{l}(x_r) \right)^2 \mu(x) d x \right)^\frac12 \\
& = \frac{\|f\|_{2,\mu}}{Z}  \left(\int \text{var}_{\mu(x_\bot|x_r)}\left[l(x_r, x_\bot)\right]  \bar{\mu}(x_r) d x_r\right)^{\frac12},
\end{align*}
and the  second term in \eqref{tmp:KL} can be bounded by 
\begin{align*}
\E_M\left[ \int  \left( \bar{l}(x_r) - \bar{l}^M(x_r) \right) \pi(x) d x \right] & \leq  \frac{\|f\|_{2,\mu}}{Z} \E_M\left[\left( \int  \left( \bar{l}(x_r) - \bar{l}^M(x_r) \right)^2 \bar{\mu}(x_r) d x_r \right)^\frac12 \right] .
\end{align*}
Thus, applying the bound on $\E_M[\cdot]$ in \eqref{eq:l2lfm} and Assumption \ref{aspt:poincare}, we have
\begin{align*}
\E_M\left[ D_{KL}(\pi, \apdf_l^M) \right] & \leq  \frac{\|f\|_{2,\mu}}{Z} \left( 1 + \frac1{\sqrt{M}}\right) \left(\int \text{var}_{\mu(x_\bot|x_r)}\left[l(x_r, x_\bot)\right]   \bar{\mu}(x_r) d x_r\right)^{1/2}\\
&\leq \frac{\|f\|_{2,\mu}}{Z} \left( 1 + \frac1{\sqrt{M}} \right) \left(\kappa \int \|\nabla_{x_\bot} l(x_r,x_\bot)\|^2 \mu(x)dx\right)^{1/2}\\
 &=\frac{\sqrt{\kappa} \|f\|_{2,\mu}}{Z} \left( 1 + \frac1{\sqrt{M}} \right) \sqrt{\res(\calX_r,H_0)}. 
\end{align*}

\qed

\newcommand\numleq[1]{\stackrel{\scriptscriptstyle(\mkern-1.5mu#1\mkern-1.5mu)}{\leq}}

\section{Proofs in Section \ref{sec:MCsubspace}}

\subsection{Proof of Lemma \ref{lem:matrix}}\label{proof:matrix}
Given two positive semidefinite matrices $\Sigma, \widehat{\Sigma} \in \R^{d\times d}$, 
let  $\Vhat_r$ be the matrix consisting of the $d_r$ leading orthonormal eigenvectors of $\Sigmahat$ such that $\Vhat_r^\top\Vhat_r=I_{d_r}$, and $\Lambdahat_r$ be the $d_r\times d_r$ diagonal matrices consisting of the $d_r$ leading eigenvalues of $\Sigmahat$ as its diagonal entries.
Similarly, let $V_r$ and $\Lambda_r$ be the matrices consisting of the $d_r$ leading orthonormal eigenvectors of $\Sigma$ and the $d_r$ leading eigenvalues of $\Sigma$, respectively.
We can define the orthogonal projectors $\Phat_r = \Vhat_r \Vhat_r^\top$ and $\Phat_\bot = I - \Phat_r$.

\noindent\textbf{Proof of claim 1).} 
Since $\tr(\Phat_\bot \Sigma \Phat_\bot)=\tr(\Sigma \Phat_\bot^2)=\tr(\Sigma \Phat_\bot) = \tr(\Sigma )-\tr(\Sigma \Phat_r)$, we have
\begin{align*}
\tr(\Phat_\bot \Sigma \Phat_\bot)&=\tr(\Sigma )-\tr(\Lambda_r )+\tr(\Lambda_r )-\tr(\Sigma \Phat_r) + \tr(\Sigmahat \Phat_r) - \tr(\Sigmahat \Phat_r) \nonumber \\
&=\tr(\Sigma )-\tr(\Lambda_r )+\tr(\Lambda_r ) - \tr(\Sigmahat \Phat_r)  + \tr((\Sigmahat-\Sigma) \Phat_r). 
\end{align*}
The definition of the eigenvalue problem $\Sigmahat \Vhat_r  = \Vhat_r\Lambdahat_r$ gives $\tr(\Sigmahat \Phat_r) = \tr(\Sigmahat \Vhat_r \Vhat_r^\top) = \tr(\Lambdahat_r)$. Together with $\tr(\Sigma ) = \sum_{i=1}^d \lambda_i(\Sigma)$, we have
\begin{align}
\tr(\Phat_\bot \Sigma \Phat_\bot) & = \sum_{i={d_r+1}}^d \lambda_i(\Sigma) + \tr(\Lambda_r - \Lambdahat_r)  + \tr((\Sigmahat-\Sigma) \Phat_r)%
. \label{eq:trineq0}
\end{align}
The term $\tr(\Lambda_r - \Lambdahat_r) $ satisfies
\begin{align*}
\tr(\Lambda_r - \Lambdahat_r) & \leq \sum_{i =1}^{d_r} \left| \lambda_i(\Sigma) -\lambda_i(\Sigmahat)  \right| \leq  \sqrt{d_r} \left(\sum_{i =1}^{d_r} \left( \lambda_i(\Sigma) -\lambda_i(\Sigmahat) \right)^2 \right)^{1/2} .
\end{align*}
Since $\Sigma$ and $\Sigmahat$ are both symmetric, applying Theorem 6.11 of  \cite{Kat82}, we have 
\begin{equation*}
\sum_{i =1}^{d_r} \left( \lambda_i(\Sigma) -\lambda_i(\Sigmahat) \right)^2 \leq \sum_{i =1}^{d} \left( \lambda_i(\Sigma) -\lambda_i(\Sigmahat) \right)^2 \leq \sum_{i =1}^{d} \lambda_i(\Sigma - \Sigmahat) ^2 = \|\Sigma - \Sigmahat\|_F^2,
\end{equation*}
which leads to
\begin{equation}\label{eq:tr1}
\tr(\Lambda_r - \Lambdahat_r) \leq \sqrt{d_r}\|\Sigma - \Sigmahat\|_F.
\end{equation}
Since for any matrix $A \in \R^{d_r \times d_r}$, it satisfies 
\(
\tr(A) \leq \sqrt{d_r}\sqrt{\sum_{i=1}^{d_r} A_{ii}^2} \leq \sqrt{d_r}\|A\|_F,
\)
the term $\tr(\Phat_r ( \Sigmahat - \Sigma))$ satisfies
\begin{equation}\label{eq:tr2}
\tr(\Phat_r ( \Sigmahat - \Sigma)) = \tr(\Vhat_r^\top (\Sigmahat- \Sigma) \Vhat_r ) \leq  \sqrt{d_r} \| \Vhat_r^\top (\Sigmahat - \Sigma) \Vhat_r \|_F\leq \sqrt{d_r} \|\Sigmahat - \Sigma \|_F,
\end{equation}
where the last inequality follows from the property 
\begin{equation*}
\|AB\|^2_F= \tr(ABB^\top A^\top) =\tr(BB^\top A^\top A)\leq \|BB^\top\| \tr( A^\top A)=\|B\|^2\|A\|_F^2. 
\end{equation*}
Substituting \eqref{eq:tr1} and \eqref{eq:tr2} into \eqref{eq:trineq0}, the result of claim 1) follows.

\noindent\textbf{Proof of claim 2).} 
The approximation residual can be expressed as
\begin{align*}
\tr(\Phat_\bot \Sigma \Phat_\bot) & = \tr(\Phat_\bot \Sigmahat \Phat_\bot) + \tr(\Phat_\bot (\Sigma-\Sigmahat) \Phat_\bot) \\
& = \sum_{i=d_r+1}^d\lambda_i(\Sigmahat) + \tr(\Phat_\bot ( \Sigma - \Sigmahat)) + \tr(\Phat_r ( \Sigma - \Sigmahat))- \tr(\Phat_r ( \Sigma - \Sigmahat))\\
& = \sum_{i=d_r+1}^d\lambda_i(\Sigmahat) + \tr(\Sigma - \Sigmahat) + \tr(\Phat_r ( \Sigmahat - \Sigma)).
\end{align*}
Applying \eqref{eq:tr2}, then the result of claim 2) follows. 
\qed

\subsection{Proof of Proposition \ref{prop:Vbound}}\label{proof:Vbound}
For any $i,j$-th component of $\text{V}(H_0, \nu)$, we have
\[
\text{var}_{X\sim \nu}\left[\partial_i \log f(X)\partial_j \log f(X) \frac{\mu(X)}{\nu(X)}\right]
\leq \E_{X\sim \nu}\left[\left(\partial_i \log f(X)\partial_j \log f(X) \frac{\mu(X)}{\nu(X)}\right)^2\right].
\]
This way, summing over all indices, we have
\[
\text{V}(H_0, \nu)\leq \E_{X\sim \nu} \left[\sum_{i,j=1}^d \left(\partial_i \log f(X)\partial_j f(X) \frac{\mu(X)}{\nu(X)}\right)^2\right]
= \E_{X\sim \nu}\left[\|\nabla \log f(X)\|^4 \left(\frac{\mu(X)}{\nu(X)}\right)^2\right].
\]
The bound on $V(H_1,\nu)$ can be shown similarly by replacing $\mu$ with $\pi$.
\qed

\section{Proofs in Section \ref{sec:MCMC}}

\subsection{Proof of Proposition \ref{prop:MCMC}}\label{proof:MCMC}
We denote the composite transition density of lines 3--5 of Algorithm \ref{alg:MCMC} by $Q(x,x')$.
We first verify the detailed balance condition $\pi(x)Q(x,x')=\pi(x')Q(x',x)$. Note that for $x_r\neq x_r', x_\bot\neq x_\bot'$, the overall transition density is
\[
Q(x,x')=p(x_r,x'_r)\beta(x_r,x_r')\mu(x'_\bot|x'_r) \alpha(x,x').
\]
Note that by the formulation of $\beta$, the following detailed balance condition holds
\begin{equation}
\label{rev:2}
p(x_r,x_r')\bar{\apdf}_s(x_r)\beta(x_r,x_r')=p(x'_r,x_r)\bar{\apdf}_s(x'_r)\beta(x'_r,x_r).
\end{equation}
These lead to 
\begin{align*}
\pi(x)Q(x,x')&=\pi(x)p(x_r,x'_r)\beta(x_r,x_r')\mu(x'_\bot|x'_r) \alpha(x,x')\\
&=\pi(x)\frac{\bar{\apdf}_s(x'_r)}{\bar{\apdf}_s(x_r)}p(x'_r,x_r)\beta(x'_r,x_r)\mu(x'_\bot|x'_r) \alpha(x,x')\quad \mbox{( by \eqref{rev:2})}\\
&=p(x'_r,x_r)\beta(x'_r,x_r) \pi(x') \frac{\pi(x)}{\pi(x')} \frac{\bar{\apdf}_s(x'_r)}{\bar{\apdf}_s(x_r)} \mu(x'_\bot|x'_r) \left(1 \wedge \frac{f(x') \bar{\apdf}_s(x_r)\bar{\mu}(x'_r)}{f(x) \bar{\apdf}_s(x'_r)\bar{\mu}(x_r)}\right) \\
&=p(x'_r,x_r)\beta(x'_r,x_r) \pi(x') \frac{f(x)\mu(x)}{f(x')\mu(x')} \frac{\bar{\apdf}_s(x'_r)\mu(x')}{\bar{\apdf}_s(x_r) \bar{\mu}(x'_r)}\left(1 \wedge \frac{f(x') \bar{\apdf}_s(x_r)\bar{\mu}(x'_r)}{f(x) \bar{\apdf}_s(x'_r)\bar{\mu}(x_r)}\right) \\
&=p(x'_r,x_r)\beta(x'_r,x_r) \mu(x_\bot|x_r)  \pi(x') \frac{f(x)\bar{\mu}(x_r)\bar{\apdf}_s(x'_r)}{f(x')\bar{\apdf}_s(x_r)\bar{\mu}(x'_r)}  \left(1 \wedge \frac{f(x') \bar{\apdf}_s(x_r)\bar{\mu}(x'_r)}{f(x) \bar{\apdf}_s(x'_r)\bar{\mu}(x_r)}\right) \\
&=p(x'_r,x_r)\beta(x'_r,x_r) \mu(x_\bot|x_r)  \pi(x')  \left(1 \wedge \frac{f(x) \bar{\apdf}_s(x'_r)\bar{\mu}(x_r)}{f(x') \bar{\apdf}_s(x_r)\bar{\mu}(x'_r)}\right) \\
&=p(x'_r,x_r)\beta(x'_r,x_r) \mu(x_\bot|x_r) \pi(x')\alpha(x',x)=\pi(x')Q(x',x). 
\end{align*}
For the case $x_r=x_r', x_\bot\neq x_\bot'$, the overall transition density is
\[
Q(x,x')=\delta_{x_r=x'_r}\beta_c(x_r)\mu(x'_\bot|x_r) \alpha(x,x'),\quad \beta_c(x_r)=1-\int p(x_r,y_r)\beta(x_r,y_r)dy_r. 
\]
Therefore, as $x_r = x'_r$, we have
\begin{align*}
\pi(x)Q(x,x')&=\pi(x)\delta_{x_r=x_r'}\beta_c(x_r)\mu(x'_\bot|x_r) \alpha(x,x')\\
&=\delta_{x_r=x'_r}\beta_c(x_r)\frac{\bar{\apdf}_s(x'_r)\bar{\pi}(x_r)}{\bar{\apdf}_s(x_r)\bar{\pi}(x_r')} \frac{\pi(x)}{\pi(x')}\pi(x') \mu(x'_\bot|x_r)\alpha(x,x')\\
&=\delta_{x_r=x'_r}\beta_c(x_r)\frac{\bar{\apdf}_s(x'_r)\bar{\pi}(x_r)}{\bar{\apdf}_s(x_r)\bar{\pi}(x_r')} \frac{f(x) \mu(x_\bot|x_r) }{f(x')\mu(x'_\bot|x_r')}\pi(x') \mu(x'_\bot|x_r)\alpha(x,x')\\
&=\pi(x')\delta_{x_r=x'_r}\beta_c(x_r)\mu(x_\bot|x_r) \alpha(x',x)=\pi(x')Q(x',x).
\end{align*}
Note that if the proposal is rejected for the $x_\bot$ part, then the $x_r$ is also rejected. So the case that $x_r\neq x_r', x_\bot= x_\bot'$ can be ignored. Finally, the detailed balance condition is trivial if $x=x'$. In conclusion, the detailed balance condition holds, so $\pi$ is the invariant density of Algorithm \ref{alg:MCMC}.

Next, we investigate the acceptance rate of the complement transition. If we denote the MCMC transition probability for the $x_r$ part as
\[
P(x_r,x'_r)=p(x_r,x'_r)\beta(x_r,x'_r)+\beta_c(x_r)\delta_{x_r=x'_r}. 
\]
Note that the acceptance probability can also be written as
\[
 1 \wedge \frac{f(x')}{f(x)}\frac{\bar{\apdf}_s(x_r)\bar{\mu}(x_r')}{\bar{\apdf}_s(x'_r)\bar{\mu}(x_r)} =
  1 \wedge \frac{f(x')\mu(x')}{f(x)\mu(x)}\frac{\bar{\apdf}_s(x_r)\mu(x_r|x_\bot)}{\bar{\apdf}_s(x'_r)\mu(x'_r|x'_\bot)}=
  1\wedge \frac{\pi(x')\apdf_s(x)}{\pi(x)\apdf_s(x')}.
\]
Then, the acceptance rate is given by
\begin{align*}
\E\left[\alpha(X,X')\right]=&\int\int \pi(x)P(x_r,x'_r)\mu(x'_\bot|x'_r)\left(1\wedge \frac{\pi(x')\apdf_s(x)}{\pi(x)\apdf_s(x')} \right)dxdx'\\
=&\int \int P(x_r,x'_r) \mu(x'_\bot|x_r') \apdf_s(x)\left[ \frac{\pi(x)}{\apdf_s(x)}\wedge  \frac{\pi(x')}{\apdf_s(x')}\right]dxdx'
\end{align*}
Therefore if we denote the likelihood ratio $b(x)=\frac{\pi(x)}{\apdf_s(x)}$, then the average rejection probability is
\[
1-\E\left[ \alpha(X,X')\right]=\int\int P(x_r,x'_r)\apdf_s(x)\mu(x'_\bot|x'_r) \left[(1-b(x'))\vee (1-b(x))\right]dxdx'.
\]
To continue, we note that for any $b\geq 0$, $1-b\leq 2-2\sqrt{b}\leq |2-2\sqrt{b}|$, therefore
\[
(1-b(x))\vee (1-b(x'))\leq |2-2\sqrt{b(x)}|\vee |2-2\sqrt{b(x')}|\leq |2-2\sqrt{b(x)}|+|2-2\sqrt{b'(x)}|. 
\]
As a consequence, 
\begin{align*}
1-\E\left[\alpha(X,X')\right]&\leq \int\int P(x_r,x'_r)\apdf_s(x)\mu(x'_\bot|x'_r) |2-2\sqrt{b(x)}|dxdx'\\
&\quad+\int\int P(x_r,x'_r)\apdf_s(x)\mu(x'_\bot|x'_r) |2-2\sqrt{b(x')}|dxdx'\\
&=2\int \int P(x_r,x'_r)\apdf_s(x)\mu(x'_\bot|x'_r) |2-2\sqrt{b(x)}|dxdx'=4\int \apdf_s(x)|1-\sqrt{b(x)}|dx.
\end{align*}
Above, the first identity is obtained by observing  that $P(x_r,x'_r)\apdf_s(x)\mu(x'_\bot|x'_r)=P(x'_r,x_r)\apdf_s(x')\mu(x_\bot|x_r)$. The second identity is obtained by observing that 
\[
\int\int P(x_r,x'_r)\mu(x'_\bot|x'_r)dx'_r dx'_\bot=\int P(x_r,x'_r)\left(\int \mu(x'_\bot|x'_r)dx'_\bot\right) dx_r'=\int P(x_r,x'_r)dx'_r=1. 
\]
Then by the Cauchy--Schwarz inequality,
\begin{align*}
\int \apdf_s(x)|1-\sqrt{b(x)}|dx\leq\sqrt{\int \apdf_s(x)(1-\sqrt{b(x)})^2dx}\sqrt{\int \apdf_s(x)dx}\leq \sqrt{2}\Hell(\pi,\apdf_s).
\end{align*}
In summary, we have 
\(
\E\left[ \alpha(X,X')\right]\geq 1-4\sqrt{2}\Hell(\pi,\apdf_s). 
\)
\qed

\subsection{Proof of Proposition \ref{prop:SMC}}\label{proof:SMC}
Recall that 
\begin{equation*}
\pi_k(x)=\frac{1}{Z_k} f(x)^{\beta_k} \mu(x), \quad \pi_{k+1}(x)=\frac{1}{Z_{k+1}} f(x)^{\beta_{k+1}} \mu(x),
\end{equation*}
and
\begin{align*}
  V_{k+1}(H_1,\pi_{k}) & = \sum_{i,j=1}^d\text{var}_{X\sim \pi_{k}}\left[\partial_i\log f(X)\partial_j \log f(X)\frac{\pi_{k+1}(X)}{\pi_k(X)}\right]. 
\end{align*}
For all $x$ such that $\pi_k(x)>0$, we have
\begin{equation*}
  \frac{\pi_{k+1}(x)}{\pi_k(x)} = \frac{Z_k}{Z_{k+1}} f(x)^{(\beta_{k+1}-\beta_k)} = \frac{Z_k}{Z_{k+1}} f(x)^\delta,
\end{equation*}
where $\delta=\beta_{k+1}-\beta_k$. Thus, the variance $V_{k+1}(H_1,\pi_{k})$ satisfies
\begin{align}
  V_{k+1}(H_1,\pi_{k}) & = \sum_{i,j=1}^d\text{var}_{X\sim \pi_{k}}\left[ \big[H^{(k,k+1)}(X)\big]_{ij}\right] \nonumber \\
  & \leq \sum_{i,j=1}^d \E_{X\sim \pi_k}\left[ \big[H^{(k,k+1)}(X)\big]_{ij}^2\right] \nonumber \\
  & = \E_{X\sim \pi_k}\left[ \| H^{(k,k+1)}(X)\|_{F}^2\right]. \label{eq:bound_smc}
\end{align}
Since the square of the $i,j$-th component of $H^{(k,k+1)}(x)$ is given by 
\[
\big[H^{(k,k+1)}(x)\big]_{i,j}^2=
\left(\frac{Z_k}{Z_{k+1}}\partial_i \log f(x)\partial_j \log f(x) f(x)^\delta\right)^2,
\]
the variance upper bound in \eqref{eq:bound_smc} can also be expressed as
\begin{align*}
  \E_{X\sim \pi_k}\left[ \| H^{(k,k+1)}(X)\|_{F}^2\right] & = \frac{Z^2_k}{Z^2_{k+1}}\sum_{i,j=1}^d \E_{X\sim \pi_k}\left[(\partial_i \log f(X))^2(\partial_j \log f(X))^2 f(X)^{2\delta}\right] \\
  &=\frac{Z^2_k}{Z^2_{k+1}}\int \|\nabla \log f(x)\|^4 f(x)^{2\delta}\pi_k(x)dx\\ 
  &= \frac{Z_k}{Z^2_{k+1}}\int  \|\nabla \log f(x)\|^4f(x)^{2\delta}f(x)^{\beta_k}\mu(x)dx\\
  &= \frac{Z_k}{Z^2_{k+1}}\int  \|\nabla \log f(x)\|^4 f(x)^{\beta_{k+1}+\delta}\mu(x)dx.\\
  & = \frac{Z_k}{Z_{k+1}}\int  \|\nabla \log f(x)\|^4 f(x)^\delta\pi_{k+1}(x)dx.
\end{align*}
Thus, the variance upper bound is finite if $\|\nabla \log f(x)\|^4f(x)^{\beta_{k+1}+\delta}$ is integrable under $\mu$.
\qed

\section{Proofs in Section \ref{sec:dimension}}

\subsection{Proof of Proposition \ref{prop:linear}}
\label{proof:linear}
For the linear inverse problem,  the  likelihood function and its log gradient are given by 
\[
f(x)= \exp\left(-\frac12 \|Ax-y\|^2\right),\quad \nabla \log f(x)= A^\top (Ax-y). 
\]
The posterior distribution of $X$ is given by 
\[
\pi(x)\sim \mathcal{N}((C_A+I)^{-1}u, (C_A+I)^{-1}),\quad C_A=A^\top A,\quad u=A^\top y.
\]

\noindent\textbf{Proof of claim 1).} The $H_0$ matrix is given by 
\[
H_0=\E_{\mu} \left[ \nabla \log f(X)\nabla \log f(X)^\top\right] =A^\top (AA^\top +yy^\top )A=C_A^2+U,
\]
where  $U=A^\top  yy^\top A^\top $ is a rank 1 matrix. Apply Lemma \ref{lem:eigpert} with $H_0=C_A^2+U$, we have
\[
\lambda_{k+1}(H_0)\leq \lambda_{k}(C_A^2). 
\]
\noindent\textbf{Proof of claim 2).}  The $H_1$ matrix is given by 
\[
H_1 =\E_{\pi} \left[ \nabla \log f(X)\nabla \log f(X)^\top \right] = \E_{\pi} \left[ A^\top (AX-y) (AX-y)^\top A\right].
\]
For a given $y$ and $X\sim \pi$, the vector $A^\top(AX-y)$ follows a Gaussian distribution with the mean 
\[
A^\top \left( A (C_A+I)^{-1} u - y \right)  = C_A (C_A+I)^{-1} u - A^\top y = C_A (C_A+I)^{-1} u - u = - (C_A+I)^{-1} u
\]
and the covariance $C_A (C_A+I)^{-1} C_A$. This way, we have
\[
H_1 = C_A (C_A+I)^{-1} C_A + (C_A+I)^{-1} U (C_A+I)^{-1},
\]
where $(C_A+I)^{-1} U (C_A+I)^{-1}$ is a rank-$1$ matrix. Thus, by Lemma \ref{lem:eigpert}, we have
\[
\lambda_{k+1}(H_1)\leq \lambda_{k}(C_A(C_A+I)^{-1}C_A).  
\]
Finally note that eigenvectors of $C_A(I+C_A)^{-1}C_A$ are identical with the eigenvectors of $C_A$, with 
\[
v_k^\top C_A(I+C_A)^{-1}C_Av_k= \frac{\lambda_k(C_A)^2}{1+\lambda_k(C_A)}, 
\]
and we have our claim.

\noindent\textbf{Proof of claim 3).} 
Note that for a fixed $y$, the normalizing constant $Z$ is the integral
\begin{align*}
Z = \int (2\pi)^{-d/2}\exp\left( -\frac12 \|Ax - y\|^2 - \frac12 \|x\|^2\right) dx ,
\end{align*}
which is equivalent to 
\begin{align*}
(2\pi)^{-d_y/2} Z = \int (2\pi)^{-d/2 - d_y/2}\exp\left( -\frac12 \|Ax - y\|^2 - \frac12 \|x\|^2\right) dx ,
\end{align*}
where the integrand of the right hand side is the joint probability density of $x$ and $y$. In other words, if we view $(2\pi)^{-d_y/2}Z$ as a function of $y$, then the right-hand side of the equation above is the marginal probability density of $y$. Since the marginal of a Gaussian distribution is still Gaussian,  it is easy to see that $y$ follows the Gaussian distribution $\mathcal{N}(0, I + A A^\top)$. Thus, we have
\begin{equation*}%
Z = \det(I+AA^\top)^{-1/2} \exp\left(-\frac12 y^\top (I + A A^\top)^{-1} y\right).
\end{equation*}
Since $\det(I+AA^\top) = \det(I+A^\top A) = \det(I+C_A)$, we have
\[
\frac{1}{\sqrt{ \det(I+C_A) }}\geq Z \geq \frac{1}{\sqrt{ \det(I+C_A) }} \exp\left(-\frac12 \|y\|^2\right).
\]
Then, the result follows. 

\noindent\textbf{Proof of claim 4).} 
We have
\begin{align*}
\|f\|^2_{2,\mu} & = (2\pi)^{-d/2}\int \exp\left(- \|Ax - y\|^2 -\frac12 \|x\|^2\right) dx \\
& = (2\pi)^{d_y/2} \left[(2\pi)^{-d/2-d_y/2}\int \exp\left(- \frac12 \|\tilde{A}x - \tilde{y}\|^2 -\frac12 \|x\|^2\right) dx\right]
\end{align*}
with $\tilde{A} = \sqrt{2}A$ and $\tilde{y} = \sqrt{2} y$. Note that the term inside the square brackets is the normalizing constant of the posterior defined by the prior $\mathcal{N}(0, I_d)$, the parameter-to-observable map $\tilde{A}$, and the data $\tilde{y}$. This way, applying a similar identity to that in claim 3), we have
\begin{align*}
\|f\|^2_{2,\mu} & = \det(I+\tilde{A}\tilde{A}^\top)^{-1/2} \exp\left(-\frac12 \tilde{y}^\top (I + \tilde{A} \tilde{A}^\top)^{-1} \tilde{y}\right) \\
& = \frac{1}{\sqrt{\det(I+2C_A)} } \exp\left(- y^\top (I + 2 A A^\top)^{-1} y \right).
\end{align*}
This leads to 
\begin{equation}\label{eq:rationfz}
\frac{\|f\|_{2,\mu}}{Z} = \frac{\det(I+C_A)^{1/2}}{\det(I+2C_A)^{1/4}} \exp\left( \frac12 y^\top (I + A A^\top)^{-1} y - \frac12 y^\top (I + 2 A A^\top)^{-1} y \right).
\end{equation}
Suppose the eigenvalue decomposition of $AA^\top$ is given by $A A^\top=U\Lambda U^\top$, then 
\begin{align*}
(I + A A^\top)^{-1} - (I + 2 A A^\top)^{-1}&=U(I + \Lambda)^{-1} U^\top- U(I + 2 \Lambda)^{-1}U^\top\\
&=U\Lambda (I + \Lambda)^{-1}(I + 2 \Lambda)^{-1} U^\top.
\end{align*}
Therefore if $AA^\top$ has an eigenvector $u_i$ so that $AA^\top u_i=\lambda_i u_i$, then it is also an eigenvector of  $(I + A A^\top)^{-1} - (I + 2 A A^\top)^{-1}$:
\[
((I + A A^\top)^{-1} - (I + 2 A A^\top)^{-1})u_i=\frac{\lambda_i}{(1+\lambda_i)(1+2\lambda_i)}u_i. 
\]
Since $\frac{\lambda_i}{(1+\lambda_i)(1+2\lambda_i)}=\frac{1}{2\lambda_i+1/\lambda_i+3}\leq \frac{\sqrt{2}-1}{2(\sqrt{2}+1)}$,
we obtain the upper bound following \eqref{eq:rationfz}
\begin{align*}
\frac{\|f\|_{2,\mu}}{Z} & \leq \left( \prod_{k=1}^d  \frac{1 + 2\lambda_i(C_A) + \lambda_i(C_A)^2}{ 1 + 2\lambda_i(C_A)}\right)^{1/4} \exp\left( \frac{\sqrt{2}-1}{2 (\sqrt{2}+1)} \|y\|^2 \right)\\
& \leq \prod_{k=1}^d  \left( 1 + \lambda_i(C_A)^2\right)^{1/4} \exp\left( \frac{\sqrt{2}-1}{2 (\sqrt{2}+1)} \|y\|^2 \right) \\
& = \det(I + C_A^2)^{1/4} \exp\left( \frac{\sqrt{2}-1}{2 (\sqrt{2}+1)} \|y\|^2 \right) \\
& = \det(I + C_A^2)^{1/4} \exp\left( \frac12 (\sqrt{2}-1)^2  \|y\|^2 \right).
\end{align*}

\noindent\textbf{Proof of claim 5).} 
Recall that $\nabla \log f(x)=A^\top (Ax-y)$. We introduce the random variable $\zeta = A^\top AX- A^\top y$ that follows the Gaussian distribution $p(\zeta) = \mathcal{N}(-A^\top y, C_A^2)$.
Employing the upper bound established in Proposition \ref{prop:Vbound}, we have
\[
V(H_0,\mu)\leq \E_{\mu} \left[ \|\nabla \log f(X)\|^4 \right] = \E_{\zeta\sim p} \Big[ \Big(\sum_{i = 1}^d \zeta_i^2 \Big)^2\Big] = \E_{\zeta\sim p} \Big[ \sum_{i, j} \zeta_i^2 \zeta_j^2 \Big].
\]
Note that if $X\sim \mathcal{N}(m,\sigma^2)$, we can write it as $X=m+\sigma Z$ where $Z\sim \mathcal{N}(0,1)$. With this we find that 
\[
\E X^4=m^4+2\sigma^2m^2+3\sigma^4\leq 3(m^2+\sigma^2)^2=3(\E X^2)^2.
\]
For $i = j$, we have $\E [\zeta^4_i]=3\big(\E[\zeta^2_i]\big)^2$ because $\zeta_i$ is Gaussian distributed and all real-valued Gaussian random variables have zero excess kurtosis, and for $i \neq j$, we have
\(
\E [\zeta_i^2 \zeta_j^2] \leq \sqrt{ \E [\zeta_i^4] \E[\zeta_j^4] } = 3 \E [\zeta_i^2] \E[\zeta_j^2].
\)
This leads to
\begin{equation}
\label{rev:5}
\E_{\zeta\sim p} [ \|\zeta\|^4]=\E_{\zeta\sim p} \Big[ \sum_{i, j} \zeta_i^2 \zeta_j^2 \Big] \leq 3 \sum_{i, j} \E [\zeta_i^2] \E[\zeta_j^2] = 3 \Big( \sum_{i}^d \E [\zeta_i^2] \Big)^2.
\end{equation}
Since $\zeta \sim \mathcal{N}(-A^\top y, C_A^2)$, we have
\[
\sum_{i}^d \E [\zeta_i^2] = \tr(C_A^2) + \tr( A^\top y y^\top A ) = \sum_{i = 1}^d \lambda_i(C_A)^2 + \|A^\top y\|^2.
\]
Thus, we have
\[
V(H_0,\mu)\leq 3 \Big( \sum_{i = 1}^d \lambda_i(C_A)^2 + \|A^\top y\|^2 \Big)^2 \leq 6 \Big(\sum_{i = 1}^d \lambda_i(C_A)^2 \Big)^2 + 6 \|A^\top y\|^4.
\]
For the case $k=1$, employing the upper bound established in Proposition \ref{prop:Vbound}, we have
\begin{align*}
V(H_1,\mu)\leq \E_\mu\left[ \|\nabla \log f(X)\|^4 \frac{\pi(X)^2}{\mu(X)^2} \right] & =  \frac{1}{Z^2} \int \|\nabla \log f(x)\|^4 f(x)^2 \mu(x) dx.
\end{align*}
Define a new distribution with the density 
\begin{align}
\notag
\pi_2(x) = \frac{1}{Z_2} f(x)^2 \mu(x)& \propto\exp\left( - \|Ax - y\| - \frac12 \|x\|^2\right)\\
\label{rev:3}
&\propto \exp\left( -\frac12 x^\top (I+2A^\top A) x - y^\top A x\right).
\end{align}
Clearly $\pi_2$ is a Gaussian distribution, its mean is 
$2 (I + 2 C_A)^{-1} A^\top y$, its covariance is $ (I + 2 C_A)^{-1} $, and the normalizing constant $Z_2 = \|f\|_{2,\mu}^2$ satisfies
\[
\frac{Z_2}{Z^2}  \leq \prod_{k=1}^d  \left( 1 + \lambda_i(C_A)^2\right)^{1/2} \exp\left( \frac{\sqrt{2}-1}{\sqrt{2}+1} \|y\|^2 \right) = \sqrt{\det(I + C_A^2)} \exp\left( (\sqrt{2}-1)^2\|y\|^2 \right),
\]
as the result of claim 4). 
We express the upper bound on the variance as
\[
V(H_1,\mu) \leq \E_\mu\left[ \|\nabla \log f(X)\|^4 \frac{\pi(X)^2}{\mu(X)^2} \right] = \frac{Z_2}{Z^2} \E_{\pi_2}\left[ \|\nabla \log f(X)\|^4 \right] = \frac{Z_2}{Z^2} \E_{\pi_2}\left[ \|C_A X - A^\top y\|^4 \right].
\]
Similar to the proof of the first part, we can introduce $\zeta = C_A X - A^\top y$, for $X\sim \pi_2.$ Then $\zeta$ follows the Gaussian distribution with the mean
\[
2 C_A (I + 2 C_A)^{-1} A^\top y - A^\top y = - (I + 2 C_A)^{-1} A^\top y,
\]
and the covariance $C_A (I + 2 C_A)^{-1}  C_A$. This leads to
\begin{align}
\notag
\sum_{i}^d \E [\zeta_i^2] & = \tr(C_A (I + 2 C_A)^{-1}  C_A) + \|(I + 2 C_A)^{-1} A^\top y\|^2 \\
\notag
& = \sum_{i = 1}^d \frac{\lambda_i(C_A)^2}{1 + 2\lambda_i(C_A)} + \|(I + 2 C_A)^{-1} A^\top y\|^2 \\
\label{rev:4}
& \leq \sum_{i = 1}^d \frac{\lambda_i(C_A)^2}{1 + 2\lambda_i(C_A)} + \| A^\top y\|^2.
\end{align}
Thus, following a similar derivation to the case $k=0$, we have
\[
V(H_1,\mu)\leq  6 \sqrt{\det(I + C_A^2)}  \exp\left( (\sqrt{2}-1)^2 \|y\|^2 \right) \left( \left( \sum_{i = 1}^d \frac{\lambda_i(C_A)^2}{1 + 2\lambda_i(C_A)}\right)^2 + \| A^\top y\|^4 \right).
\]

\noindent\textbf{Proof of claim 6).} 
Consider the tempered target density
\[
\pi_\beta(x) = \frac{1}{Z_\beta} f(x)^\beta \mu(x) = \frac{1}{(2\pi)^{d/2}  Z_\beta} \exp\left( - \frac{1}{2} \|\sqrt{\beta}A x - \sqrt{\beta}y\|^2 - \frac{1}{2}\|x\|^2\right),
\]
where the normalizing constant takes the form
\begin{equation}
\label{rev:6}
Z_\beta = \det(I+\beta A A^\top)^{-1/2} \exp\left(-\frac{\beta}2 y^\top (I + \beta A A^\top)^{-1} y\right).
\end{equation}
We use the shorthand notations $\pi_k$ and $Z_k$ to denote $\pi_{\beta_k}$ and $Z_{\beta_k}$, respectively. Let $\delta = \beta_{k+1}-\beta_k$, we have
\begin{align*}
V_{k+1}(H_1,\pi_k)&\leq \frac{Z_{k}^2}{Z_{k+1}^2}\E_{\pi_k}\left[  \|\nabla \log f(X)\|^4 f(X)^{2\delta}\right] \\
& = \frac{Z_{k}}{Z_{k+1}^2}\E_{\mu}\left[  \|\nabla \log f(X)\|^4 f(X)^{2\delta + \beta_k}\right].
\end{align*}
Following a similar procedure in the proof of claim 5), we define a new distribution with the density 
\[
\pi_\tau(x) = \frac{1}{ Z_\tau} f(x)^{\tau} \mu(x) dx,
\]
where $\tau = 2\delta + \beta_k = 2\beta_{k+1} - \beta_k = \beta_{k+1} + \delta$. The density $\pi_\tau(x)$ has the mean $\tau (I + \tau C_A)^{-1} A^\top y$ and the covariance $ (I + \tau C_A)^{-1} $.
We express the upper bound on the variance as
\begin{align}
V_{k+1}(H_1,\mu) & \leq \frac{Z_{k}}{Z_{k+1}^2}\E_{\mu}\left[  \|\nabla \log f(X)\|^4 f(X)^\tau\right] \nonumber \\
& = \frac{Z_\tau Z_{k}}{Z_{k+1}^2} \E_{\pi_\tau}\left[ \|\nabla \log f(X)\|^4 \right] \nonumber \\
& = \frac{Z_\tau Z_{k}}{Z_{k+1}^2} \E_{\pi_\tau}\left[ \|C_A X - A^\top y\|^4 \right].\label{eq:claim6}
\end{align}
Then, for $X\sim \pi_\tau$, we introduce $\zeta = C_A X - A^\top y$, which follows the Gaussian distribution with the mean
\[
\E [\zeta]=(\tau C_A (I + \tau C_A)^{-1}-I) A^\top y=- (I + \tau C_A)^{-1} A^\top y,
\]
and the covariance $C_A (I + \tau C_A)^{-1}  C_A$. Similar to the derivation of \eqref{rev:4}, this leads to
\begin{align*}
\sum_{i=1}^d \E [\zeta_i^2]=\E [\|\zeta\|^2] 
&=\|\E[\zeta]\|^2+\tr(\text{cov}[\zeta])\\
& = \tr(C_A (I + \tau C_A)^{-1}  C_A) + \|(I + \tau C_A)^{-1} A^\top y\|^2 \\
& \leq \sum_{i = 1}^d \frac{\lambda_i(C_A)^2}{1 + \tau \lambda_i(C_A)} + \| A^\top y\|^2,
\end{align*}
which yields the following using the same argument used in \eqref{rev:5}:
\begin{equation}
\E_{\pi_\tau}\left[ \|C_A X - A^\top y\|^4 \right]=\E_{\pi_\tau}\left[ \|\zeta\|^4 \right]\leq 6 \left( \Big(\sum_{i = 1}^d \frac{\lambda_i(C_A)^2}{1 + \tau \lambda_i(C_A)}\Big)^2 + \| A^\top y\|^4 \right).\label{eq:claim60}
\end{equation}
Recalling \eqref{rev:6}, the ratio between normalizing constants in \eqref{eq:claim6} can be expressed as
\begin{align*}
\frac{Z_\tau Z_{k}}{Z_{k+1}^2} & = \frac{\det(I+\beta_{k+1} A A^\top) }{\sqrt{ \det(I+\beta_{k} A A^\top) \det(I+\tau A A^\top) } } \exp\left(\frac12 y^\top T y \right) ,
\end{align*}
where
\[
T = 2 \beta_{k+1} (I + \beta_{k+1} A A^\top)^{-1} - \beta_k (I + \beta_{k} A A^\top)^{-1} - \tau (I + \tau A A^\top)^{-1}.
\]
In the above equation, the ratio between determinants can be expressed as
\begin{align*}
\frac{\det(I+\beta_{k+1} A A^\top) }{\sqrt{ \det(I+\beta_{k} A A^\top) \det(I+\tau A A^\top) } } & = \left( \prod_{i = 1}^d \frac{1 + 2\beta_{k+1}\lambda_i(C_A) + \beta_{k+1}^2\lambda_i(C_A)^2}{1 + 2\beta_{k+1}\lambda_i(C_A) + \beta_k \tau \lambda_i(C_A)^2 } \right)^{1/2} .
\end{align*}
Since $\beta_{k+1}^2 - (\beta_{k+1} - \beta_k)^2 =  \beta_k (2\beta_{k+1} - \beta_{k})$, which gives  $\beta_{k+1}^2 = \delta^2 + \beta_k \tau $, and thus
\begin{align}
\frac{\det(I+\beta_{k+1} A A^\top) }{\sqrt{ \det(I+\beta_{k} A A^\top) \det(I+\tau A A^\top) } } & =  \prod_{i = 1}^d \left( 1 + \frac{\delta^2 \lambda_i(C_A)^2}{1 + 2\beta_{k+1}\lambda_i(C_A) + \beta_k \tau \lambda_i(C_A)^2 }\right) ^{1/2} \nonumber \\
& \leq \prod_{i = 1}^d \left( 1 + \delta^2 \lambda_i(C_A)^2\right) ^{1/2}.\label{eq:claim61}
\end{align}
Since the matrix $T$ and the matrix $AA^\top$ share the same eigenvectors, the eigenvalues of $T$ can be expressed as
\begin{align*}
\lambda_i(T) & = \frac{\beta_{k} }{1 + \beta_{k+1} \lambda_i(AA^\top)} - \frac{\beta_{k} }{1 + \beta_{k} \lambda_i(AA^\top)} + \frac{\tau }{1 + \beta_{k+1} \lambda_i(AA^\top)} - \frac{\tau }{1 + \tau \lambda_i(AA^\top)} \\
& = \frac{ - \beta_{k}\delta \lambda_i(AA^\top) }{(1 + \beta_{k+1} \lambda_i(AA^\top))(1 + \beta_{k} \lambda_i(AA^\top))} + \frac{\tau\delta \lambda_i(AA^\top)}{(1 + \beta_{k+1} \lambda_i(AA^\top))(1 + \tau \lambda_i(AA^\top))}\\
& = \frac{\delta\lambda_i(AA^\top)}{1 + \beta_{k+1} \lambda_i(AA^\top)}\left( \frac{\tau }{1 + \tau \lambda_i(AA^\top)} -  \frac{\beta_{k} }{1 + \beta_{k} \lambda_i(AA^\top)} \right)\\
& = \frac{2 \delta ^2 \lambda_i(AA^\top)}{(1 + \beta_{k+1} \lambda_i(AA^\top))(1 + \tau \lambda_i(AA^\top))(1 + \beta_{k} \lambda_i(AA^\top))} .
\end{align*}
This way, we have $\lambda_i(T)\leq 2 \delta ^2 \lambda_i(AA^\top)$, and thus $T \preceq 2 \delta ^2 AA^\top$. This leads to 
\[
\exp\left(\frac12 y^\top T y \right) \leq \exp\left(\delta ^2 \|A^\top y\|^2 \right) .
\]
Substituting the above inequality, \eqref{eq:claim60}, and \eqref{eq:claim61} into \eqref{eq:claim6}, we have
\[
V_{k+1}(H_1,\mu) \leq 6 \sqrt{\det(I + \delta^2 C_A^2)} \exp\left(\delta^2 \|A^\top y\|^2\right) \left( \Big(\sum_{i = 1}^d \frac{\lambda_i(C_A)^2}{1 + \tau \lambda_i(C_A)}\Big)^2 + \| A^\top y\|^4 \right).
\]
\qed

\subsection{Proof of Corollary \ref{cor:lin}}
\label{proof:lin}
We will first show that the eigenvalues of $C_A$ are bounded by those of $\Gamma$. 
By the Courant--Fischer--Weyl min-max principle, we note that for any symmetric matrix $C$
\[
\lambda_r(C)=\min_V\{\max_{x\in V}\{x^\top Cx, \|x\|=1\}, \text{dim}(V)=d-r+1\}.
\]
Let the eigenvectors of $\Gamma$ be $v_1,\ldots, v_d$. Now we pick 
\(
V=\text{span}\{v_{r},\ldots,v_d\}.
\)
Then for any $x\in V$ of unit $\ell^2$ norm, 
\[
x^\top C_Ax=x^\top  \Gamma^{1/2} G^\top G\Gamma^{1/2}x=\|G\Gamma^{1/2}x\|^2\leq \|G\|^2 \|\Gamma^{1/2}x\|^2\leq \|G\|^2\lambda_r(\Gamma). 
\]
Therefore, $\lambda_r(C_A)\leq \|G\|^2 \lambda_r(\Gamma)$. 

The following identities are useful: For $r\geq 2$, we have
\[
 \sum_{j=r}^d j^{-\alpha}\leq \int^{\infty}_{r-1} x^{-\alpha} dx\leq \frac{(r-1)^{1 - \alpha}}{\alpha-1},
\]
and for $r=1$, we have
\[
1+ \sum_{j=2}^d j^{-\alpha}\leq \frac{\alpha}{\alpha-1}.
\]
For any $a>0$, we have
\begin{align*}
\prod_{j=r}^d(1+a\lambda_j(C_A))&\leq\prod_{j=r}^d(1+a\|G\|^2\lambda_j(\Gamma))\\
&\leq \exp\left(a\|G\|^2\sum_{j=r}^d\lambda_j(\Gamma)\right)=\exp\left(a\|G\|^2C_\Gamma \sum_{j=r}^d j^{-\alpha} \right).
\end{align*}
Thus, we have
\begin{equation}\label{eq:coro61}
\sum_{i = 1}^d \lambda_i(C_A)^2 \leq  \|G\|^4 C_\Gamma^2 \frac{2 \alpha}{ 2 \alpha - 1},
\end{equation}
and
\begin{equation}\label{eq:coro62}
\det(I + a C_A^2) \leq \exp\left(a \|G\|^4 C_\Gamma^2 \frac{2 \alpha}{ 2 \alpha - 1} \right).
\end{equation}
Then, replacing the estimates in Proposition \ref{prop:linear} with these upper bounds, the results follow. \qed

\section*{Acknowledgments}
X. Tong's research is supported by MOE Academic Research Funds R-146-000-292-114. T. Cui acknowledges support from the Australian Research Council.

\bibliographystyle{imsart-nameyear}
\bibliography{ref}

\end{document}